\documentclass[11pt,a4paper]{article}

\usepackage[margin=1in]{geometry}
\usepackage[T1]{fontenc}
\usepackage[utf8]{inputenc}
\usepackage{lmodern}
\usepackage{microtype}

\usepackage{amsmath,amssymb,amsthm,mathtools}
\usepackage{booktabs,longtable,array,multirow}
\usepackage{graphicx}
\usepackage{xcolor}
\usepackage[hyphens]{url}
\usepackage[round,authoryear]{natbib}
\usepackage[unicode,hidelinks]{hyperref}
\usepackage[capitalise,noabbrev]{cleveref}
\usepackage{caption}
\usepackage{enumitem}
\usepackage{amssymb}   

\usepackage{tikz}
\usetikzlibrary{arrows.meta,positioning,fit,backgrounds,calc,shapes.geometric,decorations.pathmorphing}
\usepackage{pgfplots}
\pgfplotsset{compat=1.18}

\theoremstyle{definition}
\newtheorem{definition}{Definition}[section]
\newtheorem{assumption}{Assumption}[section]
\newtheorem{example}{Example}[section]
\newtheorem{remark}{Remark}[section]

\theoremstyle{plain}
\newtheorem{theorem}{Theorem}[section]
\newtheorem{lemma}{Lemma}[section]
\newtheorem{proposition}{Proposition}[section]
\newtheorem{corollary}{Corollary}[section]
\crefname{definition}{Definition}{Definitions}
\Crefname{definition}{Definition}{Definitions}
\crefname{theorem}{Theorem}{Theorems}
\Crefname{theorem}{Theorem}{Theorems}
\crefname{lemma}{Lemma}{Lemmas}
\Crefname{lemma}{Lemma}{Lemmas}
\crefname{proposition}{Proposition}{Propositions}
\Crefname{proposition}{Proposition}{Propositions}
\crefname{corollary}{Corollary}{Corollaries}
\Crefname{corollary}{Corollary}{Corollaries}
\crefname{assumption}{Assumption}{Assumptions}
\Crefname{assumption}{Assumption}{Assumptions}
\crefname{example}{Example}{Examples}
\Crefname{example}{Example}{Examples}
\crefname{remark}{Remark}{Remarks}
\Crefname{remark}{Remark}{Remarks}

\newcommand{\N}{\mathbb{N}}

\newcommand{\Prob}{\mathbb{P}}
\newcommand{\Ind}[1]{\mathbf{1}\!\left[#1\right]}

\newcommand{\Loop}{\ell}                      
\newcommand{\Worker}{W}                       
\newcommand{\Gate}{G}                         
\newcommand{\Budget}{B}                       
\newcommand{\Graph}{\mathcal{G}}              
\newcommand{\Nodes}{V}
\newcommand{\Edges}{E}
\newcommand{\Art}{a}                          
\newcommand{\Verdict}{\nu}                    
\newcommand{\Pass}{\textsc{pass}}
\newcommand{\Reject}{\textsc{reject}}
\newcommand{\Incap}{\textsc{incapacity}}
\newcommand{\Done}{\textsc{done}}
\newcommand{\Halt}{\textsc{halt}}
\newcommand{\maxatt}{\mathrm{max\_attempts}}
\newcommand{\rounds}{R}                       
\newcommand{\alphafa}{\alpha}                 
\newcommand{\betafr}{\beta}                   
\newcommand{\logmean}{\bar{\mu}_n}            

\title{\bfseries Bounded Loops:\\[3pt]
  {\normalfont\large Pre-Run Spend Bounds, Proved Termination,\\
   and Verified Completion for Agent Harnesses}}
\author{%
  Varun Pratap Bhardwaj\\
  {\small Qualixar / Independent Researcher, India}\\
  {\small\texttt{varun.pratap.bhardwaj@gmail.com}}\\[1pt]
  {\small ORCID: 0009-0002-8726-4289}
  \and
  Garima Singh\\
  {\small Independent Researcher, India}\\
  {\small\texttt{garima1213@gmail.com}}
  \and
  Arun Pratap Bhardwaj\\
  {\small Independent Researcher, India}\\
  {\small\texttt{arun.pratap.bhardwaj@gmail.com}}
}
\date{}

\begin{document}
\maketitle

\begin{abstract}
\noindent
In the agent frameworks that dominate current deployments, a step ends when the agent's own
output says it has finished. Durable-execution platforms do better --- they bound retries and
elapsed time, and their completion conditions live in workflow code --- but the checker is
conventionally a function in the same codebase as the work, which is a discipline the deployment
is trusted to keep rather than a property the harness enforces.

We state what an \emph{agent harness}\footnote{We use the term as a category label for this class
of system. Microsoft's Agent Framework ships a component of the same name
(\cref{sec:ms-harness}); no relation is implied.} must guarantee, prove it, and build the
instrument that measures whether a given harness delivers it. A \emph{bounded loop} is a worker,
an \emph{independent gate} the worker cannot write to, and a declared budget; a
\emph{bounded-loop graph} composes them with a repair relation that lets a downstream failure
re-run an already-finished upstream node. Three guarantees follow, each readable before a run is
authorised. \textbf{It finishes}: termination holds under repair, with the worst-case attempt
total in closed form from the manifest, on the condition that the repair budget is global rather
than per node. \textbf{It does not drift}: no node reaches \textsc{done} without a gate verdict
in an append-only hash-chained ledger, and the worker's own claim appears in no guard of the
procedure --- proved from control flow, since repair leaves no topological order to induct along.
A third party holding only the run directory can check this, and the released verifier does.
\textbf{It does not overspend}: the declared ceiling is enforced inside an attempt, not merely
between attempts, with a reserve partitioned out of it so a hard stop leaves a written handoff.

A proof about gates is worth what the gates are worth, so we measure them with a two-tier
held-out mutant corpus whose blindness is enforced as a static property of the generator's syntax
tree, and we characterise two defect classes that let a sound-looking check pass anything:
\emph{vacuity} --- a check satisfied by the absence of the thing it checks, which admits a
worker that deletes the artifact and completes in one attempt --- and
\emph{self-attestation}, a check whose criterion is exacting and whose subject supplied the
value it is applied to. Applied to
a 69-loop catalogue, the instrument found 47 such gates in code that had shipped and passed
review. Against the repaired gates it reports no false accepts over 209 destroying mutants
($\alpha \le 1.8\%$, Wilson 95\%) --- and we show that figure is saturation, not quality, by
freezing the gates and applying a fresh operator family, which recovers a false-accept rate of
23.3\% where the exhausted corpus reported none. A rate belongs to a specific gate; the
apparatus, not our number, is the contribution.

The same measurement turned on this paper found a bound declared in a readable place and enforced
nowhere, six times, once in the hypothesis of one of our own propositions. A later release audit,
by agent auditors who had not written the code, found three self-attested checks in subsystems
sharing none: a rule about a verdict's reason that asked the verdict, a test about a platform's
install policy that asked a stub, and a record of which gate ran that read the answer off the gate. Engine, catalogue,
corpus and analysis scripts are released under Apache-2.0.
\end{abstract}

\section{Introduction}
\label{sec:intro}

An agent is asked to do something. It does something. It reports that it is
finished. In the agent frameworks that dominate current deployments, that report
\emph{is} the outcome: the orchestrator marks the step complete, downstream steps begin,
and every guarantee the system offers rests on a self-assessment by the party least able
to be objective about it. This is not a straw man: a harness shipped in August 2026
documents exactly that about its own loop, and defers the missing component by name
(\cref{sec:dsh}).

The gap is narrower than it is usually stated. Durable-execution platforms and container
schedulers do bound retries and elapsed time, and their completion conditions live in
workflow code rather than in a model's output. What they do not provide is a checker whose
independence from the work is a property the harness \emph{enforces}. The common production
pattern --- a validation step in the same repository as the work --- is a convention
(\cref{rem:separate-function}).

Substantial machinery has been built around that moment without changing it.
Graph-structured orchestration made control flow explicit and inspectable
\citep{langgraph,gao2024agentscope}; durable-execution backends made it survive
process death \citep{temporal}; and a recent scheduler-theoretic treatment
\citep{sgh2026} reports that 42 of 70 surveyed open-source agent projects still
use a variant of the unstructured agent loop, which is the measure of how
recently this became a settled question. This paper takes all of it as given and
builds on top. What none of it settles is the question in the first paragraph:
\emph{what licenses the claim that a step is done?}

Answering it takes two things that do not currently exist. A \textbf{contract}
--- a statement of what a harness guarantees, in a form that can be proved rather
than intended. And an \textbf{instrument} --- a way to measure whether a
particular harness delivers it, since a verification layer is precisely the
component whose failures are invisible. When a verifier wrongly accepts, the
system reports success, the receipt looks correct, and nothing complains. Every
incentive points at not measuring it.

The nearest work to ours has already begun on the first of these. The Graph
Harness proposal \citep{sgh2026} states termination and conditional soundness
for a scheduler over a static DAG as theorems, which is the right form for the
question. It draws two boundaries, explicitly and for stated reasons: it
excludes the construct that lets a downstream failure re-run an already-finished
upstream node, because that construct breaks the premise its termination proof
rests on; and its soundness bound is expressed in a per-node validation accuracy
that it takes as given, since no measured value of that quantity exists for a
hand-written checker. This paper crosses the first boundary and pays a stated
price for the crossing, and builds an instrument that produces the second
quantity rather than assuming it. We are extending a line of work, not
correcting it (\cref{sec:related-sgh}).

\subsection{Contributions}

The three guarantees above are what an operator gets. What follows is how they are
established and measured: two of the eight items define the objects, three prove
the guarantees, and three build and validate the instrument that says what a
proof about gates is worth.

\paragraph{1. The bounded loop, and a graph of them.}
A \emph{bounded loop} is a triple $(\Worker, \Gate, \Budget)$: a worker that
attempts, an \emph{independent} gate that decides, and a declared budget that
bounds the attempts. A \emph{bounded-loop graph} composes them into a DAG with
\emph{repair edges} --- the one construct that lets a failed node send work back
upstream. This is the unit the rest of the paper reasons about, and it is small
enough that every property below is a statement about an object a reader can
inspect (\cref{sec:prelim}).

\paragraph{2. A verdict boundary that is derived rather than guessed.}
Every harness that checks its own work faces one question at the worst possible
moment: when a check fails, is the \emph{artifact} wrong or is the \emph{checker}
broken? The two cases can be byte-for-byte identical --- a file that will not
parse --- and demand opposite responses. We show the distinguishing fact is
\emph{ownership}; that a loop's anti-tamper declaration, written for unrelated
reasons before the run, already encodes it; and that the resulting three-valued
classification is total, disjoint, and decidable without consulting the worker,
its transcript, or the run's history (\cref{thm:trichotomy}). The corollary is
the design rule: \emph{``there was nothing to check'' is an answer about the
work, never a confession by the harness} (\cref{cor:vacuity-rejects}).

\paragraph{3. Termination under repair, and the three conditions that buy it.}
A single loop terminates by induction on its attempt counter. A graph does not
follow: a repair edge re-admits a completed sub-DAG and \emph{restores} the
attempt budgets, resetting exactly the quantity the inner induction relies on
decreasing. Restoring spent budget is a reset arc, and for nets with reset arcs
the soundness question is undecidable, so no general procedure is on offer and
what remains is a sufficient condition. We give three
(\cref{ass:bounded-repair}), of which the load-bearing one is that the repair
budget is \emph{global} rather than per node --- bound it per node, the
arrangement every other budget in the system uses, and two nodes can repair each
other for ever with both counters showing credit. Under those conditions the
measure that works is lexicographic on
$(\rounds - \rho,\ \text{remaining budget})$: the second component may rise,
provided the first strictly falls (\cref{thm:graph-termination}). The worst-case
attempt count is then a closed expression in quantities declared before the run,
so an operator reads it off the manifest rather than off a bill
(\cref{cor:manifest-readable}), and the cost of enabling repair is visible as a
single factor of $(\rounds+1)$ against the per-node total that bounds a graph
without it.

\paragraph{4. Soundness of \textsc{done}.}
Directly from the procedure's control flow: a graph reporting \textsc{done} has a
gate verdict in its append-only ledger for every node (\cref{thm:soundness}). The
proof's substance is what it never touches, so there is no execution path on which
an agent's claim about itself influences the outcome. The result is checkable by a
third party holding only the run directory (\cref{cor:third-party}), which is what
distinguishes a guarantee from an assurance.

\paragraph{5. An instrument for measuring a verifier.}
Testing a gate against defects written by the gate's author measures the author's
internal consistency and nothing else. We give a two-tier held-out corpus.
\textbf{Tier 1} applies mechanical operators whose blindness is a \emph{static
property of the code}: an assertion over the syntax tree of every generator
module fails the build if any of them could consult a gate (\cref{thm:blind}).
\textbf{Tier 2} covers what mechanical operators structurally cannot express ---
requirements that are a relation \emph{between} artifacts --- using semantic
mutants authored by five independent agent models shown only a loop's stated
purpose, each pinned to a digest of exactly what its author saw. The two tiers
are complementary capabilities, not a strong method and a weak one.

\paragraph{6. Ground truth by construction.}
Every label is fixed by the mutation \emph{operation}, before the edit, and never
by observing a gate. The denominator is therefore not a function of gate
behaviour (\cref{thm:no-equivalent}), so the equivalent-mutant problem does not
arise --- and we show that the tempting automation of the standard remedy,
treating ``the system under test did not notice'' as equivalence, provably drives
the measured rate to zero for any loop whose gate is a test suite
(\cref{cor:exclusion-fatal}).

\paragraph{7. How to admit a model judge without weakening the guarantee.}
Every gate in our catalogue is a deterministic predicate, and
\cref{tab:decidable} shows that requirements which read as judgement calls
usually reduce to one. Where no reduction exists a model judge is the remaining
option, and the composition question has a one-line answer that holds with
\emph{no} assumption about the judge: under conjunction with a deterministic
gate, false accepts cannot rise and only false rejects can
(\cref{thm:judge-asymmetry}). Three things follow --- a judge cannot certify
broken work, the cost it does impose lands in the direction we already measure
and publish, and prompt injection against it gains nothing, since persuading the
judge does not move the deterministic gate. The engine implements conjunction
and refuses every other composition, so the theorem's hypothesis is guaranteed
by construction rather than left to configuration.

\paragraph{8. Vacuity, and what the instrument found.}
A check satisfied by the \emph{absence} of the thing it checks certifies a worker
that deletes the artifact (\cref{thm:vacuity}). We take the name from model
checking, where the phenomenon has been studied since \citet{beer2001vacuity},
and supply what that literature cannot: a rate in deployed agent code. The
sharpest instance is a gate written to catch invented legal citations that
passed an invented citation, because it built its search pattern from the same
reference set it validates against --- so a fabricated case in an unknown
reporter was never examined at all (\cref{ex:citations}). Applied to a 57-loop
corpus drawn from the shipped catalogue, the instrument located \textbf{47} such
gates in code that had shipped and passed review (\cref{sec:eval}). The
\emph{instrument} is the contribution: that number describes our gates, and the
apparatus is released so a reader can obtain their own.

\paragraph{9. A second defect class, with a different remedy.}
Vacuity is a check satisfied by the \emph{absence} of its subject. A check can also
be satisfied because the subject \emph{supplied the answer}: the artifact is
present, the criterion is exacting, and the value it is applied to was chosen by the
party under judgement. We define this as self-attestation (\cref{def:self-attested})
and give the consequence (\cref{prop:self-attestation}). It is worth separating from
vacuity because the remedies do not overlap --- an existence obligation fixes nothing
when nothing is missing, and the fix is instead to consult a value the subject cannot
control. Three instances were found in subsystems sharing no code, and \textbf{no
operator in either mutant tier of \cref{sec:measuring} can express one}, because both
tiers mutate worker-owned artifacts while this class lives in the machinery that reads
them (\cref{sec:e8}). That is a stated limit on the instrument this paper builds, not
an addition to its yield.

\paragraph{Scope, stated once.} Two properties are proved --- termination and
soundness of \textsc{done}; the engine as a whole is not verified in any sense a
formal-methods reader would accept, and the rest is tested. Independence here is
\emph{structural} --- worker and gate are distinct objects with disjoint write
authority --- not statistical, and nothing here shows their errors are
uncorrelated, which matters increasingly as both are authored by models
(\cref{sec:future}). Soundness is relative to the gates: it says a verdict exists,
not that the verdict was deserved, and how often it is undeserved is $\alphafa$,
which \cref{sec:eval} reports with the caveat that the same corpus drove the
repairs it then measures. Finally, there is \textbf{no benchmark against another
agent framework}: none has been run under this measurement, and a table with one
populated column would imply a comparison that did not happen.

\subsection{Roadmap}

\Cref{sec:related} places the work. \Cref{sec:prelim} fixes the objects.
\Cref{sec:verdict,sec:termination,sec:soundness} are the formal core, and
\cref{sec:vacuity,sec:self-attestation} are the two defect classes that let a
sound-looking check pass anything. \Cref{sec:measuring} builds the instrument and
\cref{sec:eval} reports what it found, including one class it cannot reach
(\cref{sec:e8}). \Cref{sec:discussion,sec:novelty,sec:future,sec:conclusion}
discuss, position and close. Deferred proofs are in \cref{app:proofs}, the
catalogue in \cref{app:catalogue}, \cref{app:repro} gives the commands that
regenerate every number, and \cref{app:audit} records the audit protocol behind
\cref{sec:e8}.

\section{Related Work}
\label{sec:related}

This work takes the orchestration and durable-execution stack as given. Its
contribution is a property that stack does not specify: who is entitled to declare a
step done, and what the harness owes when that declaration is wrong.

\subsection{Orchestration: graphs and durable execution}

The dominant pattern for multi-agent systems is now an explicit graph. Nodes are
agent invocations, tool calls or human checkpoints; edges are control flow;
execution is a traversal. LangGraph \citep{langgraph} established this as the
mainstream formulation; Microsoft's Agent Framework consolidated two earlier
efforts, of which \textsc{AutoGen} \citep{wu2024autogen} is the more widely
described, into a single graph-oriented product; and AgentScope
\citep{gao2024agentscope} and comparable systems occupy adjacent positions with
different opinions about actor semantics and event flow. In parallel,
durable-execution backends (Temporal \citep{temporal} most prominently) supply the
property that a long-running graph survives process death, with retries,
persistence and human-in-the-loop suspension handled below the framework. The
research line on agents-as-graphs is broader still: \citet{zhuge2024gptswarm} treat
the agent graph itself as an optimisable object, and multi-agent development
frameworks \citep{qian2024chatdev,hong2024metagpt} organise roles into fixed graph
topologies; \citet{guo2024survey} survey the space.

The engine described here compiles to a DAG, and the properties proved in
\cref{sec:termination,sec:soundness} are ones you would want of any such traversal.
Graphs are not the wrong abstraction. A graph makes explicit \emph{where} a step
ends. It does not say who is entitled to declare that it ended well.

\subsection{The nearest formal treatment, and the boundary it draws}
\label{sec:related-sgh}

One recent framework states the guarantees we are after as theorems. The Graph
Harness proposal \citep{sgh2026} recasts the agent loop as a single-ready-unit
scheduler over an explicit static DAG and proves two properties of it: bounded
termination, and a conditional soundness bound on the correctness of a completed
run. It is, by its own scope statement, a theoretical framework and design proposal
--- it reports no implementation and no empirical results, and names experimental
validation as future work. It also supplies a useful measurement of the practice it
is reacting to: of 70 surveyed open-source agent projects, 42 use a variant of the
unstructured agent loop.

Its termination result assumes what it calls \emph{terminal stability}: once a node
reaches a terminal state, no subsequent event changes it, so the state machine has
no outgoing transitions from any terminal state. The proof is then a
transition-counting argument, and the bound is $\sum_{v} \tau_v (b_v + 1)$ over
per-node timeouts $\tau_v$ and retry budgets $b_v$.

That bound is correct on its own premise, and its per-round total is the same
quantity as ours: a retry budget $b_v$ and a total-attempt budget $m_v$ are related
by $m_v = b_v + 1$, so $\sum_v (b_v+1)$ and $\sum_v m_v$ are the same sum written
two ways. The framework is not mistaken about anything it claims.

What is true is narrower, and it is the gap this paper occupies. Terminal stability
is load-bearing in that proof, and a \emph{repair edge} (\cref{def:repair}) --- a
downstream failure re-executing an upstream node that has already terminated --- is
precisely the construct that removes it. The proposal excludes that construct
deliberately and says so three times: in its design commitments, in its rationale
for restricting the ready set, and in its limitations table, which records
parent-chain rollback as violating a plan invariant and therefore unable to express
deep hierarchical recovery. Its own recovery ladder escalates from bounded retry to
local patch to full replan, and replanning issues a new immutable plan version
rather than restoring a budget in place. So the excluded case is not an oversight;
it is a stated boundary, drawn for a reason.

This paper crosses that boundary and pays for the crossing.
\Cref{thm:graph-termination} covers the case where an already-terminal node is
re-executed, and the price is an explicit factor: the total is not the per-node sum
but $(\rounds+1)$ times it. \Cref{sec:termination} gives the three conditions under
which the crossing is safe, and notes why the general version of the question is
undecidable, which is what makes a restricted answer worth stating at all.

The same framework's soundness result bounds the probability that every output is
correct by $\prod_v p_v$, where $1 - p_v$ is the probability that node $v$'s
validation wrongly accepts, assuming those validation errors are independent across
nodes. Two features of that statement motivate what we build. First, the retry
budget granted by the termination theorem does not appear in it: under $b_v+1$
attempts against the same validator the per-node false-accept probability is not
$\alphafa_v$ but compounds across attempts, which is the quantity \cref{sec:eval}
measures rather than assumes. Second, $p_v$ enters as a known constant, and nothing
in the literature reports a measured value of it for a hand-written gate. Supplying
an instrument that produces one, rather than a better bound in terms of a number
nobody has, is the contribution of \cref{sec:measuring}.

\subsection{Harness engineering}

The scaffolding around a model --- context assembly, tool execution, the loop,
verification, failure handling --- has recently been named and studied as a unit, in
a literature that is still largely preprints and practitioner writing rather than
archival publication. The emerging decomposition includes a verification component,
and the prevailing advice is that verification should be \emph{deterministic}:
linters, type checkers, parsers, test suites, rather than a model asked whether the
work looks right. Contract-driven architectures push further, running implementer
and verifier as separate agents with no shared conversation history, so that the
verifier cannot inherit the implementer's assumptions.

Our contract in \cref{sec:verdict} is a refinement within this tradition rather than
an alternative to it, and we say so explicitly in \cref{rem:industry-convention}:
returning a tool's error to the model instead of raising past it is established
guidance; separating transient from permanent faults is standard in cloud SDKs; and
\texttt{terraform plan}'s three exit codes are a decade-old demonstration that a
two-valued channel cannot carry ``succeeded with changes'' and ``failed'' at once.
What we add is that the boundary between ``the worker's fault'' and ``the checker is
broken'' need not be guessed at failure time, because a loop's anti-tamper
declaration already encodes it (\cref{thm:trichotomy}).

The gap we address is that this literature states the independence requirement and
does not measure whether it is met. A verifier is a classifier, and no published
treatment we are aware of reports its false-accept rate against defects it did not
author.

\subsubsection{The nearest shipping neighbour, named precisely}
\label{sec:ms-harness}

Since July 2026 Microsoft's Agent Framework ships a component called the
\emph{Harness}, documented in exactly the terms we use: ``the runtime scaffolding that
turns a language model into an agent''.\footnote{Microsoft Learn, \emph{Agent Harness},
\url{https://learn.microsoft.com/en-us/agent-framework/concepts/harness}, page dated
2026-07-29 and revised 2026-08-12. We use ``agent harness'' throughout as a category
term for the class of systems, which is how that page also defines it; where a
distinction matters we write ``the Agent Framework Harness'' for the product.} Its
capability matrix includes ``optional bounded re-invocation driven by evaluators or
predicates'' and a per-request iteration limit on function invocation, so the nearest
shipping system to this paper is not a research prototype and it does bound a loop.

Three things separate it from what we prove, none of them a criticism of a well-built
product solving a different problem:

\begin{itemize}[leftmargin=1.4em]
  \item \textbf{An evaluator is not an independent gate in our sense.} The looping
        capability is driven by evaluators and predicates composed into the same runtime
        as the agent. \Cref{def:independence} requires the checker to be a distinct
        object with \emph{disjoint write authority} over the artifact, which is what makes
        \cref{thm:soundness} say anything; a predicate the agent's own runtime hosts may
        be perfectly good engineering and still not support that theorem.
  \item \textbf{An iteration limit is not a manifest-readable closed form.} A
        per-request cap bounds one request. \Cref{cor:manifest-readable} gives the
        worst-case total for a whole graph, \emph{including repair rounds}, in terms whose
        every symbol is a literal in one file readable before the run is authorised.
  \item \textbf{Neither an iteration cap nor an evaluator is a spend contract.} We could
        find no documented pre-emptive token or wall-clock ceiling in that stack, and none
        of the three capabilities is presented as one.
\end{itemize}

The documentation additionally marks looping as experimental at the time of writing,
which dates the comparison: this is a fast-moving surface and a reader checking it
later should expect it to have moved.

\subsubsection{A shipping harness that states this paper's premise about itself}
\label{sec:dsh}

The strongest available statement of the premise above is not ours. In August 2026
DeepSeek published an open-source agent harness on a plugin architecture
\citep{dsh2026}. Its loop-until-done primitive drives one immutable objective
through a sequence of fresh workers, bounded by a round cap, and its documentation
records that completion is ``worker self-declaration'', that no independent
evaluator or verifier decides whether the objective is actually complete, and that
such an evaluator is deferred. The limitation is recorded in four separate package
documents; the guidance shown to the model on every request states that completion
and blockers are worker reports rather than independent evaluation; and a test
asserts that wording. One of the four reserves the vacancy explicitly, deferring
evaluator-backed certification ``to a separate policy layer''. So the component
\cref{def:independence} specifies is named as absent by a system that lacks it, and
named in the same terms.

That system does enforce something about completion, and the distinction matters
more than the concession. Its goal tools refuse a completion mutation unless the
call carries either host-attested human input in the current root-agent turn or the
exact admitted round of the current goal revision. That is a check on the
\emph{provenance of the declaration} --- which party is entitled to say done, and
under what authority. \Cref{def:independence} asks the other question, whether the
artifact is done, and answers it with a checker holding disjoint write authority
over that artifact. Both are worth having and neither substitutes for the other: an
authenticated self-report is still a self-report, and this is the sharpest available
illustration of why \cref{thm:soundness} needs the second property rather than the
first.

Two further boundaries are documented there and are the ones this paper crosses. The
round cap is the only aggregate limit, with token, price and elapsed-time budgets
recorded as deferred, so there is no pre-run spend ceiling of the kind
\cref{sec:spend-contract} enforces. And its session log is append-only as an
interface contract rather than a hash chain, which supports replay but not the
third-party tamper-evidence of \cref{thm:chain}. As with the comparison above, this
one is dated: the project calls itself a developer preview and announces
compatibility-breaking changes, and a reader checking later should expect movement.
The concessions quoted here are recorded, tested, and surfaced to the model, which
is a better engineering posture than an unstated assumption and is why they can be
cited as evidence at all.

\subsection{The same defect, measured in another domain}
\label{sec:embodied-termination}

The closest independent statement of this paper's central premise comes from embodied
agents rather than software ones. \citet{chen2026done} separate \emph{world completion}
--- whether the environment actually reached the goal state --- from
\emph{self-termination}, the agent's own decision that it is finished, and report that
systems with comparable world completion differ by up to $19.7$ percentage points in
measured success. The gap is not capability. It is agents that achieved the goal and
failed to recognise or report it, and agents that reported completion without achieving
it.

This counts for more than a result in our own domain would, because their setting shares
no infrastructure with ours: different task class, different action space, different
evaluation harness. That the same separation is load-bearing in both is evidence the
defect belongs to the \emph{self-report}, not to any particular toolchain.

The contributions are complementary rather than competing. They \emph{measure} the
divergence and give an evaluation framework for it; we \emph{remove the dependency} by
routing the completion decision to a check the worker cannot write to, and prove what
that buys (\cref{thm:soundness}). Neither result implies the other: an architecture that
never consults a self-report still needs its checker's false-accept rate measured, which
is \cref{sec:measuring}, and a measurement of self-termination error says nothing about
what a harness should do instead.

\subsection{Self-correction, and why an external gate}

A parallel line of work has agents critique and revise their own output ---
\textsc{ReAct}-style interleaving of reasoning and action \citep{yao2023react},
\textsc{Reflexion}'s verbal reinforcement from self-generated feedback
\citep{shinn2023reflexion}, and \textsc{Self-Refine}'s iterative self-critique
\citep{madaan2023selfrefine}. These improve outputs on many tasks and are
complementary to what we describe: nothing prevents a bounded loop's worker from
self-refining internally.

The relevant caution is that intrinsic self-correction, without external feedback, has
been found not to reliably improve reasoning and can degrade it
\citep{huang2024selfcorrect}. That result is the empirical case for
\cref{def:independence}: the signal has to come from somewhere the worker cannot
reach. Our contribution is not the observation that external signal helps, but a
harness in which the external signal is the \emph{only} thing that can terminate a
step, proved so rather than arranged so.

\subsection{Mutation testing}

Our evaluation method is mutation testing \citep{demillo1978hints}, whose modern
practice is surveyed by \citet{jia2011analysis}: seed faults mechanically, and measure
what the checking apparatus catches. The classical difficulty is the equivalent mutant
--- syntactically changed, semantically identical, impossible to kill --- identified by
\citet{budd1982two} and undecidable in general.

The usual remedy is to detect equivalent mutants, by inspection against the
\emph{specification} and sometimes assisted by static analysis
\citep{offutt1997equivalent}, and exclude them from the denominator.
\Cref{thm:no-equivalent} shows that with labels fixed by the mutation operation rather
than by observed behaviour, no such exclusion step is needed to avoid selection bias.
\Cref{cor:exclusion-fatal} then makes a narrower point that is easy to get wrong here:
if equivalence were operationalised as ``the system under test did not notice'' ---
which is \emph{not} the standard remedy, but is the tempting automation of it --- then
for a loop whose gate is a test suite the discarded set is exactly the set of false
accepts, and the measured rate is zero regardless of gate quality. The subject under
test being a verifier, rather than a program with a separate suite, is what makes that
shortcut degenerate rather than merely imprecise.

What our construction does \emph{not} remove is the obligation to check labels against
the specification, and we discharge it only partially: we report one mutant we had
labelled defective where the gate was right (\cref{sec:eval-corrections}), and we have
not conducted a sampled spec review of the Tier-1 labels. On the broader question of
whether mechanical mutants stand in for real faults, the evidence is mixed but broadly
supportive \citep{andrews2005mutation,just2014mutants};
\citet{papadakis2019mutation} survey the modern position.

Our setting also differs in what is mutated. Classical mutation testing perturbs a
program and asks whether tests notice. Here the mutated object is frequently not a
program at all --- a contract clause, a commit-message history, a dependency manifest,
a clinical note --- and the ``test suite'' is a purpose-built mechanical checker. The
operators are correspondingly domain-shaped, and \cref{def:tiers} adds a second tier
precisely because mechanical operators cannot express a requirement that is a relation
\emph{between} artifacts.

\subsection{Vacuity, which is not our term}
\label{sec:related-vacuity}

The property we call vacuity (\cref{def:vacuous}) has a precise and much older meaning
in formal verification, and we adopt the existing name deliberately rather than coining
one. \citet{beer2001vacuity} identified that a temporal formula can pass on a model
\emph{for the wrong reason} --- because an antecedent is never satisfied, so the
interesting part of the specification is never exercised --- and gave an efficient
detection procedure. \citet{kupferman2003vacuity} formalised it as a formula holding
while some subformula does not affect the outcome, \citet{armoni2003enhanced}
strengthened detection for LTL, and \citet{gurfinkel2004vacuity} extended the notion
beyond single-occurrence subformulas. Vacuity checking has since become a standard
post-processing step in commercial model checkers, on the practical ground that a
passing property nobody has confirmed is \emph{meaningful} is worth very little.

The concept is theirs. What differs here is the setting: in model checking, vacuity is a
property of a specification against a model, detected by a tool built for that purpose.
In an agent harness the ``specification'' is a hand-written gate over an artifact, there
is no model checker, and the analogous check does not exist --- so the class recurs
unnoticed. What we add to it is a measured rate and a statement about an adversary
rather than about a specification; \cref{sec:novelty} says precisely what that is
relative to.

This also suggests a direction of transfer. Vacuity detection for temporal properties is
largely automatic, and nothing equivalent exists for the command- and
test-suite-shaped gates that dominate real harnesses. Borrowing that tooling --- an
automatic ``did this check actually constrain anything'' pass over a gate --- is the most
promising concrete import from formal methods into harness engineering that we can see,
and we do not attempt it here.

\subsection{Workflow verification}
\label{sec:related-workflow}

Termination and completion properties for graph-structured processes are not new
either. Workflow nets \citep{vanderaalst1997verification,vanderaalst1998petri} give a
Petri-net formulation of business processes with the same topology used here, a DAG
with splits and merges, and a notion of \emph{soundness}: every case can complete, no
task is dead. \Cref{thm:graph-termination,thm:soundness} are, in that vocabulary,
soundness results for a workflow net whose transitions are non-deterministic agents and
whose completion condition is adjudicated by a separate observer. The classical setting
assumes transitions are known behaviours; ours assumes nothing about the worker at all,
which is what makes the independent gate necessary rather than merely prudent.

That literature also fixes the shape of what can be claimed. A repair edge
restores budget that has already been spent, which in net terms is a \emph{reset} arc,
and \citet{dufourd1998reset} map where reset arcs push a net over the edge. The
frontier is not where intuition puts it: reachability and boundedness become
undecidable, while termination and coverability remain decidable --- a pattern the
well-structured-transition-system framework of \citet{finkel2001wsts} explains, since
reset nets retain the well-quasi-ordering that makes termination decidable. So we do
\emph{not} claim that termination of a repair-bearing graph is undecidable; for this
class it is not. What does become undecidable is workflow \emph{soundness}, which asks
for proper completion and is therefore a reachability property
\citep{vanderaalst2011soundness}.

Two consequences shape \cref{sec:termination}. First, the property we prove is
deliberately the weaker and decidable one --- the graph stops --- and not soundness in
the workflow-net sense. Second, decidability buys an operator nothing on its own:
knowing that some procedure could in principle answer \emph{whether} a graph halts does
not say \emph{how much work} it will do first. What an operator needs before
authorising a run is a number, in advance, in quantities already written down, which is
why \cref{thm:graph-termination} carries an explicit bound and
\cref{cor:manifest-readable} insists it be computable from the manifest. The
classification of faults by whether they are the component's own or its environment's is
likewise long-established in dependability engineering
\citep{avizienis2004taxonomy}; our \cref{thm:trichotomy} is a specialisation in which
the boundary is read off a declaration the system already holds.

\subsection{Runtime verification, and contracts}
\label{sec:related-rv}

A gate is a \emph{monitor}: it observes an execution and decides whether it satisfies a
property, at run time rather than by static analysis of every possible execution. That
is runtime verification as \citet{leucker2009rv} define it, and the trade they identify
is ours exactly --- a monitor decides about the run it observes and not about all runs,
which is why \cref{thm:soundness} is a statement about a particular run's ledger rather
than about the worker. Their survey also anticipates the application: they close on
runtime verification for \emph{contract enforcement}. What the RV setting assumes and
this one cannot is a property expressed in a formal logic over a well-defined trace
alphabet. Here the property is hand-written code over an artifact of arbitrary shape, no
monitor synthesis is available, and consequently, as \cref{sec:vacuity} shows, the
monitor itself becomes an object that needs measuring.

The word \emph{contract} is likewise borrowed rather than coined.
\citet{meyer1992contract} established the discipline of stating obligations that a
component must meet and having them checked rather than assumed. The difference is which
party the contract binds. In design by contract the supplier is code, its precondition
is checkable by the caller, and violation is a defect in a component whose behaviour is
in principle knowable. Here the supplier is a non-deterministic agent about which we
assume nothing at all, so the contract cannot be a precondition on the worker; it has to
be a postcondition adjudicated by a party the worker cannot reach. That is the whole of
\cref{def:independence}, stated in the older vocabulary.

\subsection{Anytime-valid inference}

Reporting a rate over a log that grows and is inspected repeatedly is a sequential
problem: a fixed-time interval loses its guarantee the moment a reader is permitted to
look whenever they like. We use the betting-based confidence sequences of
\citet{waudbysmith2023}, which are valid simultaneously at all sample sizes.
\Cref{rem:estimand} is explicit that the covered quantity is the mean over attempts
\emph{in the log}, an audit quantity, and not a forecast of a population --- a
distinction we get wrong in an earlier internal write-up and report in \cref{sec:eval}
rather than quietly correcting.

\section{Preliminaries}
\label{sec:prelim}

This section fixes the objects the rest of the paper reasons about. Each definition
names something a running system already has, so that every later theorem is a
statement about an artifact a reader can inspect rather than about an idealisation.
Where a definition is narrower than the informal term it borrows, we say so at the
point of narrowing.

\subsection{Workspaces, artifacts, and ownership}

\begin{definition}[Workspace]
\label{def:workspace}
A \emph{workspace} $\Omega$ is a finite set of files under a single root
directory, together with their contents. We write $\Omega_t$ for the workspace
at step $t$ and treat $\Omega$ as the complete mutable state visible to a
worker: a worker that cannot write outside $\Omega$ cannot affect anything
else.
\end{definition}

\begin{definition}[Artifact]
\label{def:artifact}
An \emph{artifact} $\Art$ is a distinguished subset of the workspace that a
gate reads in order to decide. A loop may have several; \cref{sec:eval} reports
a class of loops whose requirement is a \emph{relation between} two artifacts
and which single-artifact mutation therefore cannot express.
\end{definition}

The next definition does the most work later, and it is the part of the design most
often left implicit in deployed harnesses.

\begin{definition}[Ownership partition]
\label{def:ownership}
Each loop declares a finite set of glob patterns $F$, its \emph{forbid set}.
This induces a partition of the workspace into
\[
  \Omega \;=\; \Omega^{\mathrm{own}} \;\sqcup\; \Omega^{\mathrm{anchor}},
  \qquad
  \Omega^{\mathrm{anchor}} \;=\; \{\, f \in \Omega \;:\; f \text{ matches some pattern in } F \,\}.
\]
Files in $\Omega^{\mathrm{own}}$ are \emph{worker-owned}: producing them is the
worker's job, and anything wrong with one of them is the worker's fault. Files
in $\Omega^{\mathrm{anchor}}$ are \emph{loop-shipped anchors}: the gate's own
code, its ground-truth reference data, and any test-collection configuration.
The worker is never permitted to write them.
\end{definition}

$F$ is written by the loop's author in the manifest, before any run. It is not
derived from observing behaviour. This matters for \cref{thm:trichotomy}: the
question "is this failure the worker's fault?" --- which a harness must answer
every time a check fails --- has already been answered, statically, by the
author who knew. A harness that instead decides this at failure time is
guessing about intent at the worst possible moment.

\subsection{Workers, gates, and the independence requirement}

\begin{definition}[Worker]
\label{def:worker}
A \emph{worker} $\Worker$ is any procedure that, given a specification and a
workspace, may modify the workspace and returns a transcript. Formally
$\Worker : (\mathrm{Spec}, \Omega) \to (\Omega', \mathrm{transcript})$. A worker
is permitted to assert that it is finished. Nothing in this paper ever reads
that assertion.
\end{definition}

\begin{definition}[Gate]
\label{def:gate}
A \emph{gate} $\Gate$ is a procedure that reads a workspace and returns a
verdict, $\Gate : \Omega \to \mathcal{V}$, where $\mathcal{V}$ is the verdict
set of \cref{def:verdict}. A gate does not modify the workspace and does not
read the worker's transcript.
\end{definition}

Two words are used throughout and are not synonyms. The \emph{gate} is the role
--- the object occupying $\Gate$ in the triple, whatever implements it. A
\emph{checker} is one concrete implementation of that role: a standalone
program, invoked as a command, that exits zero or non-zero. Most gates in our
catalogue are checkers; a schema validation or a test suite is a gate and is not
a checker. Statements about gates are claims about the contract, statements
about checkers are claims about a particular kind of code, and where the paper
reports a defect it matters which is meant.

\begin{definition}[Independence]
\label{def:independence}
A pair $(\Worker, \Gate)$ is \emph{independent} when $\Worker$ and $\Gate$ are
distinct objects, $\Gate$'s input excludes $\Worker$'s transcript, and
$\Gate$'s code lies in $\Omega^{\mathrm{anchor}}$ so that $\Worker$ cannot
write it.
\end{definition}

This is \emph{structural} independence, not statistical: different objects with
disjoint write authority, with no claim that their errors are uncorrelated
(\cref{sec:future}).

\begin{remark}[Why a separate validation step in the same codebase is not this]
\label{rem:separate-function}
The common production pattern --- a workflow with a worker step followed by a
validation step, in one repository, deployed together --- satisfies the first clause of
\cref{def:independence} and neither of the others. The validator is reachable by
whatever wrote the work, so ``the checker was not modified'' is a property of the
deployment's discipline and not of the run: nothing in the execution path prevents the
same commit from adjusting the artifact and the assertion that judges it. And the
validator typically receives the worker's output \emph{as its input} --- including, in
agent settings, the transcript or a summary of it --- so a worker that describes its
work persuasively is talking directly to its own judge.

The third clause closes both. $\Gate$'s code lies in the anchored region a worker
cannot write, and $\Gate$'s input is the artifact rather than the account of it. That
is checkable statically, before the run, from the loop's own declaration, which is why
\cref{thm:soundness} can be a theorem instead of a recommendation.
\end{remark}

\subsection{The verdict set}

\begin{definition}[Verdict]
\label{def:verdict}
The verdict set is the three-element set
\[
  \mathcal{V} \;=\; \{\, \Pass,\; \Reject,\; \Incap \,\}
\]
with the following intended readings. $\Pass$: the gate read the artifact and
the stated condition holds. $\Reject$: the gate read the artifact and the
stated condition does not hold. $\Incap$: the gate could not form an opinion at
all.
\end{definition}

\begin{definition}[Incapacity set]
\label{def:incapacity}
A gate's \emph{incapacity set} $I(\Gate)$ is the set of conditions under which
it is permitted to return $\Incap$. We require
\[
  I(\Gate) \;\subseteq\; \{\text{malformed invocation}\} \;\cup\;
  \{\text{a file in } \Omega^{\mathrm{anchor}} \text{ is absent or unreadable}\},
\]
and nothing else. In particular, an empty, vacuous, or malformed
\emph{worker-owned} artifact is not in $I(\Gate)$.
\end{definition}

\Cref{def:incapacity} is the paper's sharpest normative commitment and
\cref{sec:verdict} defends it at length. Informally: \emph{``there was nothing
to check'' is an answer, not an inability to answer.} A gate that files it as
an inability discards a diagnosis it already holds.

\subsection{Bounded loops}

\begin{definition}[Budget]
\label{def:budget}
A \emph{budget} $\Budget$ is a tuple of declared, finite ceilings, of which
this paper uses two: $\maxatt \in \N_{\geq 1}$, the maximum number of attempts,
and a no-progress window $w \in \N_{\geq 1}$. The implementation declares nine
such bounds plus a kill switch; the remaining seven are resource ceilings whose
only role in the theorems below is that they are finite.
\end{definition}

\begin{definition}[Bounded loop]
\label{def:loop}
A \emph{bounded loop} is a triple $\Loop = (\Worker, \Gate, \Budget)$ with
$(\Worker,\Gate)$ independent in the sense of \cref{def:independence}, executed
by the procedure of \cref{alg:loop}: repeatedly run the worker, then the gate,
stopping at the first $\Pass$ or when the budget is exhausted.
\end{definition}

\begin{definition}[Attempt]
\label{def:attempt}
An \emph{attempt} (equivalently a \emph{lap}) is one execution of $\Worker$
followed by one execution of $\Gate$. We index attempts $k = 1, 2, \dots$ and
write $\Verdict_k$ for the verdict of the $k$-th.
\end{definition}

\begin{definition}[Loop outcome]
\label{def:outcome}
A loop terminates in exactly one of five states. $\Done$: some
$\Verdict_k = \Pass$. $\Halt$: the budget was exhausted with no $\Pass$.
$\textsc{pause}$: a declared approval is outstanding. $\textsc{killed}$: the
kill switch fired. $\textsc{error}$: some $\Verdict_k = \Incap$. Only $\Done$
asserts anything about the artifact.
\end{definition}

\begin{remark}[$\Halt$ and $\textsc{error}$ are not the same event]
\label{rem:halt-vs-error}
$\Halt$ says the worker was given its declared number of chances and did not
succeed --- a statement about the work. $\textsc{error}$ says the harness could
not tell --- a statement about the harness. Collapsing them is the failure mode
\cref{sec:vacuity} is about, and \cref{sec:eval} reports what it cost us when
we collapsed them ourselves.
\end{remark}

\subsection{Receipts}

\begin{definition}[Receipt and ledger]
\label{def:receipt}
A \emph{receipt} $r_k$ is a record of the $k$-th attempt containing at minimum
the verdict, the decision taken, the budget consumed, and the hash
$h(r_{k-1})$ of its predecessor. The \emph{ledger} is the append-only sequence
$r_1, r_2, \dots, r_n$. The ledger is the sole durable state: every surface
(command line, monitor, arena, embedding API) is a projection of it and none
holds state of its own.
\end{definition}

Hash-chaining makes a silent edit of history detectable on replay
(\cref{thm:chain}). The property is tamper-\emph{evidence} against partial edits and
truncation, which is what an audit of one's own runs actually needs; it is not a
signature, and \cref{rem:final-row} is exact about the two things it cannot do alone.

\subsection{Bounded-loop graphs}

\begin{definition}[Bounded-loop graph]
\label{def:graph}
A \emph{bounded-loop graph} is a directed acyclic graph
$\Graph = (\Nodes, \Edges)$ in which each node $v \in \Nodes$ carries a bounded
loop $\Loop_v = (\Worker_v, \Gate_v, \Budget_v)$, and each edge $(u,v)$ asserts
that $v$ may not begin until $u$ has terminated. Acyclicity is checked at
compile time.
\end{definition}

\begin{definition}[Graph outcome]
\label{def:graph-outcome}
A run of $\Graph$ terminates in $\Done$ iff every $v \in \Nodes$ terminates in
$\Done$ under $\Loop_v$ in the sense of \cref{def:outcome}. It terminates in
$\Halt$ if execution stops with some node not in $\Done$ and no approval
outstanding; in $\textsc{pause}$ if some node is in $\textsc{pause}$; and in
$\textsc{killed}$ or $\textsc{error}$ if any node is.
\end{definition}

\Cref{def:outcome} defines the five outcomes for a single loop, and it is easy to
proceed as though the graph-level meaning follows. It does not: ``the graph is
done'' is a separate predicate, and \cref{thm:soundness} quantifies over it. A
draft of this paper omitted this definition, which left that theorem's hypothesis
formally empty --- the proof silently supplied the missing content in a sentence
of prose. An external proof audit caught it.

\begin{definition}[Repair edge and repair round]
\label{def:repair}
A \emph{repair relation} $\Edges_r \subseteq \Nodes \times \Nodes$ is a set of
pairs $(v, u)$ with $u$ an ancestor of $v$ in $\Graph$. It is \textbf{disjoint
from $\Edges$ and is not part of the graph}: $\Graph = (\Nodes, \Edges)$ remains
acyclic by \cref{def:graph}, and no element of $\Edges_r$ is ever traversed as
dataflow. A pair $(v,u) \in \Edges_r$ is \emph{taken} only when $v$ terminates in
$\Halt$, and taking it ends the current pass and begins a new one in which $u$
and its descendants are re-admitted. The number of passes beyond the first is the
\emph{repair round} $\rho$, bounded by a declared $\rounds \in \N_{\geq 0}$.
\end{definition}

The separation is not bookkeeping. If $\Edges_r$ were added to $\Edges$ the result
would contain a cycle, $\Graph$ would not be a DAG, and the compile-time acyclicity
check of \cref{def:graph} could not be satisfied by any graph declaring repair.
Modelling repair instead as a \emph{bounded outer loop} over passes of an unchanged
DAG keeps each pass's structure exactly the acyclic one, with only the pass index
changing across a boundary. The implementation makes the same choice --- a repair
boundary is an explicit receipt in the ledger and nothing is revived in place --- so a
verifier replaying a run sees precisely where and why per-pass monotonicity
restarted.

What survives a boundary and what does not is the whole difficulty.
\Cref{sec:termination} needs a two-variable induction because within a pass each
node's attempt count is bounded, while crossing a boundary restores exactly the
quantity the inner induction decreases; \cref{thm:graph-termination} therefore inducts
on the pair $(\rounds - \rho,\ \text{remaining budget})$ under a lexicographic order.
No ordering of $\Nodes$ survives at all, which is why \cref{sec:soundness} does not
argue along one (\cref{rem:no-topological-induction}).

\begin{figure}[t]
\centering
\begin{tikzpicture}[
  >={Stealth[round]}, node distance=9mm,
  box/.style={draw, rounded corners=2pt, minimum height=8mm, inner xsep=4pt, font=\small},
  worker/.style={box, fill=blue!6},
  gate/.style={box, fill=orange!10},
  art/.style={box, fill=black!4},
  dec/.style={draw, diamond, aspect=2.1, inner sep=1pt, font=\small\itshape, fill=green!6},
  lbl/.style={font=\scriptsize\itshape, inner sep=1pt},
]
\node[worker] (w) {$\Worker$ \; worker};
\node[art, right=14mm of w] (a) {$\Omega^{\mathrm{own}}$};
\node[gate, right=14mm of a] (g) {$\Gate$ \; gate};
\node[dec, right=14mm of g] (d) {$\Verdict_k$};
\node[box, right=13mm of d, fill=green!12] (done) {$\Done$};
\node[box, below=11mm of d, fill=red!8] (halt) {$\Halt$};

\draw[->] (w) -- node[lbl, above] {writes} (a);
\draw[->] (a) -- node[lbl, above] {reads} (g);
\draw[->] (g) -- (d);
\draw[->] (d) -- node[lbl, above] {\textsc{pass}} (done);
\draw[->] (d) -- node[lbl, right] {budget out} (halt);

\draw[->, rounded corners] (d.north) -- ++(0,7mm) -- node[lbl, above] {\textsc{reject} $\;\to\;$ attempt $k{+}1$} ++(-52mm,0) -- (w.north);

\node[art, below=11mm of a, fill=orange!6] (anchor) {$\Omega^{\mathrm{anchor}}$ \;\scriptsize(gate code, ground truth)};
\draw[->] (anchor) -- (g);
\draw[->, dashed, red!70!black] (w.south) |- node[lbl, below, pos=0.72, text=red!70!black] {refused} (anchor.west);

\begin{scope}[on background layer]
  \node[draw, dashed, gray, rounded corners, fit=(w)(a)(g)(d)(anchor), inner sep=6pt] {};
\end{scope}
\end{tikzpicture}
\caption{One bounded loop. The worker writes only $\Omega^{\mathrm{own}}$; the
gate reads the artifact and the anchors, never the worker's transcript, and the
worker's write to an anchor is refused rather than merely discouraged
(\cref{def:independence,def:ownership}). The verdict, not the worker, selects
the outgoing edge. \textsc{incapacity} is omitted here and treated separately
in \cref{fig:trichotomy}, because the whole argument of \cref{sec:verdict} is
about how narrow that third edge must be.}
\label{fig:loop}
\end{figure}


\begin{figure}[t]
\centering
\begin{tikzpicture}[
  font=\small,
  box/.style={draw, rounded corners=2pt, align=center, inner sep=4pt, minimum height=7mm},
  ctrl/.style={box, fill=black!4, minimum width=26mm},
  agent/.style={box, fill=black!10, minimum width=26mm},
  gate/.style={box, fill=black!4, minimum width=26mm, very thick},
  store/.style={box, fill=white, minimum width=26mm, draw=black!60},
  decl/.style={box, fill=white, dashed, minimum width=22mm},
  flow/.style={-{Stealth[length=2mm]}, thick},
  weak/.style={-{Stealth[length=2mm]}, densely dotted},
]

\node[decl] (manifest) {\texttt{bounds.yaml}\\\scriptsize $m$, $W$, $r$, $w$, tokens};
\node[decl, right=5mm of manifest] (spec)  {\texttt{loop.yaml}\\\scriptsize spec, gate, \texttt{forbid}};

\node[ctrl, below=9mm of $(manifest.south)!0.5!(spec.south)$] (controller)
  {\textbf{controller}\\\scriptsize admits or refuses each attempt};

\node[agent, below left=11mm and 6mm of controller] (worker)
  {\textbf{worker}\\\scriptsize agent process, one turn};
\node[store, below=8mm of worker] (workspace)
  {workspace\\\scriptsize agent-writable};

\node[gate, below right=11mm and 6mm of controller] (gatebox)
  {\textbf{gate}\\\scriptsize mechanical check, no worker input};
\node[store, below=8mm of gatebox] (ledger)
  {ledger\\\scriptsize append-only, hash-chained};

\node[box, fill=black!4, below=8mm of $(workspace.south)!0.5!(ledger.south)$, minimum width=44mm]
  (receipt) {receipt: status, attempts spent, \textbf{head digest}};

\draw[flow] (manifest) -- (controller);
\draw[flow] (spec) -- (controller);
\draw[flow] (controller) -- node[left, font=\scriptsize, pos=0.45, xshift=-0.8mm] {prompt + budget} (worker);
\draw[flow] (worker) -- node[left, font=\scriptsize] {writes} (workspace);
\draw[flow] (workspace.east) -- node[above, font=\scriptsize, pos=0.5] {reads} (gatebox.west);
\draw[flow] (gatebox) -- node[right, font=\scriptsize] {verdict} (ledger);
\draw[flow] (controller) -- node[right, font=\scriptsize, pos=0.45] {invoke} (gatebox);
\draw[flow] (ledger) -- (receipt);
%
\draw[weak] (worker.east) to[bend left=12] (controller.south);

\begin{scope}[on background layer]
  \node[draw=black!55, dashed, rounded corners=3pt, inner sep=6pt,
        fit=(worker)(workspace),
        label={[font=\scriptsize\itshape, anchor=south west, yshift=0.3mm]north west:untrusted}]
        (untrusted) {};
\end{scope}

\end{tikzpicture}
\caption[End-to-end architecture of a bounded loop]{%
\textbf{F1 --- a bounded loop, end to end, with the trust boundary drawn.}
Everything above the controller is \emph{declared} and readable before the run
(\cref{cor:manifest-readable}). Everything inside the dashed region is written by the
party under judgement. The two solid paths into the ledger are the only ones that can
produce a completion: the gate reads the workspace and records a verdict. The dotted
path is the worker's own account of itself --- it is stored, and it reaches no guard in
the procedure (\cref{lem:node-level}). The ledger and the receipt sit \emph{outside}
the untrusted region: they are written beside the loop package, not inside the
workspace, so an agent cannot edit the record that judges it.}
\label{fig:architecture}
\end{figure}

\begin{figure}[t]
\centering
\begin{tikzpicture}[
  font=\small,
  src/.style={draw, rounded corners=2pt, align=center, inner sep=3pt, minimum width=24mm, minimum height=6mm},
  guard/.style={draw, align=center, inner sep=3pt, minimum width=30mm, minimum height=6mm, very thick},
  yes/.style={-{Stealth[length=2mm]}, thick},
  no/.style={densely dotted, -{Stealth[length=2mm]}},
]
\node[src, fill=black!10] (claim)      {worker's \texttt{agent\_claimed\_done}};
\node[src, fill=black!10, below=3mm of claim] (transcript) {worker's transcript};
\node[src, fill=white,    below=3mm of transcript] (digest) {workspace digest};
\node[src, fill=white,    below=3mm of digest] (verdict) {gate verdict};
\node[src, fill=white,    below=3mm of verdict] (counters) {attempt / elapsed counters};

\node[guard, right=52mm of digest, fill=black!4] (done) {\textsc{done} guard};
\node[guard, above=7mm of done, fill=black!4] (progress) {no-progress guard};
\node[guard, below=7mm of done, fill=black!4] (bound) {bound guards};

\node[src, right=14mm of transcript, fill=white, draw=black!45, minimum width=22mm,
      align=center, inner sep=2pt, yshift=5mm]
  (record) {ledger row\\[-2pt]{\scriptsize\itshape recorded, never read}};

\draw[yes] (verdict.east)  -- (done.west);
\draw[yes] (digest.east)   -- (progress.west);
\draw[yes] (counters.east) -- (bound.west);
\draw[no]  (claim.east)      -- (record.west);
\draw[no]  (transcript.east) -- (record.west);

\end{tikzpicture}
\caption[Information flow into the guards]{%
\textbf{M3 --- which facts can decide a completion.} Shaded boxes are authored by the
party under judgement. Solid arrows are the only inputs any guard in the procedure
reads; the worker's self-report and its transcript reach the ledger and stop there.
This is the drift guarantee (\cref{thm:soundness}) as a picture: an agent cannot route
to \textsc{done} by asserting it, because no branch consults the assertion. The
adversarial test for this claim inverts it --- a runner that always claims completion,
paired with a gate that always refuses, must reach \textsc{halt} and never
\textsc{done}.}
\label{fig:information-flow}
\end{figure}

\section{The Verdict Contract}
\label{sec:verdict}

Every harness that checks its own work must eventually answer a question that
looks like plumbing and is not: when the check fails, \emph{whose failure was
it?} Either the artifact is wrong --- in which case the worker should be told
and given another attempt --- or the checking apparatus is broken, in which case
retrying is pointless and the run should stop and say so.

Deployed harnesses answer this question implicitly, usually with an exit code
and usually by accident. This section makes the answer explicit, proves the
resulting classification is total and disjoint, and argues that the boundary
belongs where \cref{def:ownership} already put it: at ownership, declared by the
loop's author before the run.

\subsection{Two verdicts are not enough}

The tempting design is binary --- a check passes or it does not --- and it is
wrong in a specific way. Consider a loop whose worker must emit a JSON document
and whose gate validates it against a schema, and suppose the worker emits
malformed JSON. Under a binary contract the gate has two options and both
misreport:

\begin{itemize}[leftmargin=1.4em, itemsep=1pt]
  \item Return \emph{fail}, and the run is indistinguishable from a schema
        violation --- recoverable, so the loop retries. Correct behaviour,
        arrived at by conflating two situations.
  \item Return \emph{error}, and the loop stops. But the gate knew exactly what
        was wrong and could have said so, and the thing it knew was actionable.
\end{itemize}

Now change one detail: the file that will not parse is the gate's own
ground-truth reference data, which the worker was never allowed to touch. The
correct behaviour flips. Retrying cannot help, because no action available to
the worker changes that file. The run should stop.

The two situations are byte-for-byte identical at the point of failure --- a
file that will not parse --- and require opposite responses. No amount of
inspection at failure time distinguishes them, because the distinguishing fact
is not about the bytes. It is about \emph{who was supposed to produce them}.

\subsection{The trichotomy}

\begin{theorem}[Verdict trichotomy]
\label{thm:trichotomy}
Fix a gate $\Gate$ that terminates on every input and complies with
\cref{def:incapacity}, and let $\Omega$ be any workspace. Then exactly one of
$\Pass$, $\Reject$, $\Incap$ applies to the execution $\Gate(\Omega)$, and the
classification is decidable from $\Omega$ and the declared forbid set $F$ alone
--- without reference to the worker, its transcript, or the history of the run.
\end{theorem}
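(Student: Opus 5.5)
The plan is to show that the three verdicts are carved out by two predicates on $\Omega$ and $F$, and then read off totality, disjointness and decidability from that description.

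\textbf{Step 1: reduce each verdict to a predicate.}
First, use the ownership partition of \cref{def:ownership} to split $\Omega$ into $\Omega^{\mathrm{own}}$ and $\Omega^{\mathrm{anchor}}$. The split is computable. $\Omega$ is finite by \cref{def:workspace}, and $F$ is a finite set of glob patterns. Glob matching of a path is decidable, so membership of each file in $\Omega^{\mathrm{anchor}}$ is decided by a finite check against $F$.

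Next, define the predicate $\iota(\Omega, F)$: ``the invocation is malformed, or some file in $\Omega^{\mathrm{anchor}}$ is absent or unreadable''. By \cref{def:incapacity}, $\Incap$ is permitted only when $\iota$ holds. I will read the compliance hypothesis in its intended two-sided form: the gate returns $\Incap$ exactly when $\iota$ holds. If the stated inclusion is only one-directional, I will make this explicit as the normative content of compliance.

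\textbf{Step 2: fix the remaining case.}
When $\iota$ fails, every anchor is present and readable. The gate therefore has everything it needs to evaluate its stated condition on the worker-owned artifact. This holds even if that artifact is empty, malformed or missing, since such cases are expressly excluded from $I(\Gate)$.

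By the termination hypothesis, $\Gate(\Omega)$ halts. Because the gate's code is not in a state of incapacity, it returns a value in $\mathcal{V}\setminus\{\Incap\}$. That value is $\Pass$ if the condition holds and $\Reject$ otherwise.

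\textbf{Step 3: totality, disjointness, decidability.}
Totality and disjointness follow from two facts. The verdict is a single value of a function into the three-element set of \cref{def:verdict}. In addition, the case split ($\iota$ or $\neg\iota$, then condition or $\neg$condition) is a partition.

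For decidability, the decision procedure runs as follows:
\begin{enumerate}
  \item Compute the ownership partition from $F$.
  \item Test $\iota$.
  \item If $\iota$ fails, run $\Gate$ on $\Omega$, which halts by hypothesis.
\end{enumerate}
None of these inputs includes the worker, its transcript, or prior receipts. By \cref{def:gate}, a gate reads only the workspace and does not read the transcript. The gate's code itself lies in $\Omega^{\mathrm{anchor}}$ (\cref{def:independence}), so it is part of $\Omega$. No further information is needed.

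\textbf{Main obstacle: the $\Incap$ boundary.}
I expect the hard step to be pinning down this boundary. \Cref{def:incapacity} states only an upper bound on $I(\Gate)$. Disjointness is then automatic, but ``decidable from $\Omega$ and $F$ alone'' requires that whether $\Incap$ is returned depends on nothing else.

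I would first try to argue from the gate being a deterministic function of $\Omega$. If that does not suffice, I will state compliance as the biconditional $\Gate(\Omega)=\Incap \iff \iota(\Omega,F)$. I will then note that ``malformed invocation'' is a syntactic property of the manifest's gate command, which itself lives in $\Omega^{\mathrm{anchor}}$, so it is also determined by $\Omega$ and $F$.
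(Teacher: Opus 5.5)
Your argument goes through, but only after you drop the direction $\iota \Rightarrow \Gate(\Omega)=\Incap$, which you never actually need.

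\begin{itemize}
  \item Step~2 uses only $\Gate(\Omega)=\Incap \Rightarrow \iota$, which is exactly the inclusion \cref{def:incapacity} states.
  \item Totality and disjointness already follow from $\Gate$ being single-valued into a three-element set.
  \item Decidability follows because $\Gate$ terminates and, by \cref{def:gate}, reads only $\Omega$. The procedure is therefore just ``run $\Gate$ on $\Omega$'', with no pre-test.
\end{itemize}

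Committing to the biconditional would prove the theorem only for a proper subclass of compliant gates. It excludes, for instance, a gate that completes its evaluation despite an unreadable anchor it never opens. The pre-test of $\iota$ is also not computable the way you describe. By \cref{def:ownership}, $\Omega^{\mathrm{anchor}}$ contains only files present in $\Omega$, so an \emph{absent} anchor matches no pattern in $F$. Knowing which anchor paths are required means knowing what the gate reads. Similarly, a malformed invocation is a fixed property of $\Gate$. You do not need to locate it in the manifest, which the paper does not list among the anchors.

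With that correction, your route differs from the paper's in where the case split happens.

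\begin{itemize}
  \item The paper splits on the \emph{execution}: the evaluation completes and the condition holds, completes and it fails, or does not complete. It derives disjointness from completion versus non-completion. It invokes \cref{def:incapacity} only to say that a non-completing evaluation was obstructed by a malformed invocation or an anchor file. $F$ then enters at a single point, the ownership of the \emph{obstructing file}. That is what makes the motivating example come out right: the same unparseable file yields $\Reject$ or $\Incap$ depending on whose it is.
  \item Your version gives a tighter totality and disjointness argument, using single-valuedness instead of an informal completion dichotomy. It also makes the finiteness of $\Omega$ and the decidability of glob matching explicit, which the paper leaves tacit. The cost is that you place $F$ in a global precondition on the whole workspace rather than at the point of obstruction. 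That placement is what pushed you toward the stronger biconditional.
\end{itemize}
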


\begin{proof}
\emph{Totality.} Either $\Gate$ completes an evaluation of its stated condition
or it does not. If it completes, the condition either holds ($\Pass$) or fails
($\Reject$). If it does not complete, the obstruction is by
\cref{def:incapacity} either a malformed invocation or an absent or unreadable
file $f \in \Omega^{\mathrm{anchor}}$, giving $\Incap$. No fourth case exists,
because the three antecedents --- completes-and-holds, completes-and-fails,
does-not-complete --- exhaust the possibilities.

\emph{Disjointness.} $\Pass$ and $\Reject$ are disjoint since the condition
cannot both hold and fail. Both are disjoint from $\Incap$ since both require
the evaluation to have completed and $\Incap$ requires that it did not.

\emph{Decidability from $\Omega$ and $F$.} The only input to the case analysis
beyond $\Omega$ is membership of the obstructing file in
$\Omega^{\mathrm{anchor}}$, which by \cref{def:ownership} is determined by
matching the file's path against $F$. $F$ is fixed in the manifest before the
run. Neither the worker nor the transcript nor the attempt index appears in the
analysis.
\end{proof}

\begin{corollary}[Vacuity is a rejection]
\label{cor:vacuity-rejects}
Let $\Omega^{\mathrm{own}}$ be empty, or contain an artifact that is empty or
malformed. Then $\Gate(\Omega) \ne \Incap$.
\end{corollary}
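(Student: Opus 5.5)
The plan is a short argument by elimination from \cref{def:incapacity}, using the case structure already set up in the proof of \cref{thm:trichotomy}. The claim is that $\Gate(\Omega)=\Incap$ is impossible under the hypothesis, so I would suppose for contradiction that $\Gate(\Omega)=\Incap$. By the totality part of \cref{thm:trichotomy}, this happens only when the gate did not complete an evaluation of its condition. By \cref{def:incapacity}, the obstruction then lies in $I(\Gate)$. That means it is either a malformed invocation or an absent or unreadable file $f\in\Omega^{\mathrm{anchor}}$.

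Next I would show that neither clause can be triggered by the hypothesis. The hypothesis concerns only $\Omega^{\mathrm{own}}$: it is empty, or it holds an empty or malformed artifact. By \cref{def:ownership} the partition $\Omega=\Omega^{\mathrm{own}}\sqcup\Omega^{\mathrm{anchor}}$ is disjoint. So an absent, empty, or malformed worker-owned file is not a file of $\Omega^{\mathrm{anchor}}$, and it cannot witness the second clause. Read literally, \cref{def:incapacity} already excludes ``an empty, vacuous, or malformed worker-owned artifact'' from $I(\Gate)$. The first clause, a malformed invocation, is a property of how the controller calls the gate, and the command line is fixed by the manifest. It is therefore not a function of the contents of $\Omega^{\mathrm{own}}$.

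The main obstacle is a hidden quantifier. The corollary says nothing about the anchors or the invocation. Read strictly, if an anchor is also missing, the gate may legitimately return $\Incap$ while $\Omega^{\mathrm{own}}$ happens to be empty. So the statement has to be read as ``the worker-owned condition is not a ground for $\Incap$.'' Equivalently, I would add the implicit hypothesis that the invocation is well-formed and every anchor is present and readable, which is the situation the corollary addresses. I would make this explicit at the start of the proof.

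Under that reading, with no admissible obstruction left, the gate must complete its evaluation. Complying with \cref{def:incapacity} forbids it from aborting on the worker-owned defect and reporting that abort as $\Incap$. By the totality case split, the verdict is then $\Pass$ or $\Reject$, contradicting the supposition. I would close with a remark that the corollary does not say which of the two it is. Guaranteeing $\Reject$ on an absent artifact is the separate vacuity concern of \cref{sec:vacuity}.
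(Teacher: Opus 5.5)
Your argument is correct and is the paper's proof. The incapacity set $I(\Gate)$ contains no condition about $\Omega^{\mathrm{own}}$, so the worker-owned defect cannot license $\Incap$, and totality from \cref{thm:trichotomy} leaves $\Pass$ or $\Reject$. The side condition you make explicit (a well-formed invocation, every anchor present and readable) is a genuine sharpening: the paper's proof assumes it silently when it asserts ``the obstruction is therefore not in the incapacity set'', and without it a compliant gate facing a missing anchor may legitimately return $\Incap$.
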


\begin{proof}
By \cref{def:incapacity}, $I(\Gate)$ contains no condition concerning
$\Omega^{\mathrm{own}}$. The obstruction is therefore not in the incapacity set,
and by the totality argument of \cref{thm:trichotomy} the verdict is $\Pass$ or
$\Reject$ --- an answer about the work either way, which is all the corollary
claims.
\end{proof}

The verdict is not necessarily $\Reject$, and the other case is the interesting one: a
gate whose condition is satisfied by the empty artifact \emph{completes} and returns
$\Pass$. That gate is vacuous in the
sense of \cref{def:vacuous}, and \cref{thm:vacuity} is exactly the statement that
it then certifies an artifact someone deleted. So the corollary establishes the
weaker and correct thing, that vacuity never reaches the harness as an inability to
answer, and \cref{sec:vacuity} handles the case where the answer it does reach
is the wrong one.

The corollary is small: \emph{``there is nothing here to check'' is a finding about
the work, not a confession by the harness.} A gate that reports it as incapacity has
performed the diagnosis and then thrown it away, and the loop, which was about to
hand the worker exactly that message, stops instead.

\begin{figure}[t]
\centering
\begin{tikzpicture}[
  >={Stealth[round]}, node distance=7mm,
  q/.style={draw, rounded corners=2pt, align=center, font=\small, inner sep=4pt},
  v/.style={draw, rounded corners=2pt, font=\small\scshape, minimum width=20mm, inner sep=4pt},
  lbl/.style={font=\scriptsize\itshape, inner sep=1.5pt, fill=white},
]
\node[q] (start) {gate runs on $\Omega$};
\node[q, below=9mm of start] (complete) {did the evaluation\\complete?};
\node[q, below left=11mm and 12mm of complete] (own) {is the obstructing file\\in $\Omega^{\mathrm{own}}$?};
\node[q, below right=11mm and 6mm of complete] (holds) {does the stated\\condition hold?};

\node[v, fill=green!14, below=11mm of holds] (pass) {pass};
\node[v, below=26mm of complete, fill=yellow!22] (reject) {reject};
\node[v, fill=red!12, below=11mm of own] (incap) {incapacity};

\draw[->] (start) -- (complete);
\draw[->] (complete) -- node[lbl, right, pos=0.35] {yes} (holds);
\draw[->] (complete) -- node[lbl, left, pos=0.35] {no} (own);
\draw[->] (holds) -- node[lbl, right] {yes} (pass);
\draw[->] (holds.west) -- node[lbl, above] {no} (reject.east);
\draw[->] (own.east) -- node[lbl, above] {yes} (reject.west);
\draw[->] (own) -- node[lbl, left] {no --- it is an anchor} (incap);

\node[align=left, font=\scriptsize, below=4mm of reject, text width=0.86\linewidth]
  {\textbf{The branch that matters.} Both no-completion paths look identical at the point
   of failure. Only the \emph{declared} ownership of the file separates them, and it was
   declared before the run.};
\end{tikzpicture}
\caption{The verdict trichotomy of \cref{thm:trichotomy}. The single
distinguishing test is ownership, read off the loop's own \texttt{forbid} set
rather than inferred from the failure. Every path into \textsc{incapacity}
passes through a file the worker was never permitted to write, which is what
makes \cref{cor:vacuity-rejects} hold.}
\label{fig:trichotomy}
\end{figure}

\subsection{Why ownership, and not severity}

An alternative boundary suggests itself: classify by how bad the failure looks.
Parse errors and crashes become incapacity; assertion failures become rejection.
This is what a harness does when nobody has decided, and it fails on the example
above --- the same parse error lands on both sides of the line depending on
whose file it is.

Severity is also the wrong \emph{kind} of question. It asks the gate to estimate, at
failure time, whether the worker could plausibly fix this, and the gate has no basis
for that estimate: it sees a byte stream. The loop's author knew precisely which files
the worker is responsible for, well enough to have written them into $F$ so the
anti-tamper guard could enforce it. \Cref{def:incapacity} does no more than read a
decision already made and recorded, for a different purpose, by someone who had the
information.

\begin{remark}[This is the tool-error convention, applied one level up]
\label{rem:industry-convention}
The contract is not novel in spirit. Anthropic's guidance for tool-using models is
that a tool which fails should return its error \emph{to the model} rather than raise
past it, precisely so the model can react; the major cloud SDKs separate transient
from permanent faults so that a caller knows whether a retry is meaningful; and
\texttt{terraform plan} returns three exit codes rather than two, because ``succeeded
with changes'' and ``failed'' are genuinely different answers. What is new is not the
trichotomy but the claim that the boundary is \emph{already declared} by the
anti-tamper configuration a careful loop author writes anyway, so the harness need not
guess and can be proved not to (\cref{thm:trichotomy}).
\end{remark}

\subsection{What it costs to get this wrong}

The contract is stated crisply because we did not have it. Before it was written
down, twenty-four checkers in our own catalogue returned the incapacity code for an
empty, vacuous, or malformed \emph{worker-owned} artifact, and both consequences were
live at once:

\begin{itemize}[leftmargin=1.4em, itemsep=2pt]
  \item \textbf{In the product.} A worker that emitted broken JSON ended the run
        instead of being told to fix it. The information needed for the retry
        existed, in hand, at the moment the run was abandoned.
  \item \textbf{In the measurement.} $84$ of $233$ held-out mutants were recorded
        as \emph{not judged} and left the denominator entirely, while seven loops
        produced nothing but incapacity and the surviving rate looked perfect
        (\cref{sec:eval}).
\end{itemize}

The second is the more instructive. A verdict misfiled as an error does not merely
lose a retry; it silently removes the case from the evidence base. The rate stayed
clean because the hard cases stopped counting.


\begin{figure}[t]
\centering
\begin{tikzpicture}[
  font=\small,
  stagebox/.style={draw, rounded corners=2pt, align=center, inner sep=3pt,
               minimum width=40mm, minimum height=6.5mm, fill=black!3},
  guardbox/.style={draw, align=center, inner sep=3pt, minimum width=40mm,
                minimum height=6.5mm, very thick, fill=white},
  termbox/.style={draw, rounded corners=6pt, align=center, inner sep=3pt,
               minimum width=22mm, minimum height=6mm, fill=black!8},
  fwd/.style={-{Stealth[length=2mm]}, thick},
  side/.style={-{Stealth[length=2mm]}, densely dashed},
  ann/.style={font=\scriptsize, align=left, text width=41mm},
]

\node[guardbox] (kill)    {kill switch tripped?};
\node[guardbox, below=4mm of kill]  (budget)  {budget ceiling reached?};
\node[stagebox,  below=4mm of budget] (ctx)    {measure remaining budget,\\build the attempt's context};
\node[stagebox,  below=4mm of ctx]   (turn)    {run one worker turn,\\clamped to the remainder};
\node[stagebox,  below=4mm of turn]  (digest)  {digest the workspace};
\node[guardbox, below=4mm of digest] (gate)   {invoke the gate};
\node[stagebox,  below=4mm of gate]  (record)  {append a chained ledger row};
\node[guardbox, below=4mm of record] (pass)   {verdict passed?};
\node[guardbox, below=4mm of pass]  (stall)   {$w$ attempts with no change?};

\node[termbox, right=14mm of kill]   (killed) {\textsc{killed}};
\node[termbox, right=14mm of budget] (halt1)  {\textsc{halt}};
\node[termbox, right=14mm of turn]   (halt2)  {\textsc{halt}};
\node[termbox, right=14mm of pass]   (done)   {\textsc{done}};
\node[termbox, right=14mm of stall]  (halt3)  {\textsc{halt}};

\draw[fwd] (kill) -- (budget);
\draw[fwd] (budget) -- (ctx);
\draw[fwd] (ctx) -- (turn);
\draw[fwd] (turn) -- (digest);
\draw[fwd] (digest) -- (gate);
\draw[fwd] (gate) -- (record);
\draw[fwd] (record) -- (pass);
\draw[fwd] (pass) -- (stall);
\draw[side] (kill)   -- (killed);
\draw[side] (budget) -- node[above, font=\scriptsize] {$m$, $W{-}r$, tokens} (halt1);
\draw[side] (turn)   -- node[above, font=\scriptsize] {clamp fired} (halt2);
\draw[side] (pass)   -- (done);
\draw[side] (stall)  -- node[above, font=\scriptsize] {$w$} (halt3);
\draw[fwd] (stall.west) -- ++(-8mm,0) |- node[left, font=\scriptsize, pos=0.25] {next attempt} (kill.west);

\node[ann, left=8mm of kill]   {\emph{before} a turn: writes no attempt, so the row is
                                flagged \texttt{attempted=false}};
\node[ann, left=8mm of budget] {compares elapsed against the \emph{work} ceiling
                                $W - r$, and names the declared $W$};
\node[ann, left=8mm of turn]   {the per-attempt deadline: the tighter of the remaining
                                budget and the deployment's own turn timeout};
\node[ann, left=8mm of gate]   {deliberately \emph{not} clamped
                                (\cref{prop:spend-bound}, clause 4)};

\end{tikzpicture}
\caption[Control flow with every bound's enforcement point]{%
\textbf{F2 --- one attempt, with every declared bound shown where it is actually
enforced.} Heavy boxes are the points that can end a run; the annotations name what each
reads. Two of the five checks happen \emph{before} a turn, which is why the ledger flags
whether an attempt was spent --- without it, a ceiling halt at $m$ writes an
$(m{+}1)$-th row and utilisation computed from the receipt reads $1.1$ for a bound that
held exactly (\cref{def:bound-utilisation}). The clamp on the worker turn is the
one enforcement point that was added last; every other box on this diagram predates it.}
\label{fig:control-flow}
\end{figure}

\begin{figure}[t]
\centering
\begin{tikzpicture}[font=\small, x=1.05mm, y=1mm]

\fill[black!6]  (0,0)   rectangle (74,7);
\fill[black!18] (74,0)  rectangle (100,7);
\fill[black!2]  (100,0) rectangle (118,7);
\draw (0,0) rectangle (100,7);
\draw[dashed] (100,0) rectangle (118,7);
\draw[very thick] (74,-1.2) -- (74,8.2);

\node at (37,3.5)  {\small work: attempts admitted here};
\node at (87,3.5)  {\small reserve $r$};
\node at (109,3.5) {\small gate tail $G$};

\draw[-{Stealth[length=2mm]}] (0,-6) -- (122,-6) node[right, font=\scriptsize] {elapsed};
\foreach \x/\l in {0/{$0$}, 74/{$W-r$}, 100/{$W$}, 118/{$W+G$}} {
  \draw (\x,-5) -- (\x,-7) node[below, font=\scriptsize] {\l};
}

\node[font=\scriptsize, align=left, anchor=north west] at (0,-11)
  {no attempt \emph{begins} after $W-r$, and none \emph{continues} past it};
\node[font=\scriptsize, align=left, anchor=north west] at (74,-11)
  {at most one\\wind-down turn};
\node[font=\scriptsize, align=left, anchor=north west] at (100,-11)
  {one gate\\in flight};

\node[font=\scriptsize, anchor=north west, align=left, text width=128mm] at (0,-21)
  {The reserve \textbf{partitions} the declared ceiling; it never extends it.
   $(W-r) + r = W$, so every bound expressed in $W$ is unchanged and nothing had to be
   re-proved to make a hard stop leave usable work behind
   (\cref{cor:reserve-preserves-bounds}). Granting an extra turn \emph{on top} of $W$
   would have been the obvious design, and would have made $W$ mean ``$W$ plus however
   long a summary takes'' --- the one quantity an operator cannot read from the
   manifest.};

\end{tikzpicture}
\caption[The declared wallclock, partitioned]{%
\textbf{F4 --- the spend contract on a time axis.} The solid box is what the manifest
declares. $G$ is dashed because it is the one term outside the declaration: a gate is
never cut short, since terminating a check part-way yields no verdict and a check that
could not run must never be recorded as having judged. A verdict costing a few seconds
past the ceiling is worth strictly more than no verdict, so we pay it and publish it
rather than buying a tidier inequality with an unusable outcome.}
\label{fig:spend-axis}
\end{figure}

\section{Termination Under Repair, and Why the Budget Must Be Global}
\label{sec:termination}

A harness that retries must say when it stops. This section proves that it does,
and --- more usefully --- gives the stopping point as a closed expression in
quantities the author declared before the run, so that an operator can read the
worst case off the manifest rather than discovering it from a bill. Neither theorem
below says the process halts \emph{usefully}; \cref{rem:termination-scope} is
explicit about that limit.

\subsection{The loop procedure}

\begin{figure}[t]
\centering
\fbox{\begin{minipage}{0.92\linewidth}
\small
\textbf{Procedure} $\textsc{RunLoop}(\Loop = (\Worker,\Gate,\Budget),\ \Omega_0)$
\begin{enumerate}[leftmargin=2.2em, itemsep=1pt, topsep=3pt]
  \item $k \gets 0$; \; $\Omega \gets \Omega_0$
  \item \textbf{while} $k < \Budget.\maxatt$:
  \begin{enumerate}[leftmargin=1.6em, label=\arabic{enumi}.\arabic*, itemsep=1pt]
    \item $k \gets k + 1$
    \item $\Omega \gets \Worker(\mathrm{Spec}, \Omega)$ \hfill \textit{worker may write only $\Omega^{\mathrm{own}}$}
    \item \textbf{if} $\Omega^{\mathrm{anchor}}$ changed \textbf{then} \textbf{return} $\textsc{killed}$
    \item $\Verdict_k \gets \Gate(\Omega)$ \hfill \textit{gate never sees the worker's transcript}
    \item append receipt $r_k$ to the ledger
    \item \textbf{if} $\Verdict_k = \Pass$ \textbf{then} \textbf{return} $\Done$
    \item \textbf{if} $\Verdict_k = \Incap$ \textbf{then} \textbf{return} $\textsc{error}$
    \item \textbf{if} no progress in the last $w$ attempts \textbf{then} \textbf{return} $\Halt$
  \end{enumerate}
  \item \textbf{return} $\Halt$
\end{enumerate}
\end{minipage}}
\caption{The loop procedure. Line 2.6 is the only place $\Done$ is produced,
and its guard reads a gate verdict. There is no branch anywhere in this
procedure that reads the worker's claim about its own completion; that is the
invariant \cref{thm:soundness} formalises.}
\label{alg:loop}
\end{figure}

\subsection{A single loop}

\begin{theorem}[Loop termination]
\label{thm:loop-termination}
For any worker $\Worker$, any gate $\Gate$, and any budget $\Budget$ with
$\maxatt = m \in \N_{\geq 1}$, $\textsc{RunLoop}$ terminates after at most $m$
attempts, provided each individual invocation of $\Worker$ and $\Gate$
terminates.
\end{theorem}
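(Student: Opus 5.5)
The proof is a variant argument on the attempt counter $k$ in $\textsc{RunLoop}$ (\cref{alg:loop}). I would take $m - k$ as the measure, a natural number that strictly decreases on every iteration of the outer \textbf{while}. The first step is to establish the invariant $0 \le k \le m$ at the head of the loop. It holds initially, since line 1 sets $k \gets 0$ and $m \ge 1$. It is preserved because the body is entered only when $k < m$, and line 2.1 increments $k$ by exactly one, so afterwards $k \le m$. Nothing else in the procedure writes $k$: the worker's write authority is confined to $\Omega^{\mathrm{own}}$ (\cref{def:ownership,def:independence}), and the gate does not modify state at all (\cref{def:gate}).

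The second step shows that each iteration is a finite computation. Lines 2.2 and 2.4 invoke $\Worker$ and $\Gate$ once each, and these terminate by hypothesis. Line 2.3 compares the anchored region, which is a finite set of files (\cref{def:workspace}), so the comparison is finite. Line 2.5 is a single append. Lines 2.6 and 2.7 are comparisons of a verdict drawn from the three-element set $\mathcal{V}$. Line 2.8 inspects at most the last $w$ receipts, again a finite check. Every exit inside the body (lines 2.3, 2.6, 2.7, 2.8) is a \textbf{return}, and a return only shortens the run. So each iteration either returns or hands control back to the loop guard with $m - k$ strictly smaller.

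The third step counts attempts. By \cref{def:attempt}, an attempt corresponds exactly to one execution of lines 2.1 through 2.4, so the number of attempts equals the final value of $k$. The invariant gives $k \le m$. Since $m - k$ is a natural number that strictly decreases on each full iteration, there can be at most $m$ iterations. If no return fires earlier, the guard $k < m$ fails after the $m$-th iteration and line 3 returns $\Halt$. In every case the procedure stops after at most $m$ attempts.

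I do not expect a genuine mathematical obstacle. The point that needs care is the claim that no line other than 2.1 touches $k$ or $\maxatt$: in particular, that the worker cannot lower the counter or raise the budget. I would discharge this from the procedure text itself, where $\Budget$ is fixed by the manifest and $k$ is a controller-local variable, and not from any property of the worker. This keeps the theorem valid for an arbitrary, even adversarial, $\Worker$.
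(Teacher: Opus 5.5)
Your proof is correct and is essentially the paper's argument: the paper inducts on the iteration count to show $k_t = t$ and lets the guard cap it at $m$, which is the same content as your variant $m-k$. One small correction: the number of attempts is \emph{at most} the final value of $k$, not equal to it, because an iteration that returns \textsc{killed} at line 2.3 has incremented $k$ and run $\Worker$ but not $\Gate$, so by \cref{def:attempt} it is not an attempt. The paper's proof treats exactly this case, and your bound survives because all you need is attempts $\le k \le m$.
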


\begin{proof}
Let $k_t$ denote the value of the counter $k$ after $t$ iterations of the
\textbf{while} body. We show by induction on $t$ that $k_t = t$ and that the
loop guard is evaluated with $k = t$.

\emph{Base case.} Before the first iteration $k_0 = 0$ by line 1, and
$0 < m$ since $m \geq 1$, so the body is entered and line 2.1 sets $k_1 = 1$.

\emph{Inductive step.} Suppose $k_t = t$ and the body has been entered $t$
times. The guard is evaluated with $k = t$. If $t \geq m$ the procedure returns
$\Halt$ at line 3 and has terminated after $t \le m$ attempts. If $t < m$, line
2.1 sets $k_{t+1} = t + 1$, establishing the hypothesis at $t+1$. No other line
modifies $k$, and no line decreases it.

Since $(k_t)_{t \geq 0}$ is the strictly increasing sequence $k_t = t$ in $\N$
and the guard fails as soon as $k_t \geq m$, the body executes at most $m$
times. Each execution performs \emph{at most} one attempt in the sense of
\cref{def:attempt}: a body execution returning at line 2.3 has run $\Worker$ but
not $\Gate$, and \cref{def:attempt} requires both, so it is not an attempt.
Every other body execution reaches line 2.4 and is a full attempt. Early returns
at lines 2.6, 2.7 and 2.8 occur after the attempt completes and only shorten the
run. Hence at most $m$ body executions occur and at most $m$ attempts, so
termination occurs after at most $m$ attempts.
\end{proof}

\begin{proposition}[The no-progress window can only tighten the bound]
\label{prop:no-progress}
Let $w \in \N_{\geq 1}$ be the no-progress window and suppose progress is
measured by a function $\pi$ of the workspace that the gate's evidence
determines. Write
$j^{\ast} = \max\bigl(\{0\} \cup \{k : \pi \text{ changed at attempt } k\}\bigr)$.
Then $\textsc{RunLoop}$ terminates after at most $\min\{\, m,\; w + j^{\ast} \,\}$
attempts.
\end{proposition}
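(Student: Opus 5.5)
The bound is a minimum of two terms, so the plan is to establish each upper bound separately and take the smaller. The bound $m$ is exactly \cref{thm:loop-termination}. Adding line 2.8 only introduces further early returns, and such returns can only shorten a run. The substance is therefore the second bound, $w + j^{\ast}$. I will read the guard at line 2.8 as: ``at the end of attempt $k$, with $k \ge w$, the value of $\pi$ did not change at any of attempts $k-w+1, \dots, k$.'' Here ``$\pi$ changed at attempt $k$'' means that $\pi$ evaluated on the workspace after attempt $k$ differs from its value after attempt $k-1$. I take attempt $0$ to be $\Omega_0$.

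For the second bound, suppose the run reaches the end of attempt $k = w + j^{\ast}$ without having returned. By the definition of $j^{\ast}$ as the last attempt at which $\pi$ changed, no change occurs at any attempt in $\{j^{\ast}+1, \dots, j^{\ast}+w\}$. This window has exactly $w$ elements, so the guard at line 2.8 fires at attempt $w + j^{\ast}$ and the procedure returns $\Halt$. If the run ended earlier, the bound holds trivially. The case $j^{\ast} = 0$, where $\pi$ never changes, is covered by the same argument with the window $\{1, \dots, w\}$; this is why $0$ is included in the set defining $j^{\ast}$.

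One subtlety needs a sentence. $j^{\ast}$ is defined over the attempts that actually occurred, so it is a property of the run rather than a quantity known in advance. The argument must not be circular: it cannot assume attempts beyond the stopping point in order to define $j^{\ast}$. I handle this by arguing by contradiction on the actual run. If the run had more than $w + j^{\ast}$ attempts, then attempt $w + j^{\ast}$ was completed without a return at line 2.8. But $j^{\ast}$ is by definition the maximum change index in the run, so no change occurs in the window ending at $w + j^{\ast}$, and the guard would have fired there. This is a contradiction.

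I also need the return at line 2.8 to occur only after a full attempt, so that what is counted really is attempts. Returns at 2.3 or 2.6/2.7 occur no later and only shorten the run, exactly as in the proof of \cref{thm:loop-termination}. The main obstacle I expect is pinning down the window semantics of line 2.8, meaning whether the guard can fire before $w$ attempts have elapsed and whether the window is inclusive. I will fix it as stated above and remark that any reasonable off-by-one convention changes the bound by at most one. The hypothesis that $\pi$ is determined by the gate's evidence plays no role in the counting. It only guarantees that $\pi$ is a well-defined function of the workspace, so that ``changed'' is meaningful.
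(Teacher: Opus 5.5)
Your proposal is correct and follows the paper's own argument. The paper's proof is a one-line observation: if $\pi$ last changed at attempt $j^{\ast}$, the guard at line~2.8 fires by attempt $j^{\ast}+w$; line~3 caps the run at $m$ regardless; and $\{0\}$ is included so that a $\pi$ that never changes yields $\min\{m,w\}$. Your extra care does not change the route but makes explicit two things the paper leaves implicit: you fix the window convention of line~2.8, and you argue by contradiction on the realised run so that $j^{\ast}$ is not defined circularly.
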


\begin{proof}
Immediate from line 2.8: if $\pi$ last changed at attempt $j^{\ast}$, the guard
fires at attempt $j^{\ast} + w$ at the latest, and line 3 caps the run at $m$
regardless. The $\{0\}$ matters: without it, a $\pi$ that never changes makes the
bound a maximum over the empty set, which is undefined in $\N$. The intended
value there is $\min\{m, w\}$, and $j^{\ast} = 0$ delivers it.
\end{proof}

``The loop stops when it stops making progress'' is true only relative to a
definition of progress, and here that definition is supplied by the gate's own
evidence rather than by the worker's report of effort.

\subsection{The graph procedure}
\label{sec:rungraph}

\Cref{thm:graph-termination} is a statement about an execution, so the execution
has to be written down. \Cref{alg:graph} is the scheduler, at the level of detail
the proof uses; it is the procedure the released engine runs, and every guard
below corresponds to one in the implementation.

\begin{figure}[t]
\centering
\fbox{\begin{minipage}{0.94\linewidth}
\small
\textbf{Procedure} $\textsc{RunGraph}(\Graph = (\Nodes,\Edges),\ \Edges_r,\ \rounds)$
\begin{enumerate}[leftmargin=2.2em, itemsep=1pt, topsep=3pt]
  \item $\sigma[v] \gets \textsc{pending}$ for every $v \in \Nodes$
  \item \textbf{loop}:
  \begin{enumerate}[leftmargin=1.6em, label=\arabic{enumi}.\arabic*, itemsep=1pt]
    \item \textbf{for each} $v$ with $\sigma[v] = \textsc{pending}$ whose every
          incoming edge is guarded and excluded: $\sigma[v] \gets \textsc{skipped}$
    \item $\mathit{ready} \gets \{\, v : \sigma[v] = \textsc{pending}$ and every
          predecessor of $v$ admits it $\}$
    \item \textbf{if} $\mathit{ready} = \emptyset$:
    \begin{enumerate}[leftmargin=1.8em, label=\arabic{enumi}.\arabic{enumii}.\arabic*, itemsep=1pt]
      \item \textbf{if} $\sigma[v] \in \{\textsc{succeeded}, \textsc{skipped}\}$
            for every $v$: \textbf{return} $\Done$
      \item $\rho \gets$ \textsc{RoundsSpent}(ledger)
            \hfill\textit{read from receipts, never from memory}
      \item \textbf{if} $\rho < \rounds$ and some $v$ with
            $\sigma[v] = \textsc{failed}$ has $(v,u) \in \Edges_r$:
      \begin{enumerate}[leftmargin=1.8em, label=\roman*., itemsep=0pt]
        \item append a repair-boundary receipt for round $\rho + 1$
        \item $\sigma[w] \gets \textsc{pending}$ for every
              $w \in \{u\} \cup \mathrm{desc}(u)$ \hfill\textit{suffix locality}
        \item \textbf{continue}
      \end{enumerate}
      \item \textbf{return} $\Halt$
    \end{enumerate}
    \item \textbf{for each} $v \in \mathit{ready}$:
    \begin{enumerate}[leftmargin=1.8em, label=\arabic{enumi}.\arabic{enumii}.\arabic*, itemsep=0pt]
      \item $\sigma[v] \gets \textsc{ready}$
      \item \textbf{if} an approval for $v$ is outstanding: \textbf{return}
            $\textsc{pause}$
      \item $o \gets \textsc{RunLoop}(\Loop_v, \Omega)$ \hfill\textit{\cref{alg:loop}}
      \item $\sigma[v] \gets \textsc{succeeded}$ if $o = \Done$, else
            $\textsc{failed}$
    \end{enumerate}
  \end{enumerate}
\end{enumerate}
\end{minipage}}
\caption{The graph procedure. Three guards carry \cref{thm:graph-termination}
and each is worth locating. Line 2.3 opens a repair round \emph{only when the
ready set is empty} --- a pass runs to completion before any boundary can open,
so repair never pre-empts ordinary progress. Line 2.3.2 reads the round count
from the ledger rather than from a variable, so a process that crashes and
resumes cannot repair without limit. Line 2.3.3.ii resets exactly the target and
its descendants, which is \cref{ass:bounded-repair}(1) in executable form.}
\label{alg:graph}
\end{figure}

Each guard is pinned by a released test, and --- because a passing test is exactly
what this paper argues proves nothing on its own --- each pin was checked by
removing the guard and confirming the test notices.

Writing this figure is what produced the check. Enumerating the guards revealed
that only one of the three, suffix locality, had a test; the other two were
properties the code had and the suite did not protect. Deleting the ready-set
condition fails its test with
a repair boundary appearing in the ledger ahead of a sibling branch that was
still runnable. Deleting the ledger-derived round count does not fail a test:
\textbf{it does not terminate}. A counter held in memory is reset by the restart
it is supposed to survive, so the run repairs without limit and
\cref{eq:graph-bound} bounds nothing. That the second mutant hangs rather than
failing is the sharper evidence, and it is the reason \cref{alg:graph} line
2.3.2 reads \textsc{RoundsSpent} from receipts rather than from a variable.

\textbf{Where \textsc{pause} comes from.} \Cref{def:outcome} admits $\textsc{pause}$ and \cref{alg:loop} never returns it,
which would leave \cref{thm:graph-termination}'s ``no node terminates in
$\textsc{pause}$'' hypothesis excluding a case the formal model cannot produce.
Line 2.4.2 is where it comes from: an outstanding human approval suspends the
\emph{graph}, not the loop, and the procedure returns without sealing the run.
The hypothesis is therefore about a reachable state, and what it excludes is
real --- a run waiting on a person does not terminate, and no bound over
declared budgets could make it. Termination here is a statement about the
machine's work, with the human left explicitly outside it.

\subsection{A graph of loops}

The graph case is not a corollary. Within a single pass the dependency graph is
acyclic and \cref{thm:loop-termination} applies node by node, but a repair edge
(\cref{def:repair}) re-admits an already-completed sub-DAG and restores the
attempt budgets of every node in it --- resetting exactly the quantity the
single-loop induction relies on decreasing. The argument therefore needs a
second, outer variable that a repair edge \emph{cannot} reset.

As \cref{sec:related-workflow} records, budget restoration is a reset arc, and
no decision procedure is on offer for the general question. What is on offer is
a set of conditions sufficient for termination \emph{with a computable bound},
and stating them plainly is the substance of this subsection --- a restricted
answer to a question whose general form admits none.

\begin{assumption}[Bounded repair]
\label{ass:bounded-repair}
A bounded-loop graph $\Graph = (\Nodes,\Edges)$ satisfies \emph{bounded repair}
when all three hold.

\begin{enumerate}[leftmargin=2.2em, itemsep=2pt]
  \item \textbf{Suffix locality.} Taking a repair edge to target $u$ re-admits
        exactly $u$ and its descendants, never a node outside that sub-DAG.
  \item \textbf{Per-round reset.} Crossing a repair boundary restores
        $\maxatt(v)$ for every re-admitted $v$. Without this a repair
        accomplishes nothing, since the node it re-runs has no attempts left.
  \item \textbf{Global round bound.} A single declared $\rounds \in \N_{\geq 0}$
        bounds the total number of repair rounds for the whole run. It is
        decremented globally and is never reset.
\end{enumerate}
\end{assumption}

\textbf{Conditions 2 and 3 are what the termination proof uses; condition 1 is
not.} Suffix locality constrains \emph{which} nodes a repair re-admits, and
\cref{eq:graph-bound} already charges every node its full budget in every round, so
it holds whatever the re-admitted set is. Condition 1 earns its place by making the
tighter bound of \cref{rem:bound-loose} available, and because without it ``repair''
would have no fixed meaning from one graph to the next.

Condition 3 carries the proof, and it is the design decision the rest of the result
rests on. Bound repairs \emph{per node} instead --- the arrangement a reader is most
likely to expect, since every other budget in the system is per-node --- and two
nodes can repair each other indefinitely, each seeing its own counter as unspent.
That is exactly the reset-arc construction whose soundness is undecidable,
reconstructed by accident inside an engine that believed it had bounded everything. A
single global counter is what makes a restricted answer possible, and the engine
refuses to compile a graph declaring \texttt{on\_failure:\ repair} without one.

\begin{lemma}[One admission per node per round]
\label{lem:one-admission}
In any execution of \cref{alg:graph}, each $v \in \Nodes$ is admitted at line 2.4
at most once per repair round. Consequently at most $|\Nodes|$ admissions occur
per round.
\end{lemma}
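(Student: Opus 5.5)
The plan is to prove the lemma by an invariant on the status map $\sigma$ within a single repair round. A round here is the maximal stretch of execution of \cref{alg:graph} between two consecutive repair-boundary receipts appended at line 2.3.3.i, or between the start and the first such receipt, or between the last such receipt and the return. The key observation is that line 2.4 admits only nodes drawn from $\mathit{ready}$. By line 2.2, membership in $\mathit{ready}$ requires $\sigma[v] = \textsc{pending}$. Line 2.4.1 immediately sets $\sigma[v] \gets \textsc{ready}$, and line 2.4.4 then sets it to \textsc{succeeded} or \textsc{failed}; the only other exit from 2.4 is the \textsc{pause} return at 2.4.2, which ends the run.

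So the first step is to enumerate every assignment to $\sigma$ in the procedure. They occur at lines 1, 2.1, 2.3.3.ii and 2.4.1/2.4.4. Only line 1 and line 2.3.3.ii write \textsc{pending}. Line 1 executes once, before the first round. Line 2.3.3.ii executes only immediately after a boundary receipt at 2.3.3.i, so it opens a new round.

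The second step is to conclude the invariant: within a round, once $\sigma[v]$ leaves \textsc{pending}, it never returns to \textsc{pending} before the round ends. The transitions available inside a round are \textsc{pending}$\to$\textsc{skipped} and \textsc{pending}$\to$\textsc{ready}$\to$\{\textsc{succeeded}, \textsc{failed}\}, and none of them re-enters \textsc{pending}. Hence $v$ satisfies the line 2.2 precondition at most once per round, and is admitted at line 2.4 at most once. One subtlety is that a single execution of the for-each at 2.4 iterates over a set computed once at 2.2, so no node appears twice in the same iteration. Since $\mathit{ready}$ is a set, this holds trivially. The counting consequence then follows by summing over $v \in \Nodes$: at most one admission each gives at most $|\Nodes|$ per round.

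The step I expect to need the most care is pinning down what ``round'' means against the code. This matters because $\rho$ at line 2.3.2 is read from the ledger, not maintained as a variable. The proof should therefore define rounds by boundary receipts, and check that every execution of 2.3.3.ii is preceded, in the same branch, by exactly one boundary receipt. It must also handle the crash-and-resume case. On resume, $\sigma$ is reconstructed rather than carried over, so I would state that the lemma concerns a single uninterrupted execution, or alternatively that the reconstructed $\sigma$ marks as non-\textsc{pending} every node with a completed loop recorded in the current round's receipts. I would state the first option and remark that resumption preserves it because the ledger records admissions.
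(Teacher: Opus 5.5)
Your proposal is correct and follows the paper's proof essentially step for step: admission requires $\sigma[v]=\textsc{pending}$, and the only write of \textsc{pending} after initialisation is line 2.3.3.ii, which runs only after the single boundary receipt of 2.3.3.i and so belongs to the round that receipt opens; summing one admission per (node, round) pair over $\Nodes$ then gives $|\Nodes|$. The one thing to trim is the crash-and-resume aside: the lemma concerns a single execution of the procedure, and the paper's receipts are written per attempt, so nothing in them supports ``the ledger records admissions'' --- state the single-execution restriction and drop the claim that resumption preserves the bound.
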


\begin{proof}
Admission at line 2.4 requires $v \in \mathit{ready}$, which by line 2.2 requires
$\sigma[v] = \textsc{pending}$. Line 2.4.1 immediately sets
$\sigma[v] \gets \textsc{ready}$, and line 2.4.4 leaves it $\textsc{succeeded}$
or $\textsc{failed}$; \cref{thm:loop-termination} guarantees line 2.4.3 returns,
so $v$ does not remain in $\textsc{ready}$. For $v$ to be admitted again,
$\sigma[v]$ must return to $\textsc{pending}$.

Exactly one line assigns $\textsc{pending}$ after initialisation: 2.3.3.ii, in
the repair branch.

That line resets a \emph{set} of nodes, but the round counter advances once, not once
per node: 2.3.3.i appends exactly one boundary receipt, which is what
\textsc{RoundsSpent} counts at 2.3.2, and 2.3.3.ii then performs all of that
boundary's $\textsc{pending}$ assignments. Every such assignment therefore belongs to
the round the boundary opened, not to the round it closed.

So no node returns to $\textsc{pending}$ within a round, and the quantity bounded is
admissions per $(\text{node}, \text{round})$ pair, which is at most one. Summing over
$\Nodes$, finite by \cref{def:graph}, gives at most $|\Nodes|$ admissions per round.
\end{proof}

\begin{theorem}[Graph termination]
\label{thm:graph-termination}
Let $\Graph = (\Nodes,\Edges)$ be a bounded-loop graph satisfying
\cref{ass:bounded-repair} with repair-round bound $\rounds$, where node $v$
declares $\maxatt(v) = m_v$, each invocation of every $\Worker_v$ and $\Gate_v$
terminates, and no node terminates in $\textsc{pause}$. Then execution
terminates, and the total number of attempts across the whole run satisfies
\begin{equation}
\label{eq:graph-bound}
  A(\Graph) \;\leq\; (\rounds + 1)\sum_{v \in \Nodes} m_v .
\end{equation}
\end{theorem}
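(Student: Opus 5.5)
The plan is to prove termination and the bound together with a lexicographic measure on the pair $(\rounds-\rho,\ P)$, where $\rho$ is the round count read by \textsc{RoundsSpent} and $P$ is the number of nodes with $\sigma[v]=\textsc{pending}$. Each iteration of the outer \textbf{loop} of \cref{alg:graph} either returns or strictly decreases this pair in the lexicographic order on $\N\times\N$, which is well founded. First I would fix what every iteration costs. Line 2.1 only moves nodes from $\textsc{pending}$ to $\textsc{skipped}$. Line 2.4 admits each ready node once and calls $\textsc{RunLoop}$, which returns after at most $m_v$ attempts by \cref{thm:loop-termination}; here I use that every worker and gate invocation terminates. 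The pause guard at 2.4.2 is excluded by hypothesis. So every iteration is a finite computation.

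Next comes the case split on the iteration's branch. If $\mathit{ready}\neq\emptyset$, every admitted node leaves $\textsc{pending}$ for good within the round, since the only assignment back to $\textsc{pending}$ is line 2.3.3.ii. In that case $P$ strictly decreases and $\rho$ is unchanged. If $\mathit{ready}=\emptyset$, the iteration either returns ($\Done$ or $\Halt$) or takes the repair branch. The repair branch is guarded by $\rho<\rounds$ and appends exactly one boundary receipt, so the ledger-derived $\rho$ rises by exactly one. Hence $\rounds-\rho$ strictly decreases, and $P$ may jump up to $|\Nodes|$; that jump is the allowed rise of the second component. This is where \cref{ass:bounded-repair}(3) does its work: because the counter is global and read from the append-only ledger, no reset can restore it. Condition (2) only justifies that re-admitted nodes get fresh budgets, and condition (1) is not needed.

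The bound then follows by counting rather than from the measure. There are at most $\rounds+1$ rounds, since $\rho$ starts at $0$, increases by one per boundary, and a boundary opens only while $\rho<\rounds$. By \cref{lem:one-admission}, each node is admitted at most once per round. Each admission of $v$ contributes at most $m_v$ attempts by \cref{thm:loop-termination}. Summing over nodes and rounds gives \cref{eq:graph-bound}.

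The step I expect to be delicate is the claim that $\rho$ as read from receipts equals the number of boundaries opened in this execution, and so starts at $0$ and increases by exactly one per repair branch. This needs the ledger to be append-only, with only line 2.3.3.i writing boundary receipts. I would state that as an explicit invariant and prove it by induction over iterations. I would also note that the per-pass bound does not depend on the shape of the re-admitted set, only on $|\Nodes|$ and the $m_v$.
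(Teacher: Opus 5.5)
Your proof is correct. It uses the same skeleton as the paper: a lexicographic measure whose first component is $\rounds-\rho$, followed by a separate count of rounds times per-round attempts. The difference is the second component. The paper takes the remaining attempt budget $\Sigma=\sum_v(m_v-k_v)$ and counts individual attempts and repair transitions as its steps. Admissions leave that pair unchanged, so the paper needs a side argument via \cref{lem:one-admission}: only finitely many non-decreasing steps can separate two decreasing ones. It also uses \cref{ass:bounded-repair}(2) to cap how far $\Sigma$ rises at a boundary. You instead take the number of pending nodes, measured at the granularity of whole iterations of the outer \textbf{loop}, and push the attempts inside each iteration into \cref{thm:loop-termination}. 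Your measure therefore strictly decreases on every iteration that does not return, so well-foundedness is immediate, and termination needs neither the side argument nor condition (2). The paper's choice buys a measure that is literally the budget the theorem bounds, which fits its reset-arc framing. Your choice makes the bound an explicit counting step, and it cites \cref{lem:one-admission} exactly where that lemma carries weight: admitting each node at most once per round is what keeps a round's contribution at $\sum_v m_v$. The paper's $k_v$ (attempts used ``in the current round'') relies on that same fact only implicitly. The invariant you flag is the right one to state. The argument needs only that \textsc{RoundsSpent} changes solely at line 2.3.3.i, by exactly one each time, and so stays within $\{0,\dots,\rounds\}$.
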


\begin{proof}
Consider the pair $\Phi = (\rounds - \rho,\ \Sigma)$ where $\rho$ is the number
of repair rounds already taken and $\Sigma = \sum_{v \in \Nodes} (m_v - k_v)$ is
the total remaining attempt budget, $k_v$ being the attempts node $v$ has used
in the current round. Order pairs lexicographically, which is a well-order on
$\N \times \N$.

We show that every \emph{attempt} and every \emph{repair-edge transition}
strictly decreases $\Phi$, and that only finitely many other steps can occur
between two consecutive such events. Admitting a node is the one other kind of
step, and it leaves $\Phi$ unchanged --- so the claim ``every step decreases
$\Phi$'' would be false, and we do not make it.

\emph{Within a round.} Fix $\rho$. Each attempt at any node $v$ increments $k_v$
by one, by \cref{thm:loop-termination} applied to $\Loop_v$, so $\Sigma$ strictly
decreases while the first component is unchanged, and $\Phi$ strictly decreases.

For admissions we appeal to \cref{lem:one-admission}: at most $|\Nodes|$ of them
occur per round, each leaves $\Phi$ unchanged, and finitely many of them
therefore cannot separate two decreasing events indefinitely.

\emph{Across a round boundary.} A repair edge is taken only when some node
terminates in $\Halt$, and taking it increments $\rho$ by one. By
\cref{ass:bounded-repair}(2) this restores the budgets of the re-admitted nodes,
so $\Sigma$ rises to at most $\sum_{v} m_v$ --- an increase in the second
component --- but the first component $\rounds - \rho$ strictly decreases. Under
the lexicographic order $\Phi$ therefore strictly decreases across the boundary
as well. This is the step that fails without
\cref{ass:bounded-repair}(3): $\rho$ is a single counter for the whole run, so
the repair edge is guarded by $\rho < \rounds$ and once $\rho = \rounds$ no
further round begins. Were the guard per node, each node's counter could be
unspent while the run as a whole cycled, and the first component would not be
well-founded.

The decreasing events form a strictly decreasing sequence in a well-ordered set
and are therefore finite in number; between any two of them at most $|\Nodes|$
admissions occur; so the whole execution is finite. Execution terminates.

For the bound: the first component takes at most $\rounds + 1$ distinct values
$\rho = 0, 1, \dots, \rounds$. Within each, \cref{thm:loop-termination} caps node
$v$ at $m_v$ attempts, so a round contributes at most $\sum_{v} m_v$. Summing
over the at most $\rounds + 1$ rounds gives \cref{eq:graph-bound}.
\end{proof}

\begin{corollary}[The worst case is readable from the manifest]
\label{cor:manifest-readable}
Every quantity on the right-hand side of \cref{eq:graph-bound} --- $\rounds$,
$|\Nodes|$, and each $m_v$ --- is declared in the graph manifest before
execution. An operator can therefore compute the worst-case attempt count
without running anything.
\end{corollary}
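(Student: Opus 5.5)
The plan is to read the corollary straight off the right-hand side of \cref{eq:graph-bound} and show that each symbol there is a manifest literal. No run-time information can enter. There are three steps.

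First, I take the bound from \cref{thm:graph-termination} as given. Its hypotheses are all static: \cref{ass:bounded-repair}, no \textsc{pause}, and termination of each individual invocation. The bound is $(\rounds+1)\sum_{v\in\Nodes} m_v$. Note that $|\Nodes|$ is only the index set of the sum; it enters solely as the range of summation.

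Second, I locate each quantity in the declarations fixed before execution:
\begin{itemize}
  \item $\Nodes$ is the node set of $\Graph$, and \cref{def:graph} makes it a compile-time object, since acyclicity is checked at compile time.
  \item Each $m_v=\maxatt(v)$ is a component of the declared budget $\Budget_v$ (\cref{def:budget}). Budgets are declared, finite and in $\N_{\ge1}$.
  \item $\rounds$ is the single declared global round bound of \cref{ass:bounded-repair}(3) and \cref{def:repair}. The engine refuses to compile a repair-bearing graph without it, so it is always present when repair is possible. When no repair is declared, I take $\rounds=0$, which recovers the pure per-node sum.
\end{itemize}
The expression is a finite sum of naturals times a natural, so computing it is trivial arithmetic.

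Third, and this is the step I expect to need the most care, I must argue that nothing on the right-hand side depends on the execution. The bound must not change with the worker's behaviour, the verdicts, how many rounds are actually taken, or a crash and resume. The relevant facts come from the proof of \cref{thm:graph-termination}. The round count is capped by the declared $\rounds$, and it is read from the ledger via \textsc{RoundsSpent}, so restarts cannot inflate it. Per-round restoration (\cref{ass:bounded-repair}(2)) resets each node's attempts to the same declared $m_v$, never to a larger value. Finally, the bound charges every node its full budget in every round irrespective of which nodes are actually re-admitted. Together these make the bound a function of the manifest alone, so the operator can compute the worst case without running anything.
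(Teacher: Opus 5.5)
Your proposal is correct and follows the paper's proof: each $m_v$ is a field of the declared budget (\cref{def:budget}), $\rounds$ is a declared bound (\cref{def:repair}), and the node set comes from the manifest and is fixed at compile time, before execution; \cref{thm:graph-termination} states the bound in exactly these quantities. Your third step is not needed: the right-hand side is a formula in declared literals, so it cannot depend on the execution, and the facts you cite there (the ledger-read round count, restoration to the same $m_v$) concern whether \cref{eq:graph-bound} holds, which the corollary takes from the theorem rather than proving again.
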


\begin{proof}
By \cref{def:budget} each $m_v$ is a field of node $v$'s declared budget, by
\cref{def:repair} $\rounds$ is a declared bound, and $|\Nodes|$ is the number of
nodes in the manifest. \Cref{thm:graph-termination} bounds $A(\Graph)$ by an
expression in exactly these three, and the graph is compiled --- and its
acyclicity checked --- before execution, so all three are available prior to the
first attempt.
\end{proof}

As \cref{sec:related-sgh} establishes, the per-round total in \cref{eq:graph-bound}
is the same quantity as the Graph Harness framework's termination bound. The entire
difference is the factor $(\rounds + 1)$, and that factor is the price of the
construct that framework excludes: its proof requires terminal states to be
absorbing, a repair edge re-executes a node that has already terminated, so its
counting argument no longer applies --- not because the bound is wrong, but because
the premise it is stated under has been left.

Two consequences are worth an operator's attention.
\Cref{ass:bounded-repair}(3) makes $\rounds$ a single global quantity, so the cost of
enabling repair is exactly linear in it and is known before the run. And because the
factor multiplies the whole graph, a large $\rounds$ erodes the guarantee it was
meant to strengthen: the engine caps the declared value for this reason, which is a
design consequence of the theorem rather than an unrelated safety limit.

\begin{remark}[The bound is loose, deliberately]
\label{rem:bound-loose}
\Cref{eq:graph-bound} charges every node the full $m_v$ in every round, whereas
a repair edge re-admits only the sub-DAG rooted at its target and a node that
reached $\Done$ in an earlier round is not re-run unless it lies in that
sub-DAG. A tighter bound would replace $\sum_{v \in \Nodes} m_v$ with a sum over
the largest re-admitted sub-DAG. We state the loose form because it is the one
an operator can evaluate from the manifest alone without computing reachability,
and because a ceiling that is too generous fails safe: the run stops earlier
than the number an operator budgeted for, never later.
\end{remark}

\begin{remark}[What termination does not give you]
\label{rem:termination-scope}
It is tempting to read \cref{thm:graph-termination} as a reliability property.
It is not one. It says the process stops; it says nothing about the state it
stops in, and $\Halt$ is a perfectly conformant outcome in which nothing was
accomplished. During the preparation of this paper a loop consumed its entire
attempt budget and reported $\Halt$ exactly as proved, on every run, because the
agent process it invoked was returning empty output and no receipt recorded
that fact (\cref{sec:eval}). The bound held. The bound was not the problem.
That experience is why \cref{sec:soundness} and \cref{sec:vacuity} are separate
sections rather than remarks attached to this one.
\end{remark}


\begin{figure}[t]
\centering
\begin{tikzpicture}[
  font=\small,
  node distance=16mm,
  n/.style={draw, circle, minimum size=9mm, inner sep=1pt, fill=black!3},
  data/.style={-{Stealth[length=2mm]}, thick},
  repair/.style={-{Stealth[length=2mm]}, densely dashed, thick, bend left=38},
]

\node[n] (a) {$A$};
\node[n, right of=a] (b) {$B$};
\node[n, right of=b] (c) {$C$};
\node[n, right of=c] (d) {$D$};

\draw[data] (a) -- (b);
\draw[data] (b) -- (c);
\draw[data] (c) -- (d);
\draw[repair] (d) to node[above, font=\scriptsize, pos=0.5] {repair, on failure} (a);

\node[font=\scriptsize, anchor=north, align=center] at (2.4,-1.1)
  {data edges: acyclic, and stay acyclic};

\begin{scope}[xshift=68mm]
  \node[n] (p) {$P$};
  \node[n, right of=p] (q) {$Q$};
  \draw[data] (p) -- (q);
  \draw[repair] (q) to node[above, font=\scriptsize] {$b_Q = 1$} (p);
  \draw[repair] (p) to[bend left=38] node[below, font=\scriptsize] {$b_P = 1$} (q);
  \node[font=\scriptsize, anchor=north, align=center] at (0.8,-1.4)
    {per-node budgets: both counters\\show credit, forever};
\end{scope}

\end{tikzpicture}
\caption[Repair edges and why the budget must be global]{%
\textbf{F3 --- two edge kinds, one budget.} Solid edges carry data and form a DAG.
The dashed edge is a \emph{repair}: it points backward, and on $D$'s failure it returns
$A$ and everything reachable from $A$ to \textsc{pending} for another pass. The data
graph stays acyclic; the repair relation does not, and that is the whole difficulty
(\cref{def:repair}).

On the right is the construction a \emph{per-node} repair budget admits. $P$ repairs
$Q$, $Q$ repairs $P$, each decrements only its own counter, and each sees credit
remaining after every round --- so the run never terminates while every declared bound
is respected. One \emph{global} counter, decremented across the whole graph and never
reset at a boundary, is what makes \cref{thm:graph-termination} provable, and bounding
repairs per node --- as every other budget in this system is bounded --- is precisely
what breaks it. The engine now refuses a plan whose nodes disagree about the budget
rather than adopting the first value it finds, because there is no correct way to choose
among conflicting global bounds.}
\label{fig:repair-edges}
\end{figure}

\section{The bound is tight, not merely valid}
\label{sec:bound-utilisation}

A termination theorem is a statement that a run stops. It is compatible with stopping after one
attempt and with stopping after ten thousand, and an operator authorising a run needs to know which.
So a proof of \cref{thm:loop-termination} is necessary and insufficient: what the manifest promises is
a \emph{number}, and the question this section answers is whether the promised number is the number
observed.

Two quantities are at stake. \cref{def:bound-utilisation} defines bound utilisation $U = a/m$, the
fraction of the declared attempt cap a run actually consumed. And \cref{prop:no-progress} gives the
soft bound's effective ceiling $\min\{m,\, w + j^{\star}\}$, where $w$ is the no-progress window and
$j^{\star}$ the number of attempts on which the worker changed the artifact. The proposition is an
upper bound; whether it is \emph{attained} is an empirical question, and a bound that habitually
overshoots by a factor of three would be much less useful than these results suggest.

\subsection{Design}

The measurement uses one shipped catalogue member, \texttt{record-completeness}, unmodified except in
the two places named below. Its gate counts records missing a checksum; its worker repairs a bounded
number per attempt. Two independent variables:

\begin{itemize}[leftmargin=1.4em]
  \item \textbf{Workload} --- the number of defective records seeded, from 1 to 16. With a worker
        repairing one per attempt, the predicted attempt count is the workload, truncated at the
        declared ceiling $m = 10$.
  \item \textbf{Productive attempts} $j^{\star}$ --- how many attempts do useful work before the
        worker stalls, from 0 to 4, with $w = 3$.
\end{itemize}

Only the seeded record count and the worker's per-attempt repair quota are edited. The gate, the
bounds, the manifest and the ledger are the shipped ones, so what is measured is the shipped
controller and not a model of it.

\begin{remark}[What this does not measure]
The worker here is deterministic and its repair rate is the independent variable. Nothing in this
section measures agent capability, and the numbers must not be read as a benchmark of anything. The
object under test is the controller's arithmetic.
\end{remark}

\subsection{Utilisation}

\begin{table}[t]
\centering
\small
\begin{tabular}{rrrlr}
\toprule
Workload & Predicted & Observed & Terminal & $U$ \\
\midrule
 1 &  1 &  1 & \textsc{done} & 0.1 \\
 2 &  2 &  2 & \textsc{done} & 0.2 \\
 3 &  3 &  3 & \textsc{done} & 0.3 \\
 4 &  4 &  4 & \textsc{done} & 0.4 \\
 6 &  6 &  6 & \textsc{done} & 0.6 \\
 8 &  8 &  8 & \textsc{done} & 0.8 \\
10 & 10 & 10 & \textsc{done} & 1.0 \\
12 & 10 & 10 & \textsc{halt} & 1.0 \\
16 & 10 & 10 & \textsc{halt} & 1.0 \\
\bottomrule
\end{tabular}
\caption{Bound utilisation against a declared ceiling of $m = 10$. Predicted equals observed in
\textbf{9 of 9} cases. $U$ rises linearly with workload up to the ceiling and then caps at exactly
$1.0$: past the ceiling the run halts rather than borrowing an eleventh attempt.}
\label{tab:utilisation}
\end{table}

\cref{tab:utilisation} reports the result. Predicted attempts equalled observed attempts in all nine
cases, and $U$ caps at exactly $1.0$ for the two workloads that exceed the ceiling.

That the cap is \emph{exactly} $1.0$ rather than $1.1$ is a property of the receipt, not of the
controller, and it was not free. A ceiling halt is detected before the
worker is invoked, so it writes a ledger row on which no attempt was spent --- an eleventh row against
a declared cap of ten. An instrument computing $U$ by counting rows therefore reports $1.1$ for a
bound that held exactly, and reports it only on the runs where the ceiling actually bites, which is
the worst possible place for an instrument to be wrong. The ledger records per row whether the worker
ran (\cref{sec:spend-contract}), and $U$ is computed over attempts rather than rows.

\subsection{Soft-bound yield}

\begin{table}[t]
\centering
\small
\begin{tabular}{rrrr}
\toprule
Productive attempts $j^{\star}$ & $\min\{m,\, w + j^{\star}\}$ & Observed & Attempts saved \\
\midrule
0 & 3 & 3 & 7 \\
1 & 4 & 4 & 6 \\
2 & 5 & 5 & 5 \\
4 & 7 & 7 & 3 \\
\bottomrule
\end{tabular}
\caption{The soft bound with $w = 3$, $m = 10$. Predicted equals observed in \textbf{4 of 4} cases,
and every run halted on no-progress rather than on the hard ceiling. ``Attempts saved'' is the
difference against spending the full declared cap.}
\label{tab:soft-bound}
\end{table}

\cref{tab:soft-bound} shows the effective bound attained exactly in every case. \cref{prop:no-progress}
is therefore \textbf{tight, not merely an upper bound}, at least on this workload family: a stalled
loop stops after precisely $w$ wasted attempts, and the saving against the hard ceiling ranges from
three attempts to seven.

This is the bound with the best practical return in the system. The hard cap is what makes
termination provable; the soft bound is what makes the difference between a stalled run costing three
attempts and costing ten. On a real provider at the median measured turn of $48.3$\,s, seven saved
attempts is about six minutes of wall time and seven turns of billed tokens, per stalled run.

\begin{remark}[Why this experiment exists at all]
The soft bound was once inoperative for every subprocess-based runner, because change
detection compared against a snapshot nothing refreshed (\cref{sec:mirrored-definitions}).
A bound that cannot fire yields nothing, and no experiment that never varies $j^{\star}$
produces a stalled run in which to notice. Measuring a bound's \emph{yield} requires
deliberately constructing the case where it should bind, which is a different experiment
from checking that a converging loop converges.
\end{remark}

\subsection{Positive control}

Both tables are vulnerable to the failure mode this paper is about: an instrument that reports
agreement because it is measuring nothing. If the harness silently ran zero attempts, predicted and
observed would both be zero and the agreement would be perfect and meaningless.

The script therefore refuses to report unless a positive control fires in both directions first ---
one configuration that reaches \textsc{done} by converging, and one that reaches \textsc{halt} by
stalling at the window. Both are asserted before any row of either table is emitted. The control
reported convergence at attempt 2 and a stall at the window of 3 on the run these tables come from.

\section{Soundness: the Worker's Claim Appears in No Guard}
\label{sec:soundness}

Termination says the process stops. This section says something about the state
it stops in: that a graph reporting $\Done$ has, for every node, a gate verdict
in the ledger saying so. The claim is narrow on purpose. It is not that the work
is correct --- that would require the gates to be correct, which
\cref{sec:measuring} treats as an empirical question rather than an assumption.
It is that the \emph{provenance} of the claim is a gate and never a worker.

The distinction matters because the alternative is what most agent systems
actually do. An agent is asked to perform a task, reports that it has done so,
and the report is taken as the outcome. Every subsequent guarantee then rests on
a self-assessment by the party with the least incentive and the least ability to
be objective about it.

\subsection{The invariant}

\begin{definition}[Ledger support]
\label{def:support}
A node $v$ is \emph{ledger-supported} in a run if the ledger contains a receipt
$r$ with $r.\mathrm{node} = v$ and $r.\mathrm{verdict} = \Pass$. A run of a graph
$\Graph$ is \emph{ledger-supported} if every $v \in \Nodes$ is.
\end{definition}

\begin{theorem}[Soundness of \textsc{done}]
\label{thm:soundness}
If a run of a bounded-loop graph $\Graph$ terminates in $\Done$, then it is
ledger-supported. Equivalently: no node reaches $\Done$ on the strength of its
worker's assertion, and the receipt exhibiting the gate's verdict exists for
every node.
\end{theorem}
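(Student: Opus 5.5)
The proof goes by control flow, not by induction along a topological order. The figure caption and \cref{rem:no-topological-induction} already warn that no ordering of $\Nodes$ survives repair. The plan is a node-level lemma about \cref{alg:loop}, lifted to the graph through the guard at line 2.3.1 of \cref{alg:graph}. I would first prove the node-level statement (presumably \cref{lem:node-level}): if $\textsc{RunLoop}(\Loop_v,\Omega)$ returns $\Done$, the ledger contains a receipt $r$ with $r.\mathrm{node}=v$ and $r.\mathrm{verdict}=\Pass$. The argument enumerates the \textbf{return} statements of \cref{alg:loop}. Only line 2.6 returns $\Done$, and its guard is $\Verdict_k=\Pass$, where $\Verdict_k$ was assigned at line 2.4 from $\Gate(\Omega)$. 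Line 2.5 executes strictly between 2.4 and 2.6 on every path, so the receipt $r_k$ carrying $\Verdict_k$ is appended before the return. The second half of the claim, that the worker's assertion plays no role, comes from inspecting the inputs of every guard (lines 2, 2.3, 2.6, 2.7, 2.8). They read $k$, $\Omega^{\mathrm{anchor}}$, $\Verdict_k$ and the progress measure. By \cref{def:gate} and \cref{def:independence}, $\Gate$ does not read the transcript, so no guard has the worker's claim among its inputs.

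Next I would lift this to the graph. By \cref{alg:graph}, the graph returns $\Done$ only at line 2.3.1, whose guard requires $\sigma[v]\in\{\textsc{succeeded},\textsc{skipped}\}$ for every $v$. The only line assigning $\textsc{succeeded}$ is 2.4.4, and it does so exactly when the immediately preceding call at 2.4.3 returned $\Done$. The node-level lemma then yields a $\Pass$ receipt for $v$. I would state an invariant that holds at every step of the execution: \emph{whenever $\sigma[v]=\textsc{succeeded}$, the ledger holds a $\Pass$ receipt for $v$}. It is established at 2.4.4. The repair branch 2.3.3.ii only ever moves nodes to $\textsc{pending}$, and the ledger is append-only (\cref{def:receipt}), so receipts are never removed. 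The invariant is therefore preserved across repair boundaries without any appeal to node order. This is why I would phrase it as a state invariant checked line by line rather than as an induction over $\Nodes$.

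The main obstacle is the $\textsc{skipped}$ case. Line 2.3.1 accepts skipped nodes, yet \cref{def:graph-outcome} demands every node be $\Done$ and \cref{def:support} demands a $\Pass$ receipt for every $v$. A skipped node has no receipt. I would first try to resolve this definitionally, reading \cref{def:graph-outcome} and \cref{def:support} as ranging over the nodes admitted in the final pass, on the grounds that a skipped node's guarded incoming edges excluded it. The statement would then be restated to quantify over non-skipped nodes, with skipped nodes shown to carry no completion claim at all. If that restriction is unacceptable, the fallback is to show that any graph without guarded edges never reaches 2.3.1 with a skipped node, and to state the theorem for that class. A secondary subtlety to check is stale receipts: a $\Pass$ from an earlier round for a node later re-admitted and failed. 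The invariant handles this, because $\sigma[v]$ is reset to $\textsc{pending}$ and must reach $\textsc{succeeded}$ again through 2.4.4 in the final pass. Only then does the guard at 2.3.1 pass, so the theorem's existential claim is still witnessed.
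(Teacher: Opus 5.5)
Your node-level argument is the paper's \cref{lem:node-level}, essentially line for line. The gap is in the lift to the graph. The hypothesis ``terminates in $\Done$'' in \cref{thm:soundness} is \cref{def:graph-outcome}, not ``\cref{alg:graph} returns at line~2.3.1''. By that definition every $v \in \Nodes$ terminated in $\Done$ under $\Loop_v$, and applying the lemma to each node is the whole of the paper's proof. Read that way, \textsc{skipped} is no obstacle: a run containing a node that was skipped and never passed does not satisfy the hypothesis, so the theorem asserts nothing about it. Your first resolution runs the other way. It re-reads \cref{def:graph-outcome} and \cref{def:support} so that they range over admitted nodes and fit the procedure. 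Your fallback instead restricts the class of graphs. Both replace the stated theorem with a different one, so the plan as written does not prove what is claimed. The fallback would also need a non-vacuous reading of line~2.1, since ``every incoming edge is guarded and excluded'' holds vacuously for a node with no incoming edges.

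What your route does buy is a bridge the paper never builds. \Cref{def:graph-outcome} and line~2.3.1 of \cref{alg:graph} genuinely disagree on skipped nodes. The procedure returns $\Done$ when every node is \textsc{succeeded} or \textsc{skipped}, and a skipped node that was never admitted has no $\Pass$ receipt. So the theorem, proved for the definitional $\Done$, does not transfer verbatim to the procedure's return value. Nor does it transfer to the check in \cref{cor:third-party}, which looks for a $\Pass$ receipt for every node in the manifest. Your state invariant is the right tool for that transfer: $\sigma[v] = \textsc{succeeded}$ implies a $\Pass$ receipt for $v$. It is established only at line~2.4.4, and it is preserved because line~2.3.3.ii only resets nodes to \textsc{pending} and the ledger is append-only. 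It yields the procedural version restricted to non-skipped nodes. Present it as a separate remark relating the definition to the procedure, not as the proof of \cref{thm:soundness}. The stale-receipt concern is moot for either version: \cref{def:support} is existential, so an earlier-round $\Pass$ can only help a node count as ledger-supported.
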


\begin{lemma}[No node reports \textsc{done} without a receipt]
\label{lem:node-level}
For any node $v$, if $\textsc{RunLoop}(\Loop_v, \cdot)$ of \cref{alg:loop}
returns $\Done$ then $v$ is ledger-supported, and no guard evaluated on that
execution path reads $\Worker_v$'s transcript.
\end{lemma}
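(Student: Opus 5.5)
The plan is a control-flow argument over \cref{alg:loop}, split into the two claims of the lemma. For the first claim, ledger support, I would enumerate the return statements of $\textsc{RunLoop}$ and observe that $\Done$ is returned at exactly one place, line 2.6. Every other return yields a different outcome: line 2.3 gives $\textsc{killed}$, line 2.7 gives $\textsc{error}$, and lines 2.8 and 3 give $\Halt$. So an execution returning $\Done$ reaches line 2.6 in some iteration $k$ with $\Verdict_k = \Pass$. Within that same iteration, line 2.5 was executed unconditionally before line 2.6, and it appended a receipt $r_k$ for node $v$. By \cref{def:receipt} that receipt contains the verdict $\Verdict_k$, and $\Verdict_k = \Pass$. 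The ledger is append-only, so $r_k$ is still present at return. Hence $v$ is ledger-supported in the sense of \cref{def:support}.

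For the second claim, I would list every guard on the path and name its inputs:
\begin{itemize}
\item the while guard reads only $k$ and $\Budget.\maxatt$;
\item line 2.3 reads $\Omega^{\mathrm{anchor}}$ and the forbid set $F$;
\item lines 2.6 and 2.7 read $\Verdict_k$;
\item line 2.8 reads the progress function $\pi$ of the workspace, as in \cref{prop:no-progress}.
\end{itemize}
The transcript is the second component of $\Worker$'s output in \cref{def:worker}. Line 2.2 binds only the workspace component, so no variable in scope holds the transcript. What remains is to show that $\Verdict_k$ does not depend on the transcript. That follows from \cref{def:gate}, where $\Gate : \Omega \to \mathcal{V}$, and from the independence clause of \cref{def:independence}, under which $\Gate$'s input excludes the transcript. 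I would also note that $\pi$ is a function of the workspace, not of the transcript.

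The main obstacle is the data-flow step, as opposed to the syntactic guard listing. A guard that reads $\Verdict_k$ could still depend on the worker's claim indirectly, if the worker writes its claim into the workspace and a gate reads it. I would answer this by pinning down exactly what the lemma asserts: it says the \emph{transcript}, the channel \cref{def:worker} designates for the worker's self-report, reaches no guard. A self-report written into $\Omega^{\mathrm{own}}$ is then an artifact judged by the gate. Whether a gate wrongly trusts such an artifact is the self-attestation question of \cref{sec:self-attestation}, not a flaw in this control-flow property. I would state that boundary explicitly in the proof rather than leave it implicit.
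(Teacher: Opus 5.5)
Your proposal is correct and follows essentially the paper's own control-flow argument: $\Done$ is returned only at line~2.6, the receipt carrying $\Verdict_k = \Pass$ is appended at line~2.5 before that guard is reached, and the transcript is never bound to anything a guard reads nor passed to $\Gate_v$ (\cref{def:gate}). Your guard-by-guard inventory of inputs is more careful than the paper's proof, as is your explicit scoping of the second claim to the transcript channel, with self-reports written into $\Omega^{\mathrm{own}}$ left to \cref{sec:self-attestation}; the paper's proof instead states the stronger-sounding conclusion about ``a worker's assertion of completion'' without drawing that boundary.
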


\begin{proof}
The procedure returns $\Done$ at exactly one place, line~2.6, whose guard is
$\Verdict_k = \Pass$, where $\Verdict_k$ was assigned at line~2.4 as the return
value of $\Gate_v(\Omega)$. Line~2.5 appends the receipt carrying that verdict
\emph{before} line~2.6 is reached, so if the procedure returns $\Done$ the
receipt is already in the ledger. The four other return points yield $\Halt$,
$\textsc{killed}$, or $\textsc{error}$, none of which is $\Done$.

For the second half, observe what the path does not touch. The worker's
transcript is returned at line~2.2 and is read by no guard; $\Gate_v$ does not
receive it (\cref{def:gate}). There is consequently no execution path on which a
worker's assertion of completion influences the value of $\Verdict_k$, and hence
none on which it influences whether line~2.6 is taken.
\end{proof}

\begin{proof}[Proof of \cref{thm:soundness}]
Suppose the run terminates in $\Done$. By \cref{def:graph-outcome} that requires
every $v \in \Nodes$ to have terminated in $\Done$. Applying
\cref{lem:node-level} to each such $v$ makes each ledger-supported, and by
\cref{def:support} the run is therefore ledger-supported.
\end{proof}

\begin{remark}[Why this proof does not induct over the topological order]
\label{rem:no-topological-induction}
The natural argument is strong induction along a topological order, with the step
``$v_i$ is admitted only after every predecessor has terminated''. That premise is
false in a run containing repair: a repair boundary re-admits a node that had already
terminated, so no single ordering of $\Nodes$ describes the whole run
(\cref{def:repair}). It is also unnecessary --- \cref{lem:node-level} is not a
statement about ordering, and \cref{def:graph-outcome} already quantifies over every
node --- so the induction would import a false hypothesis while doing no work.
\end{remark}

\begin{corollary}[The claim is checkable after the fact, by a third party]
\label{cor:third-party}
Given only a run directory, a reader can verify ledger support without rerunning
anything and without trusting the process that produced it: replay the ledger,
check the chain (\cref{thm:chain}), and confirm a $\Pass$ receipt exists for each
node in the manifest.
\end{corollary}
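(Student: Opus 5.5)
The plan is to prove \cref{cor:third-party} by exhibiting the verification procedure and showing that each of its steps uses only the run directory. The proof then reduces to three facts: that the directory contains what the procedure reads, that the chain check licenses trusting the rows, and that the final predicate is exactly ledger support in the sense of \cref{def:support}.

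First I fix what a run directory contains: the manifest (hence $\Nodes$, read off the declared graph exactly as in \cref{cor:manifest-readable}) and the ledger $r_1,\dots,r_n$ with each row carrying $h(r_{k-1})$ (\cref{def:receipt}). The procedure has three steps.
\begin{enumerate}[leftmargin=2.2em, itemsep=1pt]
  \item Recompute $h(r_{k-1})$ for every $k$ and compare it with the stored predecessor hash.
  \item By \cref{thm:chain}, conclude that agreement rules out any undetected partial edit or truncation.
  \item For each $v \in \Nodes$, scan for a row with $r.\mathrm{node}=v$ and $r.\mathrm{verdict}=\Pass$.
\end{enumerate}
Each step is a finite computation over files already in the directory. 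None invokes a worker, a gate, or the producing process, which settles the claim that nothing is rerun.

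Second, I argue correctness in both directions. If the run genuinely terminated in $\Done$, \cref{thm:soundness} gives ledger support, so the check accepts; the procedure is complete. If the check accepts, the rows it found are the gate verdicts actually appended at line~2.5 of \cref{alg:loop}, and the inference from ``row present'' to ``gate returned $\Pass$'' is licensed by the chain rather than by trust in the writer. This transfers to the reader the same invariant that \cref{lem:node-level} established from control flow. The graph-level predicate needs no topological reasoning, by \cref{rem:no-topological-induction}: \cref{def:graph-outcome} quantifies over every node, so checking each $v$ independently suffices even when repair boundaries interleave passes.

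The main obstacle is the phrase ``without trusting the process that produced it''. A hash chain is tamper-evidence, not a signature, and \cref{rem:final-row} is cited as naming two things it cannot do alone. Most plausibly these are wholesale regeneration of a consistent chain and removal of the final row without an anchor. I would therefore state the corollary's guarantee relative to a head digest held outside the workspace, namely the one in the receipt of \cref{fig:architecture}. The chain check would then be made to terminate at that externally recorded head, with the residual trust assumption stated explicitly rather than hidden.
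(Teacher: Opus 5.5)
The part of your proposal that actually proves the corollary is the paper's proof. Ledger support (\cref{def:support}) is an existential statement over a finite sequence of receipts and the node set, both of which are in the run directory. The chain check of \cref{thm:chain} likewise reads only the directory. Neither step invokes the worker, the gate or the engine, so the verifier needs none of the producer's credentials, environment or trust assumptions. That is how the paper reads ``without trusting the process that produced it''. On that reading your first paragraph and three-step procedure already finish the job, and the completeness direction via \cref{thm:soundness} is correct but not needed.

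The strengthening you then attempt does not go through. The chain cannot license ``row present, therefore the gate returned \Pass{}'' \emph{in place of} trust in the writer. Hash-chaining shows that earlier rows were not altered after their successors were written; it says nothing about where their contents came from. A producer that appended a fabricated \Pass{} row therefore yields a chain that verifies perfectly. No head witness repairs this, since the controller itself reports the head of whatever it wrote.

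Your supporting steps are also off in three places.
\begin{itemize}
  \item Step~2 overstates \cref{thm:chain}. The theorem detects edits only to $r_j$ with $j<n$, and only if the adversary cannot recompute the suffix. It detects truncation only by comparison against the manifest, because a truncated prefix has perfect hash agreement.
  \item The final-row limit in \cref{rem:final-row} is \emph{modification} of $r_n$, which no successor hash covers, not its removal; removal is truncation, which the theorem does handle. For an existential predicate the difference is decisive: deleting rows can only destroy \Pass{} receipts, while flipping the verdict in $r_n$ can manufacture one.
  \item A head digest stored beside the ledger is no witness against whoever can rewrite the ledger. \Cref{rem:final-row} places the witness somewhere the adversary does not control, which is a deployment property outside ``only a run directory''. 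Relativising to it therefore changes the corollary rather than proving it.
\end{itemize}
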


\begin{proof}
Ledger support (\cref{def:support}) is an existential statement over the receipts
of a finite sequence, so deciding it requires only reading that sequence and the
node set, both of which the run directory contains. Chain verification is the
procedure of \cref{thm:chain} and likewise reads only the directory. Neither step
invokes the worker, the gate, or the engine, so neither requires the verifier to
possess the credentials, the environment, or the trust assumptions of the party
that produced the run.
\end{proof}

\Cref{cor:third-party} is the practical content of the theorem: a guarantee only the
asserting system can confirm is not much of a guarantee, and this one can be confirmed
by anyone holding the directory.

\subsubsection{The verifier, and the three things it reports separately}
\label{sec:verify-command}

\texttt{bl verify} takes a run directory and reports three findings, deliberately not
combined into one verdict because they fail for different reasons and an operator's next
action differs. \textbf{Chain}: is the ledger internally consistent? Five outcomes, not
two --- verified, broken (a covered row was edited), torn tail (a partial final write, so
an interrupted run rather than an edited one), unchained (no chain present, so integrity
cannot be checked \emph{either way}), and mixed (an unchained prefix with a verified
suffix). \textbf{Anchor}: does the head match a digest recorded outside the ledger, for
the reason \cref{rem:final-row} gives? \textbf{Completeness}: does the ledger account for
every attempt the receipt claims? That last supplies a hypothesis \cref{thm:chain} needs
and hashing cannot, since a truncated ledger is a well-formed prefix and only an
independent record of the expected length turns it back into a detectable edit.

The command exits non-zero when a check \emph{cannot} be satisfied, not only when one
fails. Collapsing ``unchained'' into either pass or fail would report an absence of
evidence as evidence, and a verifier that exits zero on ``could not tell'' converts
missing evidence into a passing build step.

The chain covers each stored line's \emph{bytes} rather than a re-serialisation of the
parsed row, so a verifier needs no agreement with the writer about canonical JSON: key
order and escaping are fixed by what is on disk, and the whole procedure is ten lines in
any language with SHA-256 and a JSON parser. The digest is reported on stdout and in the
machine-readable outcome, so the witness of \cref{rem:final-row} does not depend on our
tooling.

Two deployment decisions follow from the receipt being a local file, and in both the chain
fixes which answers are available.

\emph{What the receipt contains.} Each row records the
gate's command, its exit code and a bounded tail of its output, which is what makes the
verdict re-derivable by someone who was not there --- and those fields carry absolute
filesystem paths, which on most systems embed an account name, while the output tail of a
test-suite gate can contain fragments of whatever the suite ran against. Since the fields
are inside the chain, redaction must precede serialisation and hashing or it cannot happen
at all; a declared policy therefore rewrites paths, or replaces captured output with its
digest, before the row is written, and a ledger already written cannot be redacted
afterwards.

\emph{How long it is kept.} Deleting a row leaves a file that fails verification and cannot
be distinguished from a truncation attack, so retention operates on whole runs and there is
no way to prune within one.

\subsection{What the invariant does not say}

Two limits are worth stating in the same breath as the theorem. Soundness is relative to
the gates: \cref{thm:soundness} says a $\Pass$ receipt exists, not that the $\Pass$ was
deserved, and if $\Gate_v$ accepts a defective artifact the run is ledger-supported and
wrong. The rate at which that happens is $\alphafa$, an empirical quantity, and measuring
it is the subject of \cref{sec:measuring,sec:eval}. And independence here is structural,
not statistical: \cref{def:independence} gives distinct objects with disjoint write
authority and no guarantee that their errors are uncorrelated, so if the same model wrote
both the artifact and, earlier, the checker that judges it, every theorem holds and the
intuition a reader imports from independent review does not (\cref{sec:future}).

\subsection{Tamper evidence}

Soundness is a statement about a ledger, and is worth only as much as the ledger's
resistance to being edited after the fact.

\begin{theorem}[Chain integrity]
\label{thm:chain}
Let $r_1, \dots, r_n$ be a ledger in which each $r_k$ carries $h(r_{k-1})$ for a
second-preimage-resistant $h$ (any collision-resistant hash is one). If any $r_j$ with $j < n$ is modified, replay detects it,
assuming the adversary cannot recompute the suffix. If in addition the run
manifest --- which fixes the expected node set --- is unmodifiable by the
adversary, replay also detects truncation of the sequence.
\end{theorem}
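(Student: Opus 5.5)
The plan is to fix precisely what ``replay'' does, and then argue by the first point of disagreement. Replay reads the stored rows $r_1,\dots,r_n$ in order. For each $k\ge 2$ it recomputes $h(r_{k-1})$ from the stored bytes of row $k-1$, as \cref{sec:verify-command} specifies, and compares the result with the predecessor field stored in $r_k$. Replay also reports the head digest $h(r_n)$. The integrity claim then rests on one fact: a modified $r_j$ with $j<n$ still has a successor $r_{j+1}$. That successor carries the value $h(r_j)$ computed from the \emph{original} bytes, so replay has a fixed reference value that any edit must hit.

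For the modification clause, let $r_j'\ne r_j$ be the modified row, with $j<n$. Since the adversary cannot recompute the suffix, $r_{j+1}$ is either left untouched or, in any case, does not carry $h(r_j')$ unless $h(r_j')=h(r_j)$. If $h(r_j')\ne h(r_j)$, replay's check at position $j+1$ fails and the edit is detected. The remaining case is $h(r_j')=h(r_j)$ with $r_j'\ne r_j$. This is exactly a second preimage for the fixed input $r_j$, which second-preimage resistance rules out, except with negligible probability. Edits to several rows reduce to this case by taking the largest modified index below $n$: that row's successor is unmodified, so its stored predecessor field is the original hash.

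For the truncation clause, take a truncated ledger $r_1,\dots,r_{n'}$ with $n'<n$. It is a well-formed chain, so hashing alone cannot flag it, and this is where the extra hypothesis does its work. The manifest fixes the node set, so replay can check whether the ledger accounts for every node, as the completeness check of \cref{sec:verify-command} does. I would argue that a terminated run leaves a terminal receipt for each admitted node, so removing rows either removes some node's final receipt or leaves a ledger whose length falls short of the expected count. Because the adversary cannot edit the manifest, this shortfall is detectable.

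I expect the main obstacle to be this last step. A manifest fixes a node set but not an exact row count, since attempts vary with the run. Truncation that removes only intermediate retry rows of a node that still has a later row cannot happen, because removing a row in the middle is a modification and was handled above. A pure suffix truncation can remove only trailing rows, however. It is detectable via the manifest only if every run's tail contains a row naming some node absent from the prefix, or a terminal or sealing row. To close the gap I would lean on the attempt count recorded in the receipt, which is the completeness check. If that is outside the hypothesis, I would state explicitly that the manifest supplies the expected number of terminal rows. Failing that, I would weaken the clause to detecting truncation that drops some node's last receipt.
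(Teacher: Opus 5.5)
Your modification argument is the paper's: the successor's stored predecessor field pins $h(r_j)$, so replay fails at $k=j+1$ unless the edit is a second preimage of $r_j$. Using the largest modified index below $n$ is a tidier way to handle several edits than the paper's upward induction. One caveat: when that index is $n-1$, the successor $r_n$ may itself have been edited. What you actually use there is your earlier remark that its predecessor field cannot be reset to $h(r_j')$, not that $r_n$ is untouched.

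The truncation clause has a genuine gap, and the paper's proof has the same one. The paper closes the clause by asserting that a truncated tail ``presents as a node with no terminal receipt rather than as a valid short run''. That is exactly the step you could not derive from a node set, and your doubt is correct. Here is a concrete counterexample. Take $A\to B$ with repair pair $(B,A)$ and $\rounds=1$.
\begin{itemize}
\item In pass~0, $A$ passes and $B$ exhausts its budget and halts.
\item A repair boundary opens.
\item In pass~1 both nodes pass, and the run is $\Done$.
\end{itemize}
Cut the ledger immediately after $B$'s pass-0 $\Halt$ row. The prefix is chain-valid, and every node in the manifest still carries a terminal receipt ($A$ a $\Pass$, $B$ a $\Halt$). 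A comparison against the node set therefore sees an ordinary run that halted at $B$. Your claim that truncation ``removes some node's final receipt'' is true here but unobservable, because under repair a node can hold terminal receipts from several passes. Without repair, a node lacking a terminal receipt is still not diagnostic on its own. A legitimately halted run never admits the descendants of its failed node, and \cref{cor:crash} makes an interrupted prefix chain-valid as well.

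So the clause needs information beyond the node set. This particular example is caught by checking the terminal condition of \cref{alg:graph} against the whole manifest. Here $\rho=0<\rounds$ and a failed node holds a repair pair, so the scheduler could not have stopped. Such a check still cannot tell truncation from a crash.

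The general fix is your first fallback, stated explicitly as an extra hypothesis. You need an independent record of the expected length, or of the head digest, held where the adversary cannot reach it. An adversary who can truncate the ledger can otherwise rewrite a count stored beside it. One form is the completeness check, which \cref{sec:verify-command} itself describes as supplying ``a hypothesis \cref{thm:chain} needs and hashing cannot''. Another is the external witness of \cref{rem:final-row}, against which a prefix of length $m<n$ fails because $h(r_m)\neq h(r_n)$ except with negligible probability.

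Your second fallback is not available. The number of terminal rows depends on attempts and repair passes, so it cannot be fixed in a manifest before the run. Your third fallback is an honest weakening if you prefer not to add the hypothesis.
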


\begin{proof}
Replay recomputes $h(r_{k-1})$ for each $k$ and compares against the stored
value. Suppose $r_j$ is modified to $r_j'$ with $r_j' \neq r_j$. By second-preimage resistance --- which is the property actually required here,
since $r_j$ is given and the adversary must find a colliding replacement for it, not
an arbitrary colliding pair --- $h(r_j') \neq h(r_j)$ except with negligible
probability.
The stored predecessor hash in $r_{j+1}$ is $h(r_j)$, so the comparison at
$k = j+1$ fails. By induction on $k$ from $j+1$ upward, restoring consistency
requires rewriting every $r_k$ for $k > j$, which is the excluded capability.
Truncation to length $m < n$ leaves a well-formed prefix; it is detected by
comparing against the recorded run manifest, which fixes the expected node set,
so a truncated tail presents as a node with no terminal receipt rather than as a
valid short run.
\end{proof}

The construction is standard --- it is the tamper-evident logging scheme of
\citet{schneier1998logs} and we claim no part of it --- but both halves rest on hypotheses
the construction does not supply, and a reader is likely to assume they hold.

\begin{remark}[What the chain cannot do alone]
\label{rem:final-row}
\label{rem:witness}
The bound $j < n$ excludes the row that matters most. The last row has no successor to
carry its hash, and it is where the \emph{terminal verdict} lives, making it the single
most attractive target in the file: an adversary wanting a failed run to read as a passed
one edits exactly there, and the chain remains internally consistent.

There is also no secret in the construction, so an adversary who can rewrite the whole
file recomputes every link and leaves a ledger that verifies. What they cannot do is
change a digest already copied somewhere they do not control.

The guarantee is therefore exactly as strong as the strongest witness holding the head,
which is a deployment property and not a property of the scheme: a digest in an operator's
terminal is a weak witness, one in an append-only build log is better, and one
countersigned by a service the run cannot reach is better still. The controller reports the
head on \emph{every} terminal status, including the halts and errors a run might later be
disputed over, and the verifier accepts it as an argument.
\end{remark}

Temporal's Event History, with its Cloud Export, is the most complete execution record
these systems produce: every state transition of every workflow, durably stored,
queryable, replayable. Ours is smaller in every dimension except what the record is
\emph{of}. An event history is authoritative about what the orchestrator did --- which
activity was scheduled, what it returned, when it retried --- and is not a record of an
independent judgement about whether the work was acceptable, because there the completion
condition is workflow code and the workflow author writes both sides. Our ledger answers a
different question: for each node, which checker returned which verdict on which artifact,
with the worker's account of itself excluded from the inputs that decided it. A reader
wanting to reconstruct an execution wants an event history; a reader wanting to know
whether a completion was earned wants this.

\section{Vacuity: a Gate Satisfied by Nothing at All}
\label{sec:vacuity}

A gate can be present, correct in isolation, and useless. The mechanism is
simple enough to state in one line: the condition it checks is \emph{satisfied
by the absence of the thing it was written to check}. We call this vacuity, and
it is the single defect class that accounts for every \emph{gate} defect found in
this work --- 47 of them in the shipped catalogue, and 14 more in our own test
suite. It does not account for every defect: \cref{sec:self-attestation} is a
second class, with the same consequence and a different mechanism, which the
machinery around the gates turned out to carry.

Review does not catch it, because each individual assertion is true. Nor does
production, because it fails in the direction that flatters the system.

\subsection{Definition and the basic result}

\begin{definition}[Vacuous gate]
\label{def:vacuous}
Let $\Gate$ decide a predicate $P$ over the worker-owned artifacts. $\Gate$ is
\emph{vacuous} if $P(\varnothing)$ holds, where $\varnothing$ denotes the state
in which those artifacts are absent or empty.
\end{definition}

\begin{theorem}[Vacuity is unsoundness]
\label{thm:vacuity}
If $\Gate$ is vacuous, then there exists a worker $\Worker^{\ast}$ that reaches
$\Done$ in one attempt while satisfying no part of the loop's stated purpose.
Concretely, $\Worker^{\ast}$ is the worker that deletes the artifact.
\end{theorem}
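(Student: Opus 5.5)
The proof is a direct construction followed by a trace through \cref{alg:loop}. I would define $\Worker^{\ast}$ as the procedure that, on any specification and workspace $\Omega$, removes or truncates to empty every worker-owned artifact the gate reads, writes nothing else, and returns an arbitrary transcript (for instance one asserting completion, to make the point sharper). The first step is to check that this is a legal worker in the sense of \cref{def:worker}. It touches only $\Omega^{\mathrm{own}}$, so $\Omega^{\mathrm{anchor}}$ is unchanged. The resulting workspace is exactly the state $\varnothing$ of \cref{def:vacuous}, restricted to the artifacts, with the anchors intact.

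The second step traces one attempt of $\textsc{RunLoop}$. Since $\maxatt \ge 1$ by \cref{def:budget}, the while guard admits the first iteration and sets $k=1$. Line 2.2 runs $\Worker^{\ast}$. Line 2.3 does not fire, because no anchor changed. Line 2.4 evaluates $\Gate$ on a workspace whose worker-owned artifacts are absent or empty. Here I would invoke \cref{cor:vacuity-rejects}: the verdict is not $\Incap$, because every anchor is present and readable and the invocation is well-formed. So the gate completes its evaluation of $P$, and $P(\varnothing)$ holds by vacuity. The verdict is therefore $\Pass$. Line 2.5 appends the receipt, and line 2.6 returns $\Done$ at $k=1$.

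The third step is the claim that ``no part of the stated purpose'' is satisfied. I would discharge it by observing that the purpose concerns the artifact, and the artifact has been deleted. Nothing was produced, so any content-bearing requirement is unmet. If a formal anchor is needed, I would say the loop's purpose is nontrivial in that $\varnothing$ is not an acceptable output. Otherwise the loop would have no purpose worth gating, and the theorem would be vacuous in the uninteresting sense. By \cref{thm:soundness} and \cref{lem:node-level}, the run is nonetheless ledger-supported, which is exactly the unsoundness the theorem names.

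The main obstacle is making the step from ``$P(\varnothing)$ holds'' to ``$\Gate$ returns $\Pass$'' airtight. It needs the gate to decide $P$ faithfully on the empty state, rather than crashing and returning $\Incap$. \cref{def:incapacity} together with the totality argument of \cref{thm:trichotomy} closes this: a compliant, terminating gate cannot file the empty-artifact case as incapacity. I would state explicitly that the theorem inherits those two hypotheses on $\Gate$.
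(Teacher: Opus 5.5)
Your proposal is correct and follows the paper's proof. The deleting worker writes only $\Omega^{\mathrm{own}}$, so the anchor check at line~2.3 cannot fire. Vacuity gives $\Pass$ on the emptied state, \cref{alg:loop} returns $\Done$ at line~2.6 with $k=1$, and \cref{thm:soundness} makes the run ledger-supported. Your extra step ruling out $\Incap$ via \cref{cor:vacuity-rejects} is more careful than the paper, which reads ``$\Gate$ decides $P$'' in \cref{def:vacuous} as already meaning $\Gate$ returns $\Pass$ when $P(\varnothing)$ holds, so that step is harmless but not strictly needed.
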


\begin{proof}
$\Worker^{\ast}$ writes only within $\Omega^{\mathrm{own}}$, so it violates no
ownership constraint (\cref{def:ownership}) and triggers no anchor-tamper
refusal. After it runs, the artifact state is $\varnothing$. By
\cref{def:vacuous}, $P(\varnothing)$ holds, so $\Gate$ returns $\Pass$ and
\cref{alg:loop} returns $\Done$ at line~2.6 on $k=1$. By
\cref{thm:soundness} the run is ledger-supported, so no downstream check
distinguishes it from a genuine success.
\end{proof}

\Cref{thm:vacuity} is not an abstract possibility. It is the exact shape of a
defect we shipped.

\begin{example}[An empty ledger balances]
\label{ex:ledger}
The \texttt{ledger-reconciliation} loop seeds a set of transactions that fail to
reconcile and asks the worker to repair them. Its gate asserted two conditions:
debits equal credits, and every transaction carries a category. Both are true of
the empty ledger --- $0.0 = 0.0$, and every element of the empty set is
categorised, vacuously. So deleting the books was the cheapest way to
make them balance, and the gate certified it. The loop's \texttt{PROMPT.md}
explicitly forbade deleting rows. Nothing checked, because a prohibition stated
in the prompt is a request to the worker, not a condition on the artifact.
\end{example}

\emph{Every requirement expressed only in the prompt is unenforced.} The prompt is
addressed to the party whose compliance is in question; the gate is the only place a
requirement acquires force. A harness in which some constraints live in the prompt and
others in the gate has two classes of requirement that look identical to an author and
behave completely differently at run time.

The emptiness need not be the artifact's. The gate can manufacture it, which is harder
to see and worse when it happens.

\begin{example}[The hallucination gate with a hallucination blind spot]
\label{ex:citations}
The \texttt{citation-existence-check} loop exists to catch invented legal
citations; its documentation cites the growing record of court sanctions against
filings containing fabricated cases as the reason. The gate reads a reference
file of known reporters, builds a regular expression from the abbreviations it
finds there, scans the document for citations matching that expression, and
verifies each match against the reference.

Presented with the invented case \emph{Smith v.\ Jones}, 500 F.3d 100, the gate
returned $\Pass$.

The failure is not that the citation was checked and wrongly cleared. The
citation was never a match, because the pattern was derived from the same file
the check validates against, so a citation in any reporter absent from that file
is invisible to the scan. Zero citations matched; therefore zero citations were
invalid; therefore the document passed. This is \cref{def:vacuous} exactly ---
satisfaction by the absence of the thing being checked --- with the absence
produced by the checker's own construction rather than by the worker.

Two features make this the sharpest instance we found. The gate is at its most
permissive precisely where it is most needed: a fabricated citation is more
likely, not less, to name a reporter the reference file does not contain. And
unlike \cref{ex:ledger} it is an \emph{undisclosed} gap --- the documentation
states the gate flags any citation that is not a real case, and the code cannot
do that. A reader of either the prose or the code alone would not find it. The
defect lives in the relationship between them.
\end{example}

\subsection{Why review does not catch it}

Reviewing \cref{ex:ledger}'s gate line by line finds nothing wrong. Debits
should equal credits. Transactions should be categorised. Both assertions are
correct statements of the requirement. The defect is not in any assertion but in
the \emph{quantifier structure} of the conjunction: both are universally
quantified over a collection, and a universal over an empty collection is true.
No line-level review reaches that, because the defect is not on a line.

The same structure appears in test suites, where it is better known and no
better handled. A test whose assertions all sit inside a loop over a collection
that nothing proves non-empty passes when the collection is empty. It is green.
It is meaningless. And the mutation-testing literature's usual remedy --- run the
suite against a deliberately broken system --- does not help if the breakage
removes the collection rather than corrupting its contents.

\begin{remark}[A mutation that hangs is reported as a mutation that passed]
\label{rem:hanging-mutant}
The same remedy has a second failure mode, which cost us a real guard. Reverting
one repair did not make the loop fail; it made the loop \emph{spin},
so the regression test did not go red --- it did not finish. Under a wall-clock
limit that is indistinguishable from an infrastructure timeout, and under none it
is indistinguishable from a hung runner. A mutation harness therefore needs a
declared per-mutant deadline whose expiry is recorded as \textsc{killed} and never
folded into either the passing or the failing count, or the instrument inherits
exactly the ambiguity \cref{rem:halt-vs-error} identifies one layer up: ``the
guard held'' and ``the guard was never reached'' must not share a symbol.
\end{remark}

\subsection{Two shapes, and a guard that missed one}

Scanning our own test suite for \cref{def:vacuous} found \textbf{14} tests with the
defect, three of them security guards that still passed once tool registration was
stubbed to a no-op; \cref{sec:e6} reports them and what they were.

What matters for anyone building the same guard is that the vacuity has two syntactic
shapes:

\begin{enumerate}[leftmargin=1.6em, itemsep=2pt]
  \item \textbf{Assertions inside the loop.} \texttt{for x in xs: assert p(x)}.
        Empty \texttt{xs}, no assertion runs. Easy to find syntactically.
  \item \textbf{Accumulate, then assert empty.} \texttt{bad = [x for x in xs if
        not p(x)]; assert not bad}. This \emph{looks} unconditional --- there is
        an assertion at the top level, outside any loop --- and is equally
        vacuous, because the empty input yields an empty \texttt{bad}.
\end{enumerate}

The first scanner for this class detected one of its two syntactic shapes and
missed the other. That is the evidential point of \cref{sec:e6} in miniature:
the class survives attention, including attention from someone who has just
finished defining it, which is why we argue for an instrument rather than for
care.

\subsection{The remedy, and its limits}

The remedy is to add an existence obligation: a gate must assert that the thing
it checks is \emph{there} before asserting anything about it. In the notation of
\cref{def:vacuous}, replace $P$ with $P \wedge \neg\varnothing$, and where the
loop knows how much should be there, with a floor. \texttt{ledger-reconciliation}
now pins the seeded transaction identifiers and asserts none of them
disappeared, which is strictly stronger than a non-empty check: it catches
deleting some rows, not only all of them.

\begin{remark}[Existence checks are not free of judgement]
\label{rem:floor-judgement}
``How much should be there'' is a modelling decision, and stating a floor too
high makes the gate reject legitimate work. In the catalogue, the reference data
we ship for existence checks --- known releases, known reporters --- is
deliberately documented as a \emph{floor and not a ceiling}: absence from the
reference is not evidence of non-existence, so the gate rejects only what it can
positively contradict. Getting this backwards turns a soundness fix into a
false-reject generator, which \cref{sec:eval} measures as $\betafr$ and which
must be reported alongside $\alphafa$ for either to mean anything.
\end{remark}

\subsection{Vacuity at the harness level}

The class is not confined to gates. During this work a loop consumed its entire
attempt budget, on every run, and reported $\Halt$ --- a conformant outcome under
\cref{thm:loop-termination}, correctly bounded, correctly logged. The cause was
that the agent process was returning empty output: the harness's environment
allowlist withheld a variable the CLI needed in order to load its configuration,
so it exited without doing anything. The worker had never actually run.

Every layer behaved as specified. The bound held; the receipts were written; the
chain verified. And the run reported that the worker had been given four chances
and failed, when the truth was that it had been given none. \emph{``The worker
attempted and did not succeed'' and ``no attempt occurred'' were indistinguishable
in the log}, which is \cref{rem:halt-vs-error} reappearing one level below the
gate. We fixed the allowlist, and we take the general lesson to be that an
attempt count is only meaningful if something independently establishes that an
attempt happened --- the same argument this paper makes about $\Done$, applied to
the loop's own bookkeeping.

\section{Self-Attestation: a Check Answered by Its Subject}
\label{sec:self-attestation}

\Cref{sec:vacuity} is a check satisfied by the \emph{absence} of its subject. This
section is a second class, found in the same system, sharing its consequence and
not its mechanism: a check satisfied because the \textbf{subject supplied the
answer}. The artifact is present. The assertion runs. The assertion is about a
value the party under scrutiny chose.

We separate it from vacuity because the two remedies are different. An existence
obligation --- \cref{sec:vacuity}'s fix --- does nothing here, since nothing is
missing. The fix is to consult a value the subject does not control, and finding
those values is a distinct piece of engineering.

\subsection{Definition, and a proof that is deliberately trivial}

\begin{definition}[Self-attested check]
\label{def:self-attested}
A check on a subject $s$ decides $Q(f(s))$, where $f$ extracts the property under
judgement and $Q$ is the criterion applied to it. The check is
\emph{self-attested} if $s$ controls $f(s)$ --- that is, if $s$ can determine what
$f$ returns independently of whether the property $Q$ describes holds of $s$.
\end{definition}

\begin{proposition}[Self-attestation is unsoundness]
\label{prop:self-attestation}
If a check on $s$ is self-attested, there exists a subject $s^{\ast}$ that the
check passes and the property fails.
\end{proposition}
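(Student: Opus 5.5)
The argument follows the pattern of \cref{thm:vacuity}: exhibit the adversarial subject explicitly and read the verdict off the definition. By \cref{def:self-attested}, $s$ controls $f(s)$, which I take to mean this: for any value $y$ in the range $f$ can return, the subject can arrange $f(s)=y$, regardless of whether the property $Q$ describes actually holds of $s$. The proof therefore needs two ingredients, a value the criterion accepts and a subject for which the property fails.

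First, I fix a value $y^{\ast}$ with $Q(y^{\ast})$ true. If no such value exists in the range of $f$, the check passes nothing and the proposition holds trivially in the degenerate sense; I will note this case and set it aside, as the paper does elsewhere for empty cases such as the $\{0\}$ in \cref{prop:no-progress}. Second, I fix a subject in which the property fails. The proposition is only interesting when such subjects exist, so I take this as part of the hypothesis that the check has something to judge. Third, I invoke control: the phrase ``independently of whether the property holds'' lets that failing subject set $f(s^{\ast}) = y^{\ast}$. The check then decides $Q(f(s^{\ast})) = Q(y^{\ast})$, which is true, so the check passes while the property fails. That gives the required $s^{\ast}$.

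The main obstacle is not computation but pinning down ``controls'' precisely enough that the third step is licensed. The key quantifier order is that for every status of the property, every attainable value remains attainable. My plan is to state that reading explicitly at the start of the proof as the interpretation of \cref{def:self-attested}, and to observe that with it the argument is one line. This matches the section's announcement that the proof is deliberately trivial: the substance lies in the definition, and the content for the paper lies in recognising instances, not in the deduction.
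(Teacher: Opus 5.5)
Your proof is correct and is essentially the paper's argument: take a subject that violates the property, have it set $f(s^{\ast})$ to a value $Q$ accepts, and conclude $Q(f(s^{\ast}))=\top$. Your explicit quantifier reading of ``controls'' and your requirement that $y^{\ast}$ lie in the range of $f$ are slightly more careful than the paper's version. One side remark needs correcting: if $Q$ accepts no attainable value, the proposition does not hold trivially. It is an existential claim, so it is then false, and that case must be excluded as a non-degeneracy hypothesis, just like the existence of a property-violating subject. The paper's clause ``non-empty or the check passes nothing at all'' makes the same exclusion implicitly.
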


\begin{proof}
By \cref{def:self-attested}, $s$ determines $f(s)$ independently of the property.
Let $s^{\ast}$ violate the property and set $f(s^{\ast})$ to any value in
$Q^{-1}(\top)$, which is non-empty or the check passes nothing at all. The check
computes $Q(f(s^{\ast})) = \top$ and returns $\Pass$.
\end{proof}

\begin{remark}[The proof is one line and the diagnosis is the whole problem]
\label{rem:trivial-proof}
\Cref{prop:self-attestation} is immediate once $f$ is known to be
subject-controlled, and we state it anyway because \emph{that} is the fact no
reading of the check recovers. At the call site $f$ is a method access, an
attribute read, or a subprocess invocation --- syntax that looks identical whether
the value comes from the harness or from the subject. Every instance below was
written by an author who could have proved this proposition on request.
\end{remark}

\subsection{Three instances, in three unrelated subsystems}

\begin{example}[A verdict whose emptiness was decided by the verdict]
\label{ex:lying-detail}
The engine requires that a passing verdict carry a non-empty reason: a $\Done$ a
reader cannot account for is the outcome this paper exists to forbid. The check
read \texttt{isinstance(detail, str)} and then \texttt{detail.strip()}.

A gate returning \texttt{Verdict(passed=True, detail=Quiet(""))}, where
\texttt{Quiet} subclasses \texttt{str} and overrides \texttt{strip} to return
\texttt{"looks-fine"}, passed it. \texttt{isinstance} consults
\texttt{\_\_class\_\_}, which the object owns; \texttt{strip} is a method, which
the object owns. The empty-reason rule asked the value under audit whether it was
empty and was told what the value preferred. A lap reached $\Done$ carrying no
explanation --- precisely the unaccountable $\Done$ the rule forbids.

The guard was \texttt{assert not (out.passed and not out.detail.strip())}. It put
the same question to the same object and was green throughout.
\end{example}

\begin{example}[A stand-in that accepted what the platform refuses]
\label{ex:accepting-stub}
A test's stand-in can be the subject that answers. A package launcher installed
its payload into whichever interpreter it found, which PEP~668 refuses on Homebrew
Python, on Debian's \texttt{python3} and on most distribution builds; it then
advised the user to run the command that had just been refused. The defect was not
subtle and it was not new. What kept it invisible is the subject of this section.

The launcher had a test and the test was green. It asserted one thing --- that the
install pinned the package's version --- against a fake interpreter that returned
exit~0 for the install. Here $s$ is not the value under test but the test's own
scaffolding, and $f$ is ``what does this platform do with an install'': a question
answered by a stub built to say yes. Any stand-in more permissive than what it
stands in for has this property, which makes it the most reusable of the three
instances and the easiest to introduce without noticing.

Rewritten to assert the resolution \emph{order} --- matching interpreter, then an
engine already on \texttt{PATH}, then a private virtual environment, and never an
install into the discovered interpreter --- five of seven cases fail against the
previous launcher. The original assertion still passes. It was never wrong, only
incapable, and that distinction is the reason a green suite is not evidence.
\end{example}

\begin{example}[Provenance read off the object it describes]
\label{ex:forged-provenance}
Each ledger row records which gate decided the lap, beside the verdict and
deliberately not inside it, because provenance a gate can write is not provenance.
The function's first documented sentence claimed the value was ``derived by the
harness and never supplied by the gate.'' Its first statement was
\texttt{getattr(gate, "gate\_kind", None)}.

The engine's wrapper freezes \texttt{\_\_setattr\_\_}, and its own docstring
concedes that this is not containment: \texttt{object.\_\_setattr\_\_} bypasses it.
A gate could therefore write \texttt{\{kind: pytest, source: shipped\}} into the
hash-chained record. What is forged is not a pass --- the verdict was honest --- but
the answer to \emph{who decided}, which is the only question that record exists to
answer. An unforgeable chain over a self-attested field authenticates the storage
and not the claim.
\end{example}

\subsection{Why it is not vacuity, and what the remedy is}

In \cref{def:vacuous} the artifact is gone and the criterion is true of nothing. In
\cref{def:self-attested} the artifact is present, the criterion is exacting, and
the value it is applied to was chosen by the party being judged. Adding an
existence obligation to \cref{ex:lying-detail} would have changed nothing:
\texttt{Quiet("")} is present, and reports itself non-empty.

The remedy is to make $f$ harness-controlled, which in practice means one of three
moves. \emph{Consult identity rather than claimed identity} ---
\texttt{type(x) is C} instead of \texttt{isinstance(x, C)}, since the latter reads
an attribute the object owns. \emph{Call the operation, not the object's version of
it} --- \texttt{str.strip(x)} as an unbound builtin cannot be intercepted by a
subclass and returns a genuine \texttt{str}. And \emph{pass what you already know}
--- \cref{ex:forged-provenance} was repaired by making the registry key the harness
had already resolved an argument to the function, rather than something the
function went looking for on the object.

Stated once: \emph{never ask the subject a question you can answer yourself.} The
corollary for test suites is \cref{ex:accepting-stub}'s: a stand-in must refuse
what production refuses, or the suite measures the stand-in.

\begin{remark}[Evidential grade of this section]
\label{rem:audit-grade}
These three were found in a release audit by two commercial agent CLIs at maximum
reasoning effort, each confined to its own detached worktree, with every finding
reproduced by hand before acceptance and every fix individually mutation-tested.
That is a weaker instrument than \cref{sec:e1-defects}'s harness: there is no
held-out mutant set, so there is no recall figure and we claim none. These
instances are accordingly \textbf{not} added to the 47 of \cref{sec:eval} or the 14
of \cref{sec:e6}. What they establish is existence and recurrence across
subsystems that share no code --- verdict validation, a packaging launcher, and a
provenance record --- not a rate. They also carry one property no other defect in
this paper has: they were found by parties who did not write the code, which is the
threat \cref{sec:threats} names first and cannot otherwise answer.
\end{remark}

\section{Measuring a Gate}
\label{sec:measuring}

\Cref{sec:soundness} establishes that $\Done$ is backed by a gate verdict.
Whether that verdict deserves to be believed is a separate, empirical question,
and this section sets up how to answer it without the circularity that makes
most such answers worthless.

\subsection{The two error rates}

\begin{definition}[Gate error rates]
\label{def:rates}
Fix a gate $\Gate$ and a distribution over artifacts carrying a known
ground-truth label $y \in \{\text{conforming}, \text{defective}\}$. Then
\[
  \alphafa \;=\; \Prob\bigl[\Gate = \Pass \;\big|\; y = \text{defective}\bigr],
  \qquad
  \betafr \;=\; \Prob\bigl[\Gate = \Reject \;\big|\; y = \text{conforming}\bigr].
\]
$\alphafa$ is the \emph{false-accept} rate: the gate certified something broken.
$\betafr$ is the \emph{false-reject} rate: the gate refused something correct.
\end{definition}

Both must be reported. A gate that rejects everything has $\alphafa = 0$ and is
worthless; a gate that accepts everything has $\betafr = 0$ and is worse, because it
produces $\Done$ receipts. Either alone lets a system look good by being uselessly
strict or dangerously permissive, with no way for a reader to tell which.

\subsection{Why the obvious measurement is circular}

The natural way to test a gate is to write some defective artifacts and check that it
catches them. That measures something real, and not what it appears to measure. A defect
authored by the gate's author is drawn from the distribution of failure modes that
author already had in mind, which is exactly the distribution the gate was written
against. The resulting number says the author was internally consistent, and nothing
about the failures they did not anticipate, which are the ones that matter.

\begin{assumption}[Held-out generation]
\label{ass:heldout}
A measurement of $\alphafa$ is informative only if the process generating the
defective artifacts does not consult the gate.
\end{assumption}

``Does not consult the gate'' has to survive the author's own memory, so we make it a
checkable property of the code rather than a claim about the process.

\subsection{Blindness as a static property}

\begin{theorem}[Generator blindness --- no syntactic channel]
\label{thm:blind}
Let $\mathcal{M}$ be the set of modules in the mutation-generator package. If no
module in $\mathcal{M}$ imports a gate adapter, reads a checker source file,
branches on a loop's name, or executes a subprocess, then the generator's output
is a function of the converged artifact and the operator set, and of nothing
else: no gate's \emph{code or behaviour} is an input to it.
\end{theorem}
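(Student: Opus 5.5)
The approach is an information-flow argument over the generator's inputs. I treat the generator as a program whose observable output depends only on the values it can read, and then show that every channel through which a gate could become one of those values is closed by one of the four syntactic exclusions. Concretely, I would enumerate the ways a Python module can acquire information at run time: \emph{(i)} its explicit arguments; \emph{(ii)} names bound by import, including transitively imported modules; \emph{(iii)} bytes read from the filesystem; \emph{(iv)} effects of spawned processes; and \emph{(v)} control flow conditioned on identifiers in scope. The arguments are, by construction of the generator's interface, the converged artifact and the operator set. It therefore suffices to show that channels (ii)--(v) carry no gate code or gate behaviour.

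I would then discharge each channel in turn. The import exclusion closes (ii), since gate code can enter only as a bound module object. The source-read exclusion closes (iii) for checker files, and by \cref{def:ownership} and \cref{def:independence} gate code lives in $\Omega^{\mathrm{anchor}}$ and is not the converged artifact. The subprocess exclusion closes (iv), which is the only way a gate's \emph{behaviour}, as opposed to its code, could be observed, because observing behaviour requires executing the gate. The loop-name exclusion closes (v): branching on a name is the residual channel by which an author's knowledge of a particular gate could be encoded as a lookup table without any import. With all four closed, the output is a function of (artifact, operators), and the conclusion follows.

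I expect the main obstacle to be transitivity and indirection. A module in $\mathcal{M}$ might import a helper outside $\mathcal{M}$ that itself imports an adapter, or it might reach a gate through \texttt{importlib}, \texttt{eval}, or a dynamically constructed path. I would handle this by reading the hypothesis as quantified over the transitive import closure restricted to the project's own packages, with dynamic import and \texttt{eval} counted as ``imports a gate adapter'' for the purpose of the check. I would state the theorem's scope accordingly: \emph{no syntactic channel}, as the title says. The proof does not exclude semantic channels, namely an author who memorised a gate and designed an operator around it; that residue is exactly what Tier~2 and \cref{ass:heldout} address. I would note this limit explicitly rather than claim more.
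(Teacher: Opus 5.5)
Your proposal is correct and is essentially the paper's proof. The paper also fixes the output by the declared inputs and control flow, names the same four routes by which a gate could enter (as a value, as data, by selecting a branch, or by being executed), closes each with one excluded construct, and defers the non-syntactic residue to \cref{rem:enforced}; note, though, that an imported adapter can also be executed in-process, so the subprocess exclusion is not the only guard on behaviour, and the paper justifies it instead by the fact that a subprocess permits any of the other routes. Your treatment of imports routed through helpers outside $\mathcal{M}$ is a tightening the paper's proof does not make, but your closing caveat is narrower than that remark: besides the author's memory, the converged artifact is itself gate-conditioned (it lies in $\Gate^{-1}(\Pass)$) and non-source reference data remains readable, so your channel~(iii) is closed only for checker source and the caveat should say so.
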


\begin{proof}
The generator's output is determined by its inputs and its control flow. Its
declared inputs are the converged artifact and the operator set. A gate could
additionally reach the generator only by supplying a value, supplying data,
selecting a branch, or being executed. The four excluded constructs are
exhaustive of the \emph{syntactic} forms those four take within a Python
package: importing an adapter supplies gate behaviour as a value; reading a
checker source supplies it as data; branching on a loop name selects an operator
using per-loop knowledge, which is knowledge of that loop's gate; and a
subprocess permits any of them. Excluding all four leaves the artifact and the
operator set as the only inputs. \Cref{rem:enforced} records what this argument
does \emph{not} establish, and the gap is not incidental.
\end{proof}

The hypothesis of \cref{thm:blind} is enforced, not asserted: a test walks the
abstract syntax tree of every module in the generator package and fails the build
if any of the four constructs appears. Function-local and dynamic imports count,
because a deferred import is still a dependency and hiding one inside a function
would defeat the entire check.

\begin{remark}[Enforced beats documented --- and is still only necessary]
\label{rem:enforced}
The mechanism matters because the alternative, a README paragraph saying the
generator is blind, decays silently, whereas an AST assertion fails on the commit
that breaks it. But the property is \emph{necessary} hygiene and not sufficient for
independence in any broader sense --- the four constructs are exhaustive of the
\emph{syntactic} routes, not of all routes. At least four channels carry gate information
past a syntactic check:

\begin{itemize}[leftmargin=1.4em, itemsep=1pt]
  \item \textbf{The input is gate-conditioned.} Operators are applied to the
        \emph{converged} artifact, which by construction lies in
        $\Gate^{-1}(\Pass)$. An operator whose behaviour depends on that
        artifact's structure inherits information about the gate without
        importing anything.
  \item \textbf{Reference data is not source.} A gate's ground-truth data may be
        a committed JSON file, which the check permits a generator to open
        because it matches no checker-source pattern.
  \item \textbf{The author's memory.} Loops, gates and operators share authors
        (\cref{sec:threats}). No static check detects what an author knows, and
        the longitudinal version is worse: our operators were developed
        \emph{while} those gate fixes were being made.
  \item \textbf{Computed dynamics.} A dynamic import whose target is not a
        literal is not matched (\cref{app:proofs}), so the guarantee there
        degrades to review --- the very thing this remark claims to improve on.
\end{itemize}

\Cref{thm:blind} rules out the routes by which gate information enters \emph{by
ordinary means}, which is what protects a corpus against drift as it is
maintained. It does not make the generator independent of the gate in the
information-theoretic sense, and no syntactic check could, so we state the limit
here rather than let the word ``blindness'' carry more than the proof does.
\end{remark}

\subsection{The equivalent-mutant problem, and why it does not arise}

Mutation testing has a well-known difficulty: some mutants are semantically
equivalent to the original, no test can distinguish them, and counting them as
misses understates the suite. The standard treatment is to detect and exclude them,
and to report how many were excluded. Here that treatment inverts.

\begin{theorem}[Selection independence by construction]
\label{thm:no-equivalent}
Suppose every mutant's label is determined by the mutation \emph{operation},
fixed before the edit is applied, and no label is ever assigned by observing a
gate's behaviour. Then the denominator of $\hat{\alphafa}$ is independent of
$\Gate$'s output, so no behaviour-based exclusion step can inflate the estimate
by construction.
\end{theorem}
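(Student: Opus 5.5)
The plan is to formalise the estimator and show that its denominator is a deterministic function of data fixed before any gate runs. Let $\mathcal{C}$ be the corpus of mutants. Each mutant $c \in \mathcal{C}$ is produced by applying an operation $o_c$ to a converged artifact. Write $y(c) \in \{\text{conforming}, \text{defective}\}$ for its label. By hypothesis, $y(c) = \lambda(o_c)$ for some labelling map $\lambda$ on operations, and $\lambda$ is fixed before the edit is applied. The estimator is
\[
  \hat{\alphafa} \;=\; \frac{\bigl|\{c \in D : \Gate(c) = \Pass\}\bigr|}{|D|},
  \qquad D \;=\; \{\, c \in \mathcal{C} : \lambda(o_c) = \text{defective} \,\}.
\]
The first step is to note that $D$ is defined without reference to $\Gate$. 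Its membership predicate reads only $o_c$ and $\lambda$. So if $\Gate$ is replaced by any other gate $\Gate'$, $D$ is unchanged. This is the precise meaning of ``independent of $\Gate$'s output'': the map from gates to denominators is constant.

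The second step is to make sure the corpus $\mathcal{C}$ itself does not depend on $\Gate$. Otherwise a gate-aware choice of which mutants to generate could smuggle dependence back into $D$. Here I will invoke \cref{thm:blind}: the generator's output is a function of the converged artifact and the operator set only, so $\mathcal{C}$, and hence each $o_c$, is fixed independently of $\Gate$'s code or behaviour. I will state the residual channels of \cref{rem:enforced} as outside the claim, since the theorem concerns behaviour-based exclusion specifically. For Tier~2 mutants, the digest pinning plays the same role: the label is attached at authoring time.

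The third step is the ``no inflation'' conclusion. A behaviour-based exclusion step is any rule that removes $c$ from $D$ according to the value of $\Gate(c)$. The usual example is discarding survivors as ``equivalent.'' The hypothesis that no label is assigned by observing a gate means that no such rule appears in the definition of $D$. Therefore the numerator counts every defective mutant the gate passed, over a denominator it cannot shrink. In particular, the false accepts cannot be selectively removed, which is exactly the inflation mechanism that \cref{cor:exclusion-fatal} will exhibit.

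The main obstacle is interpretive rather than technical. ``Independent'' must mean functional non-dependence on $\Gate$, not statistical independence. I also have to be honest that mislabelled mutants, where the operation did not actually produce a defect, remain possible. The theorem does not rule these out; it only guarantees that any such error is not correlated with gate behaviour through the selection step.
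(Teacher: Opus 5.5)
Your argument is correct and is essentially the paper's. The paper also defines $D$ through a labelling function that reads only the operator and the pre-edit artifact, concludes that the denominator cannot depend on $\Gate$, and notes that any exclusion rule keyed to $\Gate$'s response is exactly what would reintroduce that dependence.

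Your Step~2 is not needed for the statement and goes beyond the paper. The paper takes the corpus as fixed before $\Gate$ is invoked on any mutant, and treats gate-informed operator choice as a coverage threat rather than as selection bias. If you keep the step, note that \cref{thm:blind} is not a hypothesis of this theorem. Note also that, by \cref{rem:enforced}, the converged artifact already lies in $\Gate^{-1}(\Pass)$. So the corpus is independent of $\Gate$'s verdicts on the mutants but not of $\Gate$ altogether, and your ``constant map from gates to denominators'' holds only with the corpus held fixed.
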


\begin{proof}
Let $L$ be the labelling function. By hypothesis $L$ depends only on the operator
and the pre-edit artifact, both fixed before $\Gate$ is invoked, so $L$ is
independent of $\Gate$'s output. The estimator
$\hat{\alphafa} = \frac{1}{|D|}\sum_{m \in D} \Ind{\Gate(m) = \Pass}$, where
$D = \{m : L(m) = \text{defective}\}$, therefore has a denominator that does not
depend on $\Gate$. Any selection rule that removed mutants \emph{because of} how
$\Gate$ responded would make $D$ a function of $\Gate$ and introduce exactly the
dependence the construction avoids. Note what is not shown: nothing here relates
the distribution of $D$ to the distribution of defects that arise in practice,
so the estimate is protected against selection bias and not against an
unrepresentative operator set.
\end{proof}

\begin{corollary}[Gate-conditional exclusion is degenerate here]
\label{cor:exclusion-fatal}
Suppose equivalence is operationalised as $\Gate(m) = \Pass$ --- that is, a
mutant is discarded because the system under test does not distinguish it. Then
for a loop whose gate is a test suite, the discarded set is exactly the set of
false accepts and $\hat{\alphafa} = 0$ identically, independently of the gate's
quality.
\end{corollary}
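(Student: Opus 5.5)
The plan is to compute the estimator directly under the stated exclusion rule; no probabilistic argument is needed. I would write the discarded set as $X = \{m \in D : \Gate(m) = \Pass\}$, where $D$ is the operation-labelled defective set of \cref{thm:no-equivalent}. The surviving denominator is $D' = D \setminus X$, and the operationalised estimator is $\hat{\alphafa}' = \frac{1}{|D'|}\sum_{m \in D'} \Ind{\Gate(m) = \Pass}$.

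\emph{Step one: the discarded set is exactly the false accepts.} By \cref{def:rates}, a false accept is a defective $m$ with $\Gate(m) = \Pass$. This is the defining condition of $X$, so $X$ equals the set of false accepts by definition. The role of the hypothesis that the gate is a test suite is that ``the system under test does not distinguish the mutant'' and ``the gate passes'' become literally the same observable. For a test-suite gate, the suite is both the classical mutation-testing oracle and $\Gate$. So the exclusion criterion cannot be read as some separate equivalence judgement; it coincides with the false-accept predicate. I would state this identification explicitly, since it is the only point where the test-suite hypothesis enters.

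\emph{Step two: the estimate is zero.} Every $m \in D'$ satisfies $\Gate(m) \ne \Pass$, so each summand vanishes and $\hat{\alphafa}' = 0$ whenever $D' \ne \emptyset$. No property of $\Gate$ was used beyond its output on each mutant, which gives ``independently of the gate's quality.'' I would point to the extreme case of an accept-everything gate: there $D' = \emptyset$ and the ratio is $0/0$. I would handle this by convention, reporting zero, or by noting that it is precisely the case where every defective mutant is excluded. Either way it is the degenerate endpoint, not a counterexample.

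\emph{Main obstacle.} The arithmetic is trivial. The real difficulty is making the corollary more than a tautology: one has to justify that, for a test-suite gate, the natural automated notion of equivalence really is $\Gate(m)=\Pass$ and not something finer. I would argue this by contrast with classical mutation testing. There the program is mutated and a separate suite is the oracle, so a surviving mutant may be genuinely equivalent. Here the artifact is mutated and the suite is itself the subject under test, so survival is the event being measured. I would also add a sentence linking this back to \cref{thm:no-equivalent}. The exclusion rule makes $D'$ a function of $\Gate$, which is exactly the dependence that theorem's hypothesis forbids, and this is why the guarantee fails here.
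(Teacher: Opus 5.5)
Your proposal is correct and follows the paper's own argument, which is essentially definitional. Restricted to the operation-labelled defective set, the rule $\Gate(m)=\Pass$ selects exactly the false accepts of \cref{def:rates}, so every surviving summand vanishes; and the exclusion makes the denominator a function of $\Gate$, which is the dependence the proof of \cref{thm:no-equivalent} says such a rule introduces. Your treatment of the accept-everything case sharpens the paper's ``identically'' --- though reporting $0$ over an empty denominator by convention is the vacuous zero \cref{rem:beta-vacuous} objects to, so ``$\hat{\alphafa}=0$ whenever it is defined'' is the cleaner statement.
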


It does \emph{not} refute the standard remedy. Competent mutation studies define
equivalence against the \emph{specification} --- $\mathrm{spec}(m) = \text{conforming}$, established by
inspection --- and never by asking whether the system under test happened to
notice. \Cref{cor:exclusion-fatal} therefore refutes a misimplementation, and the
correct and much narrower lesson is the well-known one: \emph{do not automate
equivalence exclusion using the artefact you are measuring.}

The consequence for \cref{thm:no-equivalent} is that its guarantee should be read
precisely. It establishes freedom from \emph{selection} bias --- our denominator
is not a function of gate behaviour --- and nothing stronger. It does not
establish that every label is correct with respect to the specification, which is
a separate obligation that spec-based equivalence review discharges and that we
have discharged only partially. We report the one mislabelling we found
(\cref{sec:eval-corrections}), and a sampled spec review of the Tier-1 labels is
work we have not done.

\subsection{Two tiers, because one is not enough}

\begin{definition}[Tier-1 and Tier-2 mutants]
\label{def:tiers}
A \emph{Tier-1} mutant is produced by a mechanical operator applied blindly to
the converged artifact, satisfying \cref{thm:blind}. A \emph{Tier-2} mutant is
authored by an independent agent shown only the loop's stated purpose --- its
prompt and its human-facing description --- with the checker withheld.
\end{definition}

Tier~1 is cheap, reproducible, and structurally blind, and it cannot express a
requirement that is a \emph{relation between} artifacts: a mechanical operator
edits one file and has no notion of consistency between two. Tier~2 covers that
gap at the cost of requiring an author, and it earns its keep: the loop on which
Tier~1 is structurally silent is the loop where Tier~2 found four defects
(\cref{sec:eval}). The two are complementary capabilities addressed at different
classes of requirement, and neither is a degraded version of the other.

Classical mutation testing perturbs program syntax, and an implementation
following that tradition would apply operators to abstract syntax trees. We
measured the catalogue before choosing, and the distribution rules that out. Of
the 78 mutable artifacts across the catalogue, 35 are JSON, 19 Python, 16
Markdown, 3 plain text, and the remaining 5 HTML, Terraform, XML, CSV or
extensionless. Only 13 of the 69 loops have a Python work product at all, so an
AST-only operator set would reach under a fifth of the catalogue --- and not
even a well-chosen fifth, since those 13 split 6 test-suite gates to 7 command
gates rather than concentrating where program mutation is most informative.

The operator families are consequently shaped by artifact kind --- structural edits
to JSON, textual and sectional edits to Markdown, tabular edits to CSV --- rather
than by programming-language construct. That deviation from mutation testing's usual
form is forced by the measurement above, and it identifies what the subject of the
mutation really is here: not a program, but a contract clause, a commit history, a
dependency manifest, or a clinical note.

Tier-2's blindness cannot be established by \cref{thm:blind}, since the author is
a model rather than a module. It is established by recording, per mutant, the
digest of exactly the material its author was shown, and recomputing that digest
from the repository at test time. Widening the authoring prompt to include the
checker is then a failing test rather than an undetectable change of protocol.

\subsection{What the interval means}

\begin{definition}[The log-mean estimand]
\label{def:logmean}
For a sequence of attempts $1..n$ with latent false-accept propensities
$\mu_1,\dots,\mu_n$, write $\logmean = \frac{1}{n}\sum_{i \le n} \mu_i$: the mean
propensity of \emph{the attempts actually recorded}.
\end{definition}

We report an anytime-valid confidence sequence \citep{waudbysmith2023} for
$\logmean$, valid simultaneously at every sample size, so that a reader may stop
looking at any point without inflating the error rate --- the appropriate
guarantee for a log that is appended to continuously and inspected on demand.

\begin{remark}[$\logmean$ is not a forecast]
\label{rem:estimand}
$\logmean$ is an audit quantity: it describes the attempts in the log, not a
population from which future attempts will be drawn. Coverage of the population
marginal is a different number, and it varies --- it rises with the number of
independent nodes pooled and falls with their heterogeneity
(\cref{sec:eval}). Any coverage figure quoted without naming its estimand is
uninterpretable, and one of the authors' own earlier write-ups quoted a
single-node, fixed-time figure as though it described the shipped product. Both
the command-line label and the machine-readable output now carry the estimand,
with a test pinning each.
\end{remark}

\subsection{Admitting a model judge without raising the false-accept rate}
\label{sec:judge-asymmetry}

Everything above assumes the gate is a deterministic predicate, which is true of
every gate in our catalogue (\cref{sec:impl-gates}). The obvious next question
from a practitioner is whether a model can be used as a gate, given that many
requirements are easier to state in prose than to compute. A model judge has no
contract to violate --- it has a behaviour distribution --- so
ground-truth-by-construction does not apply to it and we make no claim about its
$\alphafa$ in isolation. But the composition question has a clean answer, and it
is the one that decides whether a judge is safe to deploy.

\begin{theorem}[Judge asymmetry under conjunction]
\label{thm:judge-asymmetry}
Let $\Gate_{\mathrm{det}}$ be any gate and $\Gate_{\mathrm{j}}$ any other gate,
composed as $\Gate = \Gate_{\mathrm{det}} \wedge \Gate_{\mathrm{j}}$, meaning
$\Gate$ returns $\Pass$ exactly when both return $\Pass$. Then
\[
  \alphafa(\Gate) \;\leq\; \alphafa(\Gate_{\mathrm{det}})
  \qquad\text{and}\qquad
  \betafr(\Gate) \;\geq\; \betafr(\Gate_{\mathrm{det}}).
\]
\end{theorem}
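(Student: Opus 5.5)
The proof is an event-inclusion argument, carried out pointwise on each artifact and then pushed through the conditional probabilities of \cref{def:rates}. I fix the distribution over labelled artifacts and evaluate both components on the same artifact. Because the composition is deterministic given the two component verdicts, it is a function of the pair $(\Gate_{\mathrm{det}}(\Omega), \Gate_{\mathrm{j}}(\Omega))$, so every event below is well defined. The whole argument uses only the definition of conjunction; nothing about the judge's distribution is needed.

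\emph{False accepts.} By the definition of conjunction, $\{\Gate = \Pass\} = \{\Gate_{\mathrm{det}} = \Pass\} \cap \{\Gate_{\mathrm{j}} = \Pass\} \subseteq \{\Gate_{\mathrm{det}} = \Pass\}$. Intersecting both sides with $\{y = \text{defective}\}$ and using monotonicity of $\Prob$ gives $\Prob[\Gate = \Pass,\, y=\text{defective}] \le \Prob[\Gate_{\mathrm{det}} = \Pass,\, y=\text{defective}]$. Dividing by $\Prob[y = \text{defective}]$, assuming it is positive, yields the first inequality. If that probability is zero, $\alphafa$ is undefined for both gates and there is nothing to prove.

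\emph{False rejects.} This half needs care, because \cref{def:rates} defines $\betafr$ through the event $\Gate = \Reject$, and the verdict set also contains $\Incap$. The conjunction as stated only specifies when $\Gate$ returns $\Pass$. So I would first fix the composite's non-$\Pass$ verdict. If $\Gate_{\mathrm{det}} = \Reject$, the composite returns $\Reject$. Otherwise, if either component fails to return $\Pass$, the composite returns whichever non-$\Pass$ verdict that component produced. I would also note that, restricted to the $\Pass$/$\Reject$ dichotomy under which the rates are computed, this is simply the complement of the $\Pass$ event. Under that reading, $\{\Gate_{\mathrm{det}} = \Reject\} \subseteq \{\Gate = \Reject\}$. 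Intersecting with $\{y = \text{conforming}\}$ and dividing as before gives $\betafr(\Gate) \ge \betafr(\Gate_{\mathrm{det}})$.

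The main obstacle is this treatment of $\Incap$ in the second inequality. If the composite could map $(\Reject, \Incap)$ to $\Incap$, the inclusion would fail as written. I would therefore state the tie-breaking convention explicitly, with $\Reject$ from the deterministic gate dominating. The alternative is to read $\betafr$ as $\Prob[\Gate \ne \Pass \mid y = \text{conforming}]$, under which the inequality is the exact complement of the first argument.

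Neither inequality uses any property of $\Gate_{\mathrm{j}}$, not even determinism. That is what justifies the statement's claim that the result holds with no assumption about the judge. It is also why the engine's refusal of any other composition mode is what carries the theorem's hypothesis.
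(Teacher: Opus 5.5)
Your proof is correct and is the paper's own event-inclusion argument: the conjunction's false accepts are a subset of $\Gate_{\mathrm{det}}$'s and its false rejects a superset, over a denominator that does not depend on either gate (the paper secures this by fixing the corpus labels before any gate runs, via \cref{thm:no-equivalent}; you do it by conditioning on $y$). Your explicit rule that a $\Reject$ from $\Gate_{\mathrm{det}}$ dominates an $\Incap$ from the judge is a worthwhile tightening, because the paper's ``symmetrically, \ldots\ is also rejected by $\Gate$'' silently assumes exactly that, and without some such convention the $\betafr$ inequality can fail when $\betafr$ is defined through $\Reject$.
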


\begin{proof}
Fix the corpus of \cref{def:rates}. A false accept under $\Gate$ is a mutant
labelled \emph{incorrect} on which $\Gate$ returns $\Pass$, which by the
definition of conjunction requires $\Gate_{\mathrm{det}}$ to return $\Pass$ on
it. Hence the set of false accepts under $\Gate$ is a subset of the set of false
accepts under $\Gate_{\mathrm{det}}$, and the rates are over the same
denominator --- the mutants labelled incorrect, fixed before any gate ran
(\cref{thm:no-equivalent}) --- so the inequality follows. Symmetrically, a
mutant labelled \emph{correct} that $\Gate_{\mathrm{det}}$ rejects is also
rejected by $\Gate$, so the false rejects under $\Gate$ are a superset and
$\betafr$ can only rise.
\end{proof}

The theorem is one line, and that is the point: it holds for \emph{any} second gate,
with no assumption whatsoever about the judge's accuracy, calibration, or good
behaviour. Three consequences make it the argument for the composition rule the
engine enforces.

\emph{Adding a judge cannot make false accepts worse}, because whatever the judge
does, the deterministic gate must still pass. \emph{The risk it does carry is visible
and measured}: what a judge can worsen is $\betafr$, the direction this paper already
reports beside $\alphafa$ with its own denominator, so a composition that could
previously only be argued for informally has a cost with a number on it. And
\emph{prompt injection against the judge gains nothing in the direction that
matters} --- an adversary who convinces the judge to return $\Pass$ has not passed the
loop, because the deterministic gate is unmoved by text addressed to a model, so what
remains is a denial-of-service against the run's own progress.

\Cref{thm:judge-asymmetry} holds only for conjunction; under disjunction the
inequalities reverse and a judge becomes a way to \emph{bypass} a deterministic
gate. The shipped \texttt{CompositeGate} accordingly implements conjunction and
raises an error on any other mode, so the hypothesis of the theorem is
guaranteed by construction rather than left to a configuration file. This is the
same discipline as \cref{cor:vacuity-rejects}: where a wrong answer is
available, the engine should be unable to express it, not merely advised against
it.

\section{Implementation}
\label{sec:impl}

The engine is an open-source Python package (Apache-2.0) distributed on PyPI and
npm, with a command line, a Model Context Protocol server, a local web monitor,
and an embedding API. This section describes only what a reader needs in order to
judge the evaluation; the repository carries the rest.

\subsection{Layering, and the rule that keeps it}

The design is ports-and-adapters. A pure domain layer holds the models and the
rules; an application layer holds the loop procedure of \cref{alg:loop}; adapters
hold the concrete runners, gates, ledger writers and tracers; and exactly one
composition module is permitted to wire a concrete adapter to a port. The
constraint is enforced by an import-graph walk over the abstract syntax tree of
every module, which counts function-local and dynamic imports --- a deferred
import is still a dependency, and hiding an adapter import inside a function
would otherwise defeat the check entirely.

Above that sits one rule that does more work than any other:

\begin{quote}
\emph{The receipt log is the sole durable state. Every surface --- command line,
monitor, arena, embedding API, MCP --- is a projection of it and holds no state
of its own.}
\end{quote}

The rule's value is that a disagreement between two surfaces becomes impossible
rather than unlikely. Its cost is recorded in \cref{sec:soundness}: for several
releases the files needed to \emph{reconstruct} a run were written after the run
finished, so an interrupted run left a chain-valid log no surface could open. The
property held; the ability to exercise it did not.

\begin{figure}[t]
\centering
\begin{tikzpicture}[
  >={Stealth[round]}, font=\small,
  band/.style={draw, rounded corners=2pt, minimum height=8mm, inner xsep=6pt, align=center},
  dom/.style={band, fill=blue!7},
  app/.style={band, fill=green!8},
  ad/.style={band, fill=orange!9},
  srf/.style={band, fill=black!5},
  note/.style={font=\scriptsize\itshape, inner sep=1pt},
]
\node[dom] (dom) {domain\\{\scriptsize pure models + rules}};
\node[app, right=8mm of dom] (app) {application\\{\scriptsize the loop procedure}};
\node[ad, right=8mm of app] (ad) {adapters\\{\scriptsize runners $\cdot$ gates $\cdot$ ledger}};
\node[band, right=8mm of ad, fill=yellow!16] (comp) {composition\\{\scriptsize the only wiring}};

\draw[->] (app) -- (dom);
\draw[->] (comp) -- (ad);
\draw[->] (comp) -- ++(0,-9mm) -| (app);

\node[band, below=20mm of app, fill=red!7, minimum width=52mm] (log) {\textbf{receipt log} \;{\scriptsize append-only, hash-chained --- the only durable state}};
\draw[->] (app) -- node[note, right] {appends} (log);

\node[srf, below=11mm of log, xshift=-34mm] (cli) {CLI};
\node[srf, below=11mm of log, xshift=-12mm] (mon) {monitor};
\node[srf, below=11mm of log, xshift=10mm] (arena) {arena};
\node[srf, below=11mm of log, xshift=32mm] (mcp) {MCP};
\foreach \s in {cli,mon,arena,mcp} \draw[->] (log) -- (\s);
\node[note, below=1.5mm of mon, xshift=12mm] {every surface is a projection --- none holds state};
\end{tikzpicture}
\caption{Layering. Dependencies point inward and only \texttt{composition} may
name a concrete adapter, checked by an AST walk over every module. Below the
engine, the receipt log is the single durable artifact and each surface is a view
over it, so two surfaces cannot disagree about a run.}
\label{fig:architecture}
\end{figure}

\subsection{Gates, and why they are keyless}
\label{sec:impl-gates}

Three gate kinds cover the catalogue without any paid dependency: schema
validation of structured output; a purpose-built standard-library checker invoked
as a command; and a test suite. Each is deterministic, offline, and inspectable,
which is what makes a verdict something a reader can re-derive rather than
something they must accept.

The absence of a model in the judging loop is deliberate and it is what makes the
evaluation mean anything. If a gate were itself a model, $\alphafa$ would be a
property of that model on that day, and the corpus would measure a moving target.
It also removes an entire class of the correlation worry in
\cref{sec:future-independence} --- though, as that section argues at length, not
all of it, because the checker's \emph{code} still has an author.

The natural objection is that this must restrict the catalogue to trivial
requirements, since most things one wants of a document appear to need judgment.
Reading all of the local checkers in full says otherwise: in every case the
requirement that reads as subjective has been reduced by its author to a
decidable predicate, and the reduction is where the design work is.

\begin{table}[t]
\centering
\small
\begin{tabular}{@{}p{0.30\linewidth}p{0.62\linewidth}@{}}
\toprule
\textbf{Reads as a judgement call} & \textbf{What the gate actually computes} \\
\midrule
objective is measurable &
  a numeric \texttt{target}, a non-empty \texttt{unit}, and a
  \texttt{deadline} matching \verb|^\d{4}-\d{2}-\d{2}$| \\
prose is readable &
  mean words per sentence $\leq 25$ \\
runbook is complete &
  seven required heading strings are present, \emph{and} each has content
  beneath it \\
privacy policy is complete &
  six required disclosure phrases occur in the normalised body \\
\bottomrule
\end{tabular}
\caption{Requirements that sound as though they need a judge, and the decidable
predicate each was reduced to. The reduction is not free and it is not always
faithful --- the second row is a proxy and nothing more, and the third row's
``\emph{and} each has content beneath it'' clause is there because the version
without it passed a runbook with all seven headings and nothing under any of
them (\cref{sec:e1-defects}). The claim is not that judgement is unnecessary,
but that it belongs in authoring the predicate, where it is written down once
and can be audited, rather than in the loop, where it is re-exercised on every
attempt and recorded nowhere.}
\label{tab:decidable}
\end{table}

A model judge can still be admitted where no such reduction is available, and
\cref{thm:judge-asymmetry} gives the composition under which doing so cannot
raise the false-accept rate.

\subsection{Anti-tamper}

The ownership partition of \cref{def:ownership} is enforced by a wrapper around
whichever runner was selected, which rescans the workspace after every turn and
raises on three patterns: a protected anchor modified or deleted; a new file
appearing that matches a forbid glob; and, for test-suite gates, a
test-collection configuration file planted or changed mid-run. The third exists
because redirecting what a test runner collects can produce a pass without ever
touching the checker's bytes --- the attack in \cref{sec:e4}. A tampered
workspace never reaches the gate.

\subsection{Providers}

The engine drives agent CLIs as subprocesses under a strict environment
allowlist, and treats a provider's own report of its work as untrusted input
throughout: usage envelopes are parsed for accounting only, never to decide
completion. A provider that cannot report token usage is recorded as unmetered
rather than metered as free, so a spend cap on it fails closed and names the
provider.

\section{Evaluation}
\label{sec:eval}

Seven experiments. E1 and E2 measure the gates against held-out defects; E3
reports the confidence sequence and the estimand it belongs to; E4 tests the
architecture against workers that try to game it; E5 drives the same loops with
independent provider models; E6 turns the instrument on our own test suite; E7 applies a fresh operator family to the frozen gates.

Everything reported here is regenerated from receipt logs by released scripts
(\cref{app:repro}). Unflattering results appear in place rather than in a
limitations paragraph, for the reason given in \cref{sec:novelty}.

\subsection{Setup}
\label{sec:eval-setup}

The catalogue contains 69 loops spanning legal, financial, clinical,
infrastructure, documentation and software-engineering tasks. Gate kinds are
\texttt{command} (44 loops, a purpose-built stdlib checker), \texttt{jsonschema}
(10), \texttt{pytest} (9), \texttt{composite} (3) and two external-tool gates.
Each loop ships a seed carrying a specific defect, a prompt stating the
requirement, and a gate; \emph{convergence} means the shipped agent takes the
seed to a passing verdict.

\paragraph{The corpus unit.} The measurements below are over a \textbf{57-loop
corpus}. Eleven catalogue loops are outside it, each for a stated reason ---
unvendored SDKs, external tools not assumed present, one negative requirement,
and three whose requirement is a relation between artifacts and which Tier~2
therefore covers instead. \Cref{app:catalogue} enumerates them and records how
the exclusion is verified. Every rate in this section is over that corpus, and
no rate appears without the denominator it was computed on.

\subsection{E1 --- Tier-1: mechanical, blind mutants}
\label{sec:e1}

\paragraph{Method.} Operators are applied to the \emph{converged} artifact ---
the state the loop's own agent produced and the gate accepted --- so a caught
mutant is a genuine regression from a known-good state rather than a failure to
repair the shipped seed. Blindness is enforced by the AST assertion of
\cref{thm:blind}. Labels come from the operator, fixed before the edit
(\cref{thm:no-equivalent}), so no mutant is excluded for behavioural reasons and
none needs to be.

\paragraph{Result.}

\begin{table}[t]
\centering
\small
\begin{tabular}{lrr}
\toprule
 & \textbf{Tier 1} & \textbf{Tier 2} \\
\midrule
mutants                          & 233 & 38 \\
\quad destroying (label \emph{incorrect}) & 171 & 38 \\
\quad preserving (label \emph{correct})   & 62  & \textbf{0} \\
judged                           & 233 / 233 & 38 / 38 \\
gate errors (\Incap)             & 0 & 0 \\
\midrule
false accepts $\alphafa$ \;\emph{over destroying} & \textbf{0 / 171} & \textbf{0 / 38} \\
\quad Wilson 95\% upper bound    & $\leq 2.2\%$ & $\leq 9.2\%$ \\
false rejects $\betafr$ \;\emph{over preserving}  & \textbf{0 / 62} & \textbf{not measured} \\
\quad Wilson 95\% upper bound    & $\leq 5.8\%$ & --- \\
\midrule
loops covered                    & \textbf{57 of 69} & 10 \\
authors                          & mechanical operators & 5 agent CLIs (max share 21\%) \\
\bottomrule
\end{tabular}
\caption{The held-out mutant corpus after the defects of
\cref{sec:e1-defects,sec:e2} were fixed. \textbf{Each rate has its own
denominator and they are not the same number.} $\alphafa$ can only be measured on
mutants labelled defective; $\betafr$ only on mutants labelled conforming. Reporting both
against the full 233 and 38 would imply every mutant was checked for both, and for Tier~2
would assert $\betafr = 0$ over an empty set (\cref{rem:beta-vacuous}).}
\label{tab:corpus}
\end{table}

\begin{remark}[The results table had the defect this paper is about]
\label{rem:beta-vacuous}
Tier~2 contains no semantics-preserving mutants: all 38 are adversarial edits, so there is
no conforming case for a gate to wrongly reject. ``False rejects: 0'' in that column would
be true, and true because the set is empty --- \cref{def:vacuous}, in the results table of
the paper that names the class. The column reads \emph{not measured} instead.
\end{remark}

Zero false accepts is the headline and on its own it is weak, for two reasons. A gate
suite can reach $\alphafa = 0$ by being strict, which is why $\betafr$ is reported
beside it, and it is genuinely zero on the 62 conforming Tier-1 mutants, so strictness
is not the explanation. And zero over a finite corpus bounds the rate loosely: the
Wilson 95\% upper bounds in \cref{tab:corpus} are $2.2\%$ for Tier~1 and $9.2\%$ for
Tier~2, an order of magnitude apart. \textbf{Tier~2 is not a second confirmation
of $\alphafa = 0$}; it is an existence proof for a class of evasion Tier~1 cannot
construct, carrying four times the residual risk. What makes the numbers
informative is not their value but what it took to reach them.

\subsection{The corpus was worth building because it found things}
\label{sec:e1-defects}

Before the fixes, the same instrument found \textbf{47 gate defects} in a
catalogue that had passed review, shipped, and had every loop converging green.
Every one is an instance of \cref{def:vacuous}.

\textbf{How 47 is counted.} Forty-seven is the number of defects pinned as regression cases in the shipped
suite: 20 gates that accepted a defective mutant, 2 that rejected a conforming
one, 14 that still accepted after a partial fix, and 11 that filed a rejection as
an incapacity. It is recountable from
\texttt{tests/loops/test\_gate\_defect\_regressions.py}, and a gate that
regresses fails the build. A preprint circulated a count of 65, taken from
release notes rather than from the suite; it is not reproducible from the
repository and 47 is.

\paragraph{The scope defect that hid a quarter of the catalogue.}
The most consequential finding was in the measurement apparatus rather than in a gate.
Tier-1 coverage had been reported as 34 loops, because the harness resolved a gate by
reading a \texttt{command} gate's argv and returned nothing for any other kind. So all
24 \texttt{jsonschema}, \texttt{pytest} and \texttt{composite} loops produced no
verdict and \emph{vanished from the measurement without appearing in any count}. Not
as errors, not as exclusions: absent.

The guard written to prevent unprincipled exclusion was itself vacuous over
exactly those loops. It inspected only loops already outside the eligible set,
and both predicates defining that set read a \texttt{command} gate's argv. A
loop invisible to the eligibility test was invisible to its guard.
\Cref{thm:vacuity} in the apparatus that measures \cref{thm:vacuity}.

Fixing it required building gates through the engine's own composition path
rather than a parallel implementation in the harness, and making
``an eligible loop produced no verdict'' a failing assertion instead of a silent
zero. Coverage went \textbf{34 $\to$ 57}. \Cref{fig:coverage} shows where the 23
recovered loops sit by gate kind.

\begin{figure}[t]
\centering
\begin{tikzpicture}
\begin{axis}[
  width=0.86\linewidth, height=5.4cm,
  ybar stacked, bar width=15pt,
  ymin=0, ymax=48,
  ylabel={loops measured},
  symbolic x coords={command, jsonschema, pytest, composite},
  xtick=data,
  legend style={at={(0.5,1.16)}, anchor=north, legend columns=2, draw=none, font=\small},
  ylabel style={font=\small}, tick label style={font=\small},
  axis lines=left, enlarge x limits=0.18,
  nodes near coords style={font=\scriptsize},
]
\addplot+[fill=blue!30, draw=blue!55] coordinates
  {(command,34) (jsonschema,0) (pytest,0) (composite,0)};
\addplot+[fill=orange!45, draw=orange!70] coordinates
  {(command,7) (jsonschema,10) (pytest,6) (composite,0)};
\addplot+[fill=black!12, draw=black!35] coordinates
  {(command,3) (jsonschema,0) (pytest,3) (composite,3)};
\legend{measured before, recovered by the fix, excluded (stated reason)}
\end{axis}
\end{tikzpicture}
\caption{Tier-1 coverage by gate kind. Before the fix the corpus resolved only
\texttt{command} gates, so every \texttt{jsonschema}, \texttt{pytest} and
\texttt{composite} loop produced no verdict \emph{and appeared in no count} ---
not as errors, not as exclusions, absent. The guard against unprincipled
exclusion inspected only loops already outside the eligible set, and both
predicates defining that set read a \texttt{command} gate's argv, so it was
vacuous over exactly the loops that had gone missing. The grey band is the 11
loops excluded for the stated reasons of \cref{sec:eval-setup}.}
\label{fig:coverage}
\end{figure}

\paragraph{A verdict-contract defect that removed a third of the evidence.}
With coverage fixed, \textbf{84 of 233} mutants came back as \Incap\ --- ``the
gate could not run''. All 84 were the misfiling described in
\cref{sec:verdict}: 24 checkers were returning the incapacity code for an empty
or malformed \emph{worker-owned} artifact. Seven loops produced nothing but
errors, and because unjudged mutants left the denominator entirely, \emph{the
surviving false-accept rate looked perfect}. The rate was clean because the hard
cases had stopped counting.

\paragraph{An empty ledger balances.} \Cref{ex:ledger}, reached only once
\texttt{pytest} gates were measured at all --- i.e.\ only after the scope defect
was fixed. The loop that certified deleting the books had been shipping for
three releases.

\subsection{Authors document how their gate is permissive, not how it is strict}
\label{sec:disclosure-asymmetry}

Before the corpus existed, six defects were found by reading all of the local
checkers and then running them on hand-constructed inputs. That set is small,
but it is the only one for which we have both the defect and the author's own
written account of the gate's limits, and the two do not line up in a symmetric
way.

Classify each defect by whether the gate's documentation admitted the behaviour.
An \emph{undisclosed gap} is one where the docstring or README advertises
something the code does not deliver --- an ordinary bug. A \emph{disclosed
limitation} is one where the code documents its own narrowness accurately and the
loop's stated reason for existing is nonetheless not served: accurately described,
still not fit for purpose.

\begin{table}[t]
\centering
\small
\begin{tabular}{@{}llc@{}}
\toprule
\textbf{Direction} & \textbf{Defect} & \textbf{Disclosed?} \\
\midrule
false accept & data-processing terms: substring match inside a denial & yes \\
false accept & runbook: required headings present, no content under any & yes \\
false accept & alt-text: attribute absent, image passes & yes \\
false accept & citations: pattern derived from the reference set (\cref{ex:citations}) & \textbf{no} \\
\midrule
false reject & conventional commits: \texttt{!} breaking-change marker refused & \textbf{no} \\
false reject & dependency pinning: extras and spacing legal under the spec refused & \textbf{no} \\
\bottomrule
\end{tabular}
\caption{The six defects found by hand, by direction and by whether the gate's
own documentation disclosed the behaviour. Three of four false accepts were
disclosed; neither false reject was. The sample is too small to carry a rate,
and the pattern is nonetheless the one we would predict.}
\label{tab:disclosure}
\end{table}

Three of the four false accepts were disclosed. Neither false reject was. Six
observations support no rate and we claim none, but the direction has a mechanical
explanation: an author writing a checker
reasons forward from the cases they intend to catch, and the cases they know
they are \emph{not} catching are the ones they are thinking about while writing,
so those get a comment. Being too strict is not a case they are thinking about
at all --- a legitimate artifact the pattern happens to refuse is, by
construction, an input that never occurred to the person who wrote the pattern.
Self-knowledge extends to a gate's permissiveness and stops at its strictness.

This is the practical argument for reporting $\betafr$ at all, and for reporting
it with its own denominator rather than folding it into a single accuracy
figure. The false-accept direction is the one an author has partly audited
already. The false-reject direction is the one where measurement is the only
available source of information, because introspection provably does not reach
it. It is also the direction \cref{thm:judge-asymmetry} identifies as the one a
model judge can worsen --- so a harness that does not publish $\betafr$ has no
way to see the cost of admitting one.

\subsection{E2 --- Tier-2: semantic mutants authored blind}
\label{sec:e2}

\paragraph{Method.} Five independent agent CLIs were each shown a loop's
\texttt{PROMPT.md} and human-facing \texttt{README.md} --- its stated purpose ---
together with the converged artifact, and asked to author an edit that satisfies
the stated requirement's letter while defeating its intent. \textbf{The checker
was withheld.} Each mutant records the digest of exactly the material its author
saw, and a guard recomputes that digest from the repository, so widening the
authoring prompt is a failing test rather than a silent protocol change. No
author contributed more than 21\% of the tier.

\paragraph{Result.} 38 mutants, 38 judged, 0 false accepts, 0 false rejects ---
after fixing the \textbf{16 gate defects} the tier found. Grouped by evasion:

\begin{table}[t]
\centering\small
\begin{tabular}{@{}p{0.30\linewidth}p{0.64\linewidth}@{}}
\toprule
\textbf{loop} & \textbf{how the gate was evaded} \\
\midrule
\texttt{test-presence-per-module} & a test with no import and no assertion; a test importing a
  \emph{different} module; a test asserting on a standalone calculation; renaming the source so
  the import orphans \\
\texttt{nda-required-clauses}, \texttt{gdpr-dpa-terms} & the required heading kept, the obligation
  underneath it negated \\
\texttt{okr-measurable}, \texttt{dependency-pinning}, \texttt{conventional-commits} & the
  offending item \emph{deleted} rather than repaired \\
\texttt{dependency-pinning} & pinned to a version that was never released; pinned below the
  file's own stated floor \\
\texttt{conventional-commits} & subject substituted wholesale; a docs change typed \texttt{build};
  an already-conforming line rewritten \\
\texttt{secret-scan-keyless} & credential split across a concatenation; \texttt{password} renamed
  to \texttt{passwd}; the config deleted rather than moved to the environment \\
\bottomrule
\end{tabular}
\caption{The 16 Tier-2 defects. Every one satisfies the requirement as literally
stated and defeats it in substance --- the failure mode a mechanical operator
cannot construct, because constructing it requires understanding the purpose.}
\label{tab:tier2}
\end{table}

\paragraph{The tier earned its existence rather than being argued for.}
\texttt{test-presence-per-module} is the sharpest case. \textbf{Tier~1 excludes
it entirely}, because its requirement is a relation between a source module and
its test and single-artifact mutation cannot express that. \textbf{Tier~2 found
four defects in it.} The loop whose gate was structurally unmeasurable by the
cheap method was the loop with the most defects.

Deleting rather than repairing recurs across three unrelated loops, and it is
\cref{thm:vacuity} restated by an adversary: if the check quantifies over items
and every remaining item conforms, removing the offending item passes. The fix
in each case was an existence obligation with a floor derived from what the seed
shipped (\cref{rem:floor-judgement}).

\subsection{Two results we could have rounded away}
\label{sec:eval-corrections}

\paragraph{A mutant was mislabelled and the gate was right.} One author changed
an invoice quantity from \texttt{50} to \texttt{50.0} and reported it as breaking
a three-way match. JSON does not distinguish them and $50 = 50.0$; the gate
correctly passed. Taking the label on trust would have meant ``fixing'' a correct
finance gate to reject a float. The entry is recorded, keyed by content digest,
as a known mislabelling rather than deleted --- so the corpus carries its own
correction.

\paragraph{The ``not mechanically checkable'' category ships empty.} The corpus
provides a category for requirements no mechanical checker can decide, on the
principle that an unmeasurable question should be refused rather than passed.
Four Tier-2 findings were placed in it and all four turned out to be checkable
--- a wholesale subject substitution retains none of the original's content
words; a documentation change typed \texttt{build} contradicts the
specification's own reservation of \texttt{docs}; an already-conforming line
rewritten is detectable by pinning what the seed recorded as conforming; and a
fabricated version is answerable from shipped reference data, exactly as another
loop in the same catalogue already answers whether a citation exists.

Each filing had been made by reasoning from the \emph{shape} of the requirement
rather than by attempting an implementation. ``A checker cannot know a commit's
original meaning'' is true, and is not the question; the question is whether the
specific evasion leaves a mechanical trace. The category therefore remains
defined and empty, and an entry in it is now a claim that someone tried to write
the checker and failed.

\subsection{E3 --- The confidence sequence, and which estimand it covers}
\label{sec:e3}

Coverage was measured by simulation on sequences with the shape the product
actually produces --- attempts pooled across \emph{many} nodes, so the latent
propensity is not constant along the sequence --- checking at every $n$:

\begin{table}[t]
\centering\small
\begin{tabular}{lcc}
\toprule
\textbf{estimand} & \textbf{anytime-valid} & \textbf{fixed-time} \\
\midrule
$\logmean$ --- mean propensity of the attempts in the log & \textbf{1.0000} & 0.9590 \\
population marginal, 1 node pooled & 0.8333 & 0.6067 \\
population marginal, 6 nodes pooled & 0.9717 & 0.8450 \\
population marginal, 30 nodes pooled & 0.9983 & 0.9783 \\
population marginal, heterogeneous ($\rho = 3.5$) & 0.8267 & 0.6000 \\
\bottomrule
\end{tabular}
\caption{Empirical coverage at nominal 95\%, by estimand. The sequence is
anytime-valid \textbf{for $\logmean$}; coverage of a population marginal is a
different quantity that rises with the number of independent nodes pooled and
falls with their heterogeneity.}
\label{tab:coverage}
\end{table}

$\logmean = 1.0000$ is conservative rather than vacuous, and we verify that
rather than assert it: shrinking the radius to $0.6\times$ drops coverage to
$\approx 0.26$, to $0.3\times$ drops it to $\approx 0.04$, and shifting the centre
by $+0.10$ drops it to $\approx 0.46$. All three controls are in the suite, so a
change that made the interval vacuously wide would fail a test.

The widths matter as much as the coverage, and a reader cannot judge sharpness from a
coverage rate alone. On this run the fixed-$n$ empirical-Bernstein 95\% upper bounds are
$\alphafa \le 0.0664$ and $\betafr \le 0.0739$; the anytime-valid bounds over the same data
are $\alphafa \le 0.1916$ and $\betafr \le 0.2136$; and the Wilson bound quoted in the
abstract is $0.0897$. The anytime-valid interval is roughly three times wider, which is the
price of a guarantee that holds at every stopping time rather than at one chosen in advance
--- and it is the right number to quote for a corpus that will grow, because a fixed-$n$
bound re-read after adding mutants is no longer a 95\% bound.

The distinction \cref{tab:coverage} draws is the whole of the practical advice
here: a coverage figure is uninterpretable without naming its estimand, and the
two columns of that table differ by more than a factor of the gap most readers
would tolerate. We report $\logmean$ because a receipt log is an audit object,
and the estimand is named wherever the number appears.

\subsection{E4 --- Workers that try to game the gate}
\label{sec:e4}

Adversarial workers exercise the architecture directly rather than the checkers.
Each asserts success without earning it; none reaches $\Done$, and the receipt
log shows why in every case.

\begin{itemize}[leftmargin=1.4em, itemsep=1pt]
  \item \textbf{Declare completion.} A worker that reports it has finished without
        modifying the artifact. Its claim is read by no guard (\cref{thm:soundness}):
        the gate runs, rejects, and the loop retries.
  \item \textbf{Edit the checker.} Refused before the gate is reached. The anti-tamper
        guard rescans after every turn and raises, so the loop never evaluates against a
        tampered workspace.
  \item \textbf{Plant a collection config.} For test-suite gates, adding or altering a
        \texttt{conftest.py}, \texttt{pytest.ini} or equivalent can redirect what is
        collected and produce a pass without touching the checker's bytes. Treated as
        anchor tampering for exactly this reason.
  \item \textbf{Delete the artifact.} The \cref{thm:vacuity} attack. Now a rejection
        wherever an existence obligation is declared --- and E1/E2 are how we know
        where it was not.
\end{itemize}

\Cref{thm:soundness} predicts the first three, and the ownership partition being
enforced rather than requested predicts them jointly. The fourth is not
architectural, which is why \cref{sec:soundness} is separate from
\cref{sec:vacuity}: the architecture routes the decision to the gate, and whether
the gate then decides correctly is an empirical question the corpus answers.

\subsection{E6 --- The instrument, turned on ourselves}
\label{sec:e6}

The corpus measures gates in a catalogue. The same question can be asked of the
test suite that guards the engine, and we asked it, because a paper arguing that
verification layers go unmeasured should not exempt its own.

Scanning for \cref{def:vacuous} found \textbf{14} tests whose assertions all sat
inside a quantification over a collection that nothing proved non-empty. The scan
traversed 222 test files carrying 2{,}811 test functions; after the 14 were repaired,
0 remained.
Confirmed by stubbing tool registration to a no-op, after which three security
guards still passed:

\begin{itemize}[leftmargin=1.4em, itemsep=1pt]
  \item ``no tool accepts a subject identity'' --- satisfied by there being no tools;
  \item ``no tool takes a filesystem path'' --- likewise;
  \item ``no discovery tool is secret-shaped'' --- likewise.
\end{itemize}

Plus a credential-leak guard that scanned a run directory and passed on an empty
one, and a test named for a property it never checked: it asserted a sentinel was
not declared \emph{twice} and passed on zero.

Two details we record because they are the informative part. One of the fourteen
was written earlier the same day, by an author who had spent that day fixing this
class in the catalogue. And the scanner itself had the defect: it detected only
the ``assert inside the loop'' shape and missed the ``accumulate, then assert the
accumulator is empty'' shape, which \emph{looks} unconditional, until it was
run against that same freshly written test. The guard that ships catches both and
includes a deliberately vacuous case it must flag, so a refactor that disables
the scan fails rather than passing silently.

\textbf{What E6 is evidence for.} Not that we are careless. That the class survives attention: it survived review,
it survived the author's own fresh knowledge of it, and it survived the first
tool built specifically to find it. That is the argument for measuring a
verification layer instead of reasoning about it, made with the strongest
evidence we have: evidence against ourselves.

\subsection{E5 --- Cross-model convergence}
\label{sec:e5}

\paragraph{Question.} Does the same bounded loop, driven by independent provider
models, converge through the same gate? The engine's value proposition assumes
the gate is what decides, not the model; if outcomes tracked the provider, the
gate would be decoration.

\paragraph{Method.} Four loops spanning all three keyless gate kinds
(\texttt{command}, \texttt{pytest}, \texttt{jsonschema}), selected before any run
and fixed, each driven by four independent agent CLIs. Seed, gate, bounds and
forbid set are identical across arms; only the worker changes. Terminal state
comes from the engine's own exit path and the attempt count from the receipt log
and never from anything a CLI said about itself. Attempt ceiling 4.

\paragraph{What is deliberately not claimed.} This is not a capability benchmark.
Each loop carries one seeded defect and one fixed mechanical check; a provider
needing more attempts on one loop has not been shown to be worse at anything
general. The quantity of interest is whether the \emph{gate} admits the same
outcomes regardless of authorship.

\paragraph{Two limits, stated.} A fifth CLI (\texttt{codex}) ships a profile and is
excluded: its subscription was over its periodic budget for this window, so a
five-provider run was not producible. \textbf{Four ran; the paper says four.} And
because arms are driven through the engine's generic shell runner, which does not
parse provider usage envelopes, token and cost figures are unavailable here and are
reported as such rather than estimated. Terminal state and attempts-to-\textsc{done},
the two quantities the question turns on, are read from receipts and are exact.

\paragraph{Result.} \textbf{16 of 16 arms reached \textsc{done}, every one on its
first attempt.}

\begin{table}[t]
\centering\small
\begin{tabular}{llcccr}
\toprule
\textbf{loop} & \textbf{gate kind} & \textbf{provider} & \textbf{terminal} & \textbf{attempts to \textsc{done}} & \textbf{wall (s)} \\
\midrule
\texttt{citation-existence-check} & command    & A & \textsc{done} & 1 & 26.6 \\
                                  &            & B & \textsc{done} & 1 & 27.5 \\
                                  &            & C & \textsc{done} & 1 & 54.9 \\
                                  &            & D & \textsc{done} & 1 & 86.6 \\
\midrule
\texttt{bug-fix-red-green}        & pytest     & A & \textsc{done} & 1 & 24.3 \\
                                  &            & B & \textsc{done} & 1 & 36.0 \\
                                  &            & C & \textsc{done} & 1 & 37.7 \\
                                  &            & D & \textsc{done} & 1 & 79.5 \\
\midrule
\texttt{catalog-required-fields}  & jsonschema & A & \textsc{done} & 1 & 59.1 \\
                                  &            & B & \textsc{done} & 1 & 60.4 \\
                                  &            & C & \textsc{done} & 1 & 64.6 \\
                                  &            & D & \textsc{done} & 1 & 151.7 \\
\midrule
\texttt{record-completeness}      & command    & A & \textsc{done} & 1 & 25.7 \\
                                  &            & B & \textsc{done} & 1 & 36.5 \\
                                  &            & C & \textsc{done} & 1 & 48.3 \\
                                  &            & D & \textsc{done} & 1 & 270.6 \\
\bottomrule
\end{tabular}
\caption{Cross-model convergence: four independent provider models, four loops, one
gate each. \textbf{Letters are assigned within each loop by ascending wall time and
do NOT identify a provider across rows}; letters are not mapped to vendor names
because sixteen observations cannot support a product comparison and a named table
would invite one. Wall time is reported only to show the arms differ in something
while agreeing on everything the gate decides, and must not be read as a
per-provider figure. Tokens are unavailable through this path and are not estimated.}
\label{tab:e5}
\end{table}

The uniformity is the finding, and it is a modest one stated precisely: on these four
loops, terminal state and attempt count were invariant across providers whose
wall-clock behaviour differed substantially. That is consistent with the gate, rather
than the model, determining the outcome.

\paragraph{Two claims this table used to make, withdrawn.} An earlier caption
asserted a fixed letter-to-provider mapping. The data cannot support one: no single
assignment yields ascending times in all four loops, because the ordering genuinely
changes between them. Detecting it required regenerating the table from its artifact
rather than reading it, and the table was rebuilt in the process, having been carried
over from a superseded twelve-arm run. The second withdrawal is the spread between
fastest and slowest arm, once quoted as a finding. Re-running all sixteen cells
against an unchanged system moved individual times by factors from $0.83\times$ to
$1.86\times$, one cell nearly doubling, so a cross-provider ratio from a single run
per cell is substantially measurement noise. The general point outlasts both
retractions: an experiment that runs each cell once cannot distinguish a property of
the system from a property of the afternoon. Nor is this evidence about harder loops
--- the seeded defects here are ones any competent agent repairs in a single attempt,
a ceiling effect we should not read past, and a loop whose defect is genuinely
difficult would separate the arms.

\subsection{A near-miss worth more than the result}
\label{sec:e5-nearmiss}

The first execution of E5 produced a clean, publishable, and \emph{false}
finding: one provider reported \textsc{halt} on every loop, at exactly the
attempt ceiling, while the other three reached \textsc{done} on their first
attempt. Written up as it stood, that is a claim about a third party's product.

It was our misconfiguration. Probing the CLI directly showed two things, both
silent:

\begin{enumerate}[leftmargin=1.6em, itemsep=1pt]
  \item Its file-writing tools were being auto-denied, because the invocation
        carried no permission grant. It exited successfully, reporting
        \emph{``no output produced --- a tool required a permission that headless
        mode cannot prompt for, so it was auto-denied.''}
  \item It does not treat the process working directory as its workspace. Asked
        to create a file in the current directory, it created one in its own
        scratch directory and reported ``Created and verified'' with a path
        outside the workspace the gate reads.
\end{enumerate}

So the agent ran, believed it had succeeded, said so, and the gate, correctly,
saw an unchanged workspace and rejected. Four times. Then the loop halted,
exactly as \cref{thm:loop-termination} guarantees.

Investigating the cause also surfaced two defects in the engine, both of the same
shape and both now fixed: one runner invoked its CLI with flags that binary
rejects outright, and the subprocess environment allowlist withheld a variable a
CLI needs in order to load its configuration. Each produced an agent that ran and
did nothing, and each was invisible from the ledger. They had shipped. The
experiment found them because it was the first thing to drive those runners
against real binaries end to end, which is the same reason the mutation corpus
found forty-seven gate defects a green test suite did not.

\subsection{E7 --- The post-freeze corpus, and what a frozen gate still admits}
\label{sec:e7}

Every number in \cref{sec:e1,sec:e2} is a saturation figure. The operators that
found the 47 defects are the operators that scored the gates after repair, so
their zero establishes that \emph{this corpus is exhausted against these gates}
and nothing further. This experiment answers that objection.

\paragraph{Protocol.} The gates are frozen exactly as shipped. A fresh operator
family is applied, and \textbf{nothing it finds is repaired before it is
recorded}: fixing first and measuring after would reproduce the circularity
the experiment exists to break. Eligibility, labelling and blindness are
unchanged, so the two measurements are comparable.

\paragraph{Why a fresh family was available at all.} The shipped destroying
operators are \texttt{empty\_file}, \texttt{whitespace\_only} and
\texttt{filler\_text}: three operators that all destroy the \emph{whole}
artifact. That bluntness is what makes their label certain without consulting a
gate, and it is also a single probe wearing three coats. It can only find a gate
satisfied by the \emph{absence} of what it checks.

The post-freeze family carries a different claim with the same certainty:
\emph{an artifact that keeps its structure and loses its content serves no
stated purpose either.} A payload whose every required key is present and empty.
A document of headings with nothing beneath them. A module whose every function
is \texttt{pass}: it imports, it collects, it runs, and it asserts nothing. A
gate that checks shape rather than substance is invisible to the shipped
operators, which delete the shape too, and reachable by these.

\paragraph{Result.}

\begin{table}[t]
\centering\small
\begin{tabular}{lrr}
\toprule
 & \textbf{Tier 1 (saturated)} & \textbf{E7 (post-freeze)} \\
\midrule
loops covered              & 57 & 54 \\
destroying mutants         & 171 & 60 \\
preserving mutants         & 62 & 35 \\
gate errors (\Incap)       & 0 & 0 \\
\midrule
false accepts $\alphafa$   & 0 / 171 & \textbf{14 / 60 = 23.3\%} \\
\quad Wilson 95\%          & $\leq 2.2\%$ & $[14.4\%,\ 35.4\%]$ \\
false rejects $\betafr$    & 0 / 62 & \textbf{0 / 35} \\
\quad Wilson 95\% upper    & $\leq 5.8\%$ & $\leq 9.9\%$ \\
\bottomrule
\end{tabular}
\caption{The saturated corpus and the post-freeze corpus against the same frozen
gates. The left column is what an exhausted operator set reports; the right is
what a fresh one finds behind it. \textbf{The comparison is the result, not
either column.} A reader should take from this that a gate suite reporting zero
against the corpus that repaired it has established exhaustion and not quality
and that the remedy is another operator family, which is cheap, rather than
more mutants from the same one, which is free and worthless.}
\label{tab:e7}
\end{table}

\paragraph{Twenty-one became fourteen, and the arithmetic is the finding.}
The raw count was 21 apparent false accepts over 67 destroying mutants. We
adjudicated \emph{all} 21 against each loop's stated requirement (the whole
set rather than a sample, because the set is small enough) and seven were
\textbf{mislabelled by the operator}. They leave the numerator and the
denominator both, since a mutant nobody can label was never a valid trial. The
adjudication is released with the corpus.

The seven are not seven accidents. They are two failures of one assumption, and
stating it converts the limit on ground-truth-by-construction
(\cref{thm:no-equivalent}) from a qualitative caveat into a checkable side
condition:

\begin{quote}
A content-removal operator's \emph{destroying} claim is certain only for a
requirement \textbf{universally quantified over content the operator removes}.
It is false for a \textbf{negative} requirement, ``must not contain $X$'':
because removing content removes $X$; and false for a requirement satisfied by
\textbf{structure the operator preserves}: a name, a signature, a key, a
heading.
\end{quote}

\texttt{no-hardcoded-sleep} asks that no test contain a hardcoded sleep, and
stubbing bodies deletes the sleeps. \texttt{type-annotations-present} asks that
signatures be annotated, and stubbing bodies keeps every signature. In both the
gate was right and the operator was wrong. The rule is now enforced by
\emph{refusing to emit}, and it fails closed: one loop whose requirement names
no quantifier is withheld although its mutant was adjudicated genuinely
defective. One true positive lost, no unlabellable trial admitted, and we report
the trade rather than tuning the predicate until it flatters the result.

\paragraph{What the fourteen were.} Ten are vacuous satisfaction over an emptied
collection (``every dataset's license is in the allowlist'', ``every SKU's
running balance stays non-negative''), each verified by running the frozen
checker directly on the emptied artifact. One is a test-suite gate quantifying
over records the operator removed. The remaining three are the finding we would
put first.

\begin{remark}[An incompletely applied fix, invisible to the corpus that motivated it]
\label{rem:incomplete-fix}
\texttt{runbook-completeness} was the hand-found defect that motivated the
content-presence rule: seven required headings present, nothing under any of
them, gate passed. It was repaired, and its checker now requires content beneath
each heading.

Three sibling gates have the identical shape and did not receive the fix.
\texttt{privacy-policy-completeness} checked headings and nothing else.
\texttt{nda-required-clauses} and \texttt{gdpr-dpa-terms} checked that a required
heading was present and that its body did not \emph{void} the obligation, which was
the earlier negation repair, but never that the body existed at all. A data
processing agreement whose clause headings are present and whose clause bodies
are empty is worthless, and it passed.

The shipped operator family cannot reach this. \texttt{empty\_file} deletes the
headings too, at which point all four gates correctly report the sections
missing. The defect lives precisely in the gap between destroying an artifact
and hollowing it, and only an operator aimed at that gap can see it. \emph{A
repair applied to the instance that revealed a class, and not to the class, is
indistinguishable from a repair applied to the class --- until a different
instrument arrives.} All three are fixed in the release following this
measurement, and the number above is the pre-fix one, which is the measurement.
\end{remark}

\subsection{E8 --- A class the instrument cannot generate}
\label{sec:e8}

\paragraph{Question.} Every instrument in E1--E7 was built by the people who wrote
what it measures, which is the first threat \cref{sec:threats} names and the one it
cannot answer from inside. E8 asks the answerable form: does the instrument miss a
class of defect that its own operators are structurally unable to express?

\paragraph{Method.} Two commercial agent CLIs, each at its highest documented
reasoning setting, were given the same brief at the same commit of the engine and
run in separate detached worktrees so neither could observe the other. Findings
were reproduced by hand before acceptance and each accepted repair was
mutation-tested. \Cref{app:audit} records the protocol, including what it cost to
learn that an auditor is an agent and therefore writes.

\paragraph{Result: yes, and the mechanism is specific.} Every accepted finding fell
into a class this paper had already named, with one exception that matters. Three
of them populate \cref{def:self-attested}, and \textbf{no operator in either
mutant tier can produce one}: both tiers mutate worker-owned artifacts, and a
self-attested check is a defect in the harness that reads them. The instrument's
blindness is therefore not a sampling gap that more mutants would close --- it is a
consequence of what the operators act on. E1's coverage over gates says nothing
about the code that decides whether a gate's answer may be believed.

The three are \cref{ex:lying-detail}, \cref{ex:accepting-stub} and
\cref{ex:forged-provenance}: a rule about a verdict's reason that consulted the
verdict, a test about a platform's install policy that consulted a stub written to
agree, and a record of which gate ran that read the answer off the gate. They share
no code, and \cref{sec:second-axis} states what they share instead.

\paragraph{A second, cheaper finding: a guard evaluated lazily is a guard tested on
one platform.} Acting on the result above removed a field from a capability record.
Three test modules gated themselves at import time on
\texttt{not (live.seatbelt or live.bubblewrap)}. Python evaluates \texttt{or} left
to right; the authors' platform makes the first disjunct true, so the second was
never evaluated where the suite was run. On any other platform it raises during
\emph{collection}, so those modules do not fail --- they do not exist. The full
suite passed. The type-checker found it, because it reads code the suite does not
execute. Stated generally: \emph{a precondition behind a short circuit is tested
only on hosts where the circuit does not fire}, and which hosts those are cannot be
seen from one machine.

\paragraph{What E8 does not establish.} Two auditors, one codebase, one release, no
held-out set. There is no recall figure and we compute none, because we cannot say
what either auditor missed; the filter that accepted their findings was also ours.
Both are commercial products at versions we cannot pin, so E8 is not replicable in
the sense E1--E3 are. It is a case report and we label it one. What a case report
can establish is existence, and the existence claim is exactly the paragraph above:
a defect class outside the instrument's reach, demonstrated rather than
conjectured, by parties with no authorship of the code --- in a codebase whose
authors had spent the preceding weeks studying that class.

\section{The manifest as a spend contract}
\label{sec:spend-contract}

The two guarantees established so far answer questions an engineer asks. Termination
(\cref{thm:loop-termination,thm:graph-termination}) answers \emph{will this stop}; soundness of
\textsc{done} (\cref{thm:soundness}) answers \emph{is the completion real}. Neither answers the
question that actually decides whether a run is authorised, which is asked by whoever owns the
budget:

\begin{quote}
\emph{What is the most this can cost me, and can I read that number before it starts?}
\end{quote}

This section states what the harness guarantees about spend, exactly, including the term it does not
cover. The distinction between a bound that is \emph{declared} and one that is \emph{enforced} carries
the whole result.

\subsection{Three ceilings and one derived bound}

A loop manifest declares its envelope in four numbers, all readable before the run:

\begin{itemize}[leftmargin=1.4em]
  \item \texttt{max\_iterations} ($m$) --- the hard attempt cap. Work bound.
  \item \texttt{max\_tokens} --- cumulative token budget across attempts, or unset.
  \item \texttt{max\_wallclock\_s} ($W$) --- elapsed time for the run.
  \item \texttt{no\_progress\_window} ($w$) --- the soft bound: consecutive attempts that change
        nothing before the run halts.
\end{itemize}

For a graph, \cref{cor:manifest-readable} composes these into the closed form
$(1+R)\sum_{v}m_v$, every term of which is a literal in the manifest. That corollary is what lets an
operator authorise a run against a number rather than against a hope.

The soft bound is the one that pays for itself in practice. \cref{prop:no-progress} gives the
effective attempt bound $\min\{m,\, w + j^{\star}\}$, where $j^{\star}$ is the number of attempts on
which the worker changed the artifact. A loop that stalls therefore stops after $w$ wasted attempts
rather than at $m$, and \cref{sec:bound-utilisation} reports that this bound is exact rather than
merely valid on every case we measured.

\subsection{What the wallclock ceiling promises}

The caveat is inside the claim rather than beside it.

\begin{proposition}[Bounded elapsed spend]
\label{prop:spend-bound}
Let a run declare \texttt{max\_wallclock\_s} $= W$ and \texttt{handoff\_reserve\_s} $= r$, with
$0 \le r < W/2$. Let $G$ be the gate's own timeout. Then:
\begin{enumerate}[leftmargin=1.6em]
  \item no attempt begins after $W - r$ elapsed;
  \item no attempt continues past $W - r$ elapsed;
  \item total worker time, including any wind-down turn, is bounded by $W$;
  \item total run time is bounded by $W + G$.
\end{enumerate}
\end{proposition}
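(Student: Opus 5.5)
The argument reads the four clauses off the control flow of one attempt as drawn in \cref{fig:control-flow}, with time measured as elapsed seconds $t$ from run start. That figure has three relevant enforcement points. The \emph{budget guard} runs before a turn and compares $t$ against $W-r$. The \emph{turn clamp} gives each worker turn the deadline $\min\{W-r-t_{\mathrm{start}},\ \text{deployment turn timeout}\}$. The \emph{gate invocation} is deliberately unclamped but bounded by its own timeout $G$. I would first state the modelling assumption explicitly: the clamp is a hard pre-emption, so a turn whose deadline fires is terminated and the attempt ends in \textsc{halt}. Every clause rests on this assumption, and it is exactly the ``enforced, not merely declared'' distinction the section opens with.

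Clause 1 follows from the budget guard. An attempt begins only by passing that guard, and the guard admits only when $t < W-r$. Clause 2 follows from the clamp. Suppose an attempt begins at $t_0 < W-r$. Its turn is given the deadline $t_0 + (W-r-t_0) = W-r$, and pre-emption ends it no later than that, so no work turn extends past $W-r$. The gate is not worker time, so it does not count against ``continues'' in the sense of a worker turn. I would state that reading explicitly, since otherwise clause 2 would contradict clause 4.

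Clause 3 needs the wind-down turn to be the only worker activity after $W-r$. I would show that at most one such turn is admitted, and that it is clamped to the reserve window $[W-r, W]$. All ordinary work turns lie in $[0, W-r]$ by clause 2, so worker time is at most $(W-r)+r = W$, which is the partition identity of \cref{fig:spend-axis}. The hypothesis $r < W/2$ is only needed so that $W-r > r > 0$ is a non-degenerate window. It plays no role in the arithmetic beyond this, and I would flag that.

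Clause 4 is the main obstacle, because it requires bounding the gate tail, and the gate's start time is not obviously fixed. My first attempt would argue as follows. Any gate invocation follows a turn that ended by $W-r$ (an ordinary attempt) or by $W$ (wind-down, if that turn is gated at all). At most one gate can be in flight when the worker clock reaches its limit, since gates are sequential within an attempt and no new attempt is admitted. That gate ends within $G$ of its start, so the total is at most $W+G$. The delicate point is whether the wind-down turn is followed by a gate. If it is, the bound is $W+G$. If it is not, the run in fact ends by $\max\{W, (W-r)+G\}$. Either way the run is within $W+G$. I would also note that controller overhead between steps (digesting, ledger appends) is assumed negligible or folded into the clamped intervals; the bound is stated up to that bookkeeping, and I would say so rather than hide it.
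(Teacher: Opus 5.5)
Your proposal is correct and follows the paper's own proof in \cref{app:spend-proof} essentially step for step: the lap-head budget guard against $W-r$ gives clause~1, the runner's clamp to the smaller of the remaining work budget and the deployment timeout gives clause~2, disjoint ordinary turns in $[0,W-r]$ plus at most one wind-down give clause~3, and sequential unclamped gates give $\max\{W,(W-r)+G\}\le W+G$ for clause~4, with the same observation that $r<W/2$ does no arithmetic work. Your case split on whether the wind-down is gated and your flag on bookkeeping overhead are refinements the paper leaves implicit; the one sub-claim to soften, which the paper shares in bounding the wind-down by $W-c\le r$, is that the wind-down lies in $[W-r,W]$, since a no-progress or attempt-cap halt can trigger it earlier, but clause~3 still holds because the ordinary turns then lie in $[0,c]$, where $c$ is the wind-down's start, and the wind-down, clamped against the full ceiling, lies in $[c,W]$.
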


\begin{proof}[Proof]
The four clauses are proved in \cref{app:spend-proof} from the controller's step structure: a
monotone clock, the budget guard at every lap head, the runner's clamp on its blocking wait,
and the sequentiality of gate invocation. The argument is elementary and is written out because
two of its steps were wrong in shipped code --- the guard compared against the declared ceiling
rather than the work ceiling, and the remainder handed to a runner was computed once at wiring
time, which makes it a constant rather than a remainder.
\end{proof}

\begin{remark}[What $G$ actually is in the shipped wiring]
\label{rem:gate-tail}
$G$ is written as a separate symbol because it is a separate \emph{concept} --- the
guarantee is ``the declared ceiling plus one gate'', whatever that gate costs. In our
implementation that number is not small: the composition root wires each gate's timeout to
\texttt{max\_wallclock\_s}, so $G = W$ and the worst case is $2W$. That is a deliberately
generous gate allowance and not a tight one; a deployment wanting a tighter total should
set a smaller gate timeout, and the guarantee is stated in terms of $G$ precisely so that
doing so tightens the bound without changing the result.

Two cases where the clause does not apply, stated because a reader tracing the code will
find them. A loop that declares no \texttt{max\_wallclock\_s} at all gets a platform
default of one hour on the manifest path --- so $W$ exists and the bound holds --- but
\texttt{Bounds} constructed directly in code with no ceiling passes no timeout to the
gate, and a gate subprocess with no timeout has no $G$. The proposition's hypothesis
requires a declared $W$, so nothing false is claimed; what would be false is reading
clause~(4) as a property of every possible run.
\end{remark}

Enforcing a wall-clock ceiling \emph{during} a unit of work is long-established
infrastructure practice, and we claim none of it. A Kubernetes Job's
\texttt{activeDeadlineSeconds} terminates the pod irrespective of what its code is doing;
Temporal's \texttt{scheduleToCloseTimeout} bounds an activity across all its retries and
is readable from the workflow definition before it runs; AWS Step Functions and Argo
Workflows expose per-task timeouts and retry caps in the same way. Any claim that this
paper is the first to stop a task at a declared time would be refuted by a decade-old
manifest field.

Three things in \cref{prop:spend-bound} are not supplied by those systems, and they are
the contribution. The ceiling is a field in the \emph{workload's own} manifest rather
than the scheduler's, so the number an operator authorises and the number the harness
enforces are the same literal. The reserve \emph{partitions} that number instead of
extending it, which is what lets a hard stop leave usable work behind without changing
any bound (\cref{sec:handoff-reserve}). And the gate is deliberately exempt, so a
verdict is never truncated into a non-verdict --- the failure mode a general-purpose
deadline has no way to distinguish and therefore no way to avoid.

\begin{remark}[Two limits on one attempt, and why the receipt distinguishes them]
An attempt is subject to two ceilings with different owners. \texttt{max\_wallclock\_s} is declared by
the loop author and is a spend contract; the runner's own per-turn timeout is set by the operator
deploying it and expresses what that deployment tolerates for a single turn. The tighter binds. When
one fires, \emph{which one} is recorded: exceeding the declared ceiling halts the run and names the
bound, while exceeding the deployment's own timeout is an error. Collapsing the two would file ``your
budget ran out'' under the same heading as ``the harness broke'', and an operator reading the receipt
cannot act on the first if it is reported as the second.
\end{remark}

\subsection{Reserving, not extending: a bound that does not destroy what it interrupts}
\label{sec:handoff-reserve}

A hard bound must be hard or it bounds nothing. But a bound that reports only \emph{budget exceeded}
discards work already paid for, and the consequence compounds: the next run begins from the same
initial state with the same budget and no record of what the previous one established. A task
requiring more than one budget window can then never complete, however many windows are spent on it.
The bound stops being merely strict and becomes counterproductive.

The obvious remedy is to grant one extra turn on expiry so the worker can summarise. That remedy is
wrong: if exceeding $W$ buys an additional turn then $W$ silently means ``$W$ plus however long a
summary takes'', and that second term is precisely the quantity the operator cannot read from the
manifest --- a bound that is declared and not enforced, reached from the opposite direction
(\cref{sec:declared-vs-enforced}).

So the reserve \emph{partitions} the ceiling instead of extending it:

\begin{equation}
\underbrace{(W - r)}_{\text{work}} \;+\; \underbrace{r}_{\text{wind-down}} \;=\; \underbrace{W}_{\text{declared}}
\label{eq:reserve-partition}
\end{equation}

\begin{corollary}[The courtesy is free of the guarantee]
\label{cor:reserve-preserves-bounds}
Every termination and spend result in this paper holds verbatim under
\cref{eq:reserve-partition}, for any $r$ in range.
\end{corollary}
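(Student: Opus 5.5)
The plan is to go through the results one at a time and show that none of them reads $r$ or the work ceiling $W-r$. Where a result does mention time, it mentions only the declared literal $W$, which \cref{eq:reserve-partition} leaves unchanged. The termination results come first. \Cref{thm:loop-termination}, \cref{prop:no-progress}, \cref{lem:one-admission}, \cref{thm:graph-termination} and \cref{cor:manifest-readable} are all stated in $m$, $m_v$, $w$, $j^{\ast}$, $\rounds$ and $|\Nodes|$, with $W$ appearing nowhere. Their proofs induct on the attempt counter $k$ and on the lexicographic pair $(\rounds-\rho,\Sigma)$. A reserve is a wall-clock quantity, so it can at most add an extra exit from the loop body, namely the budget guard at the lap head firing at $W-r$ instead of at $W$. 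I would argue this structurally. Such an exit only returns \textsc{halt} earlier; it performs no attempt and writes \texttt{attempted=false}, as \cref{fig:control-flow} shows. So it never increments $k$, never resets a budget, and never opens a repair boundary. Every counting argument therefore goes through unchanged, and each bound holds a fortiori, in the same way that \cref{prop:no-progress} only tightens \cref{thm:loop-termination}.

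The same control-flow reading handles soundness. The reserve introduces no new return point producing \textsc{done}. Line 2.6 of \cref{alg:loop} is still the only one, so \cref{lem:node-level}, \cref{thm:soundness} and \cref{cor:third-party} are untouched. The only new row the reserve can cause is a halt row or a wind-down row, and neither carries a \textsc{pass} that the theorem depends on.

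The spend results need the real argument. \Cref{prop:spend-bound} already carries $r$ as a parameter with the hypothesis $0\le r<W/2$. For any $r$ in that range, clauses (1) and (2) are statements about $W-r$, and clauses (3) and (4) are statements about $W$ and $W+G$. The step I need is that the wind-down turn is clamped to the remaining time, which is at most $r$, so that work time plus wind-down time is at most $(W-r)+r=W$ by \cref{eq:reserve-partition}. I expect this to be the main obstacle. The difficulty is showing that the wind-down turn is itself subject to the runner clamp with a freshly computed remainder, not a remainder fixed when the runner was wired, which is exactly the shipped bug mentioned after \cref{prop:spend-bound}. Otherwise the wind-down would extend $W$ instead of partitioning it. I would discharge this by citing the clamp step of \cref{app:spend-proof}, and I would note that the bound is read against the declared $W$ whatever the value of $r$. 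That is why the word ``verbatim'' is justified: no right-hand side of any bound mentions $r$ except where the proposition's own clauses already name it.
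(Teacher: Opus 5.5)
Your plan matches the paper's in substance: nothing stated in terms of the declared $W$ moves, and the one new cost the reserve introduces is bounded. Treating the relocated guard as an earlier \Halt{} is a point the paper's two-sentence proof leaves implicit. But your termination half has a hole exactly where the paper's proof puts its weight. You say the reserve ``can at most add an extra exit from the loop body'' and conclude that every counting argument goes through unchanged. What the reserve actually exists to add is a wind-down worker turn after every bound halt, and your termination paragraph never accounts for that turn. ``Verbatim'' depends on it: were the turn counted as an attempt, \cref{thm:loop-termination} would give $m+1$ rather than $m$, and \cref{eq:graph-bound} would change with it. The paper's second sentence is precisely this step. To close it, observe three things. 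The wind-down is entered only from a terminal decision, which never returns to the lap head, so it occurs at most once per loop run and lies outside the body whose executions $k$ counts. It cannot change the terminal state. And it is clamped, which your spend paragraph already invokes. Then $k$, $\Sigma$ and $\rho$ are untouched.

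There are also two smaller inaccuracies. First, ``never opens a repair boundary'' gives the wrong reason. The early exit returns \Halt{}, the node becomes \textsc{failed}, and that is exactly what line 2.3.3 of \cref{alg:graph} can answer with a repair round. The argument survives because \cref{thm:graph-termination} already admits a node halting for any reason and bounds rounds by the global $\rounds$, not because the exit cannot trigger repair. Second, ``the remaining time, which is at most $r$'' holds only for a halt at $c \ge W-r$, i.e.\ the wall-clock halt. An attempt-cap or no-progress halt can fire earlier and also earns a wind-down. If that turn is clamped against the full ceiling, as \cref{app:spend-proof} describes, it may run for $W-c > r$. Split at the halt instant instead: attempts lie in $[0,c]$ and the wind-down is at most $W-c$, so the total is still at most $W$. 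The appendix's step $W-c \le W-L$ carries the same tacit assumption. Finally, the spend half is not really the obstacle for this corollary. \Cref{prop:spend-bound} is already stated for every $r$ with $0\le r<W/2$, which is why the paper needs only that the declared total is unchanged.
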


\begin{proof}
The declared total is unchanged, so no bound expressed in terms of $W$ changes. The wind-down turn is
bounded by $r$ and occurs at most once, so the attempt count is unchanged and
\cref{thm:loop-termination} applies unmodified. \qedhere
\end{proof}

That corollary is the point. There is no branch on which a run outlives its declared ceiling in
exchange for a handoff, so nothing had to be re-proved to make the bound humane.

Two design consequences follow, and both are decisions rather than details:

\begin{itemize}[leftmargin=1.4em]
  \item \textbf{The external kill switch gets no wind-down.} An operator pulling it wants the run to
        stop now, not to spend more of anything. Every \emph{bound} halt gets one --- including the
        no-progress bound, where a worker that is stuck explaining what it tried is the most useful
        handoff of the set rather than the least.
  \item \textbf{The wind-down cannot change the terminal state.} A handoff turn that hangs, fails, or
        produces nothing costs the reserve and no more. Without that invariant, an attempt to be
        helpful could convert a clean halt into a failure, which is strictly worse than the
        abruptness it replaced.
\end{itemize}

\subsection{Auditing consumption from the receipt}

A spend claim that cannot be checked against its own receipt is not a spend claim. Two properties of
the append-only ledger make consumption recomputable by a third party holding only the run directory.

First, every row records whether the worker was actually invoked. The distinction matters because two
of the controller's checks --- the kill switch and the budget ceiling --- occur \emph{before} a turn
and therefore record a lap on which no attempt was spent. Without the flag, a ceiling halt at
$m = 10$ writes an eleventh row, and anyone computing consumed-over-declared from the ledger obtains
$1.1$ for a bound that in fact held exactly. We report that quantity in
\cref{sec:bound-utilisation}, so an instrument that overstates it by one on precisely the runs where
the ceiling bites would have corrupted the measurement rather than merely the display.

Second, a bound halt names the file containing its handoff, so the evidence is reachable from the
receipt without knowing a convention.

\begin{definition}[Bound utilisation]
\label{def:bound-utilisation}
For a completed run, $U = a/m$ where $a$ is the number of attempts spent and $m$ the declared cap.
$U = 1$ is a run that consumed its declared envelope exactly; $U < 1$ is a run that finished or
stalled early; $U > 1$ is impossible and, if observed, indicts the instrument rather than the run.
\end{definition}

\section{Declared is not enforced: a defect class, found in our own system}
\label{sec:declared-vs-enforced}

Every guarantee in this paper is stated over a manifest. That construction has an exposed joint:
between the number an operator reads and the behaviour that number is supposed to constrain. A
manifest field can be present, validated, type-checked, documented, and reachable by every consumer
that displays it, while constraining nothing at all. The bound is then a label rather than a bound,
and --- this is the part that makes the class worth naming --- \emph{everything an operator can
inspect looks correct}.

We found this in our own system, after the termination and soundness results had been proved and the
evaluation had been run. It is the strongest evidence we have for the paper's thesis, so we report it
in full rather than quietly fixing it. The instance below is the wallclock ceiling. The class also
caught \cref{thm:chain}: the chain-integrity theorem was stated over a hash chain the ledger adapter
did not write, so every hypothesis in it quantified over a construction no shipped code produced ---
the theorem was correct and vacuous, undetected across four releases, and it undermined the one
artifact every other claim here is evidenced by. A proof does not check that the object it quantifies
over exists.

\subsection{The defect}

\texttt{max\_wallclock\_s} was declared in all 69 loop manifests in the catalogue. It was validated at
load time, normalised (a null value became a conservative one-hour default rather than
``unbounded''), wired into every gate constructor, surfaced in the pre-run banner, and recorded in the
receipt. It was compared against elapsed time in exactly one place: at the top of each attempt, before
the worker was invoked.

That single comparison site is the defect. Checking a ceiling only \emph{between} attempts bounds the
gap between attempts, not an attempt. Once a worker was running, nothing in the system was measuring
it against the loop's declared ceiling; the process was subject only to whatever unrelated
per-subprocess default its runner happened to carry. Concretely: a loop declaring
\texttt{max\_wallclock\_s: 120} ran a single attempt for over 300 seconds and was terminated, in the
end, by a runner default of 300 that the loop had never declared. It was stopped by a limit nobody
wrote, 180 seconds after the limit somebody did write had been exceeded.

The manifest said what it meant. The run did not respect it.

\subsection{The second defect, which the first was concealing}

Enforcing the ceiling exposed something worse, and this is the finding we would not have predicted.

With the bound inoperative, nothing had ever tested whether the declared \emph{values} were
achievable. They were not. 65 of the 69 shipped loops declared \texttt{max\_wallclock\_s: 60} against
\texttt{max\_iterations: 10} --- a budget of six seconds per attempt. The catalogue's demonstration
worker completes an attempt in roughly $0.4$ seconds, so every test passed and every run finished
comfortably inside the ceiling.

A real agent command-line worker does not complete a turn in six seconds. Across four independent
providers driving four loops over two runs, we measured 31 completed turns: median $48.3$\,s, 75th
percentile $72.5$\,s, maximum $270.6$\,s. So no shipped loop's declared ceiling could accommodate a
single real turn, let alone ten of them.

Two defects, and the first was hiding the second. While the bound was unenforced, the wrong numbers
were invisible; the moment it was enforced, every real-agent run of every shipped loop would have
halted part-way through its first attempt. A reader may reasonably observe that enforcing the bound
made the system worse before it made it better. That is correct, and it is the argument for enforcing
a declared bound \emph{even when nothing appears to be wrong}: an unenforced bound suppresses the
signal that would have revealed its own value to be nonsense.

\subsection{Repair, and how the numbers were chosen}

Enforcement now happens at two points rather than one. The controller measures the remaining work
budget immediately before handing off to the worker, and the runner clamps its own blocking wait to
the smaller of that remainder and its configured per-turn timeout. There is one implementation of the
clamp, shared by every runner, for reasons \cref{sec:mirrored-definitions} explains.

The declared values were re-derived from the measurement above:
\begin{equation}
\texttt{max\_wallclock\_s} \;=\; m \times 90\,\text{s} \;+\; r
\label{eq:ceiling-rule}
\end{equation}
where $m$ is the declared attempt cap and $r$ the handoff reserve of \cref{sec:handoff-reserve}. The
90 seconds is the measured 75th-percentile turn, rounded up.

The percentile rather than the maximum is the interesting choice. A spend limit that cannot bind is
decorative, and sizing the ceiling to the slowest turn observed would make it arithmetically
incapable of firing before the attempt cap --- returning us to a bound present in the manifest and
unable to affect a run. The percentile is also the stable statistic: between the two measurement runs
it came out at $72.5$\,s both times, independently, while the maximum moved from $177.6$\,s to
$270.6$\,s and one cell nearly doubled ($1.86\times$). The body of the distribution is a property of
the workload; the tail is noise.

\subsection{Why we publish this, and why the class is not ours alone}
\label{sec:why-publish}

We publish it because the defect is undetectable from the outside, and a paper claiming provable
bounds that did not report finding one in its own artifact would be asking to be trusted on exactly
the point it cannot demonstrate. Everything an operator could inspect was correct. The only thing
that would have revealed it is the question \emph{what test fails if I delete the enforcement?}, and
we did not ask it until after the results in this paper had been proved.

Two formally recorded incidents show the same shape outside our system. In July 2025 an agent
deleted a production database during an explicit code freeze, having received and acknowledged the
instruction not to touch production \citep{aiid1152}; nothing in the execution path enforced the
freeze. A second record describes an agent clearing a project cache and removing an entire drive, in
a mode that executes without a confirmation step \citep{aiid1433}. In both, a constraint was stated
where an operator could read it with no enforcement point between the statement and the action. Ours
cost nothing because the workload was a demonstration, which is not a difference in the defect.

The commercial form of the gap is that the cost of an agent run is not knowable before it starts.
Gartner projects more than 40\% of agentic AI projects cancelled by the end of 2027, naming
escalating costs and inadequate risk controls \citep{gartner2025agentic}, and one engineering
organisation exhausted its annual AI budget four months into the year \citep{forbes2026uber} --- so a
pre-run spend ceiling is a live commercial requirement rather than a design nicety.

\subsection{Why this generalises past our system}

Three parts, none specific to agent harnesses. A declaration and an enforcement point are
\emph{different artifacts}, and nothing about writing, validating or displaying a limit causes
anything to check it. The tests were written against a workload that could not exercise the bound: a
demonstration worker at $0.4$\,s per attempt cannot reach a 60-second ceiling, so no test
distinguished ``enforced'' from ``declared''. And the failure was silent in the direction that
resembles success --- runs finished, receipts looked right, and the only observable was a number a
reader would assume meant what it said.

So the general check is: for every declared limit, \emph{what test fails if I delete the
enforcement?} If the answer is none, the limit is a label. We hold that property mechanically now.
Reverting the clamp in any single runner fails a test that inspects the source of all seven,
including the five that cannot be executed without an external tool installed; reverting it and
running the end-to-end case reports $60.1$\,s elapsed against a declared ceiling of $2$\,s.

\subsection{Two further instances of the same shape}
\label{sec:mirrored-definitions}

\textbf{Change detection, duplicated six times.} The soft bound of \cref{prop:no-progress} fires only
on a run of attempts where the worker reported changing nothing. Six runners each carried their own
copy of the change detector --- three annotated as mirrored from a named original and not imported ---
and each compared the workspace against a snapshot taken once when the loop was wired and never
refreshed. From the second attempt onward, any write by any earlier attempt made every later attempt
look busy, so the detector could not report ``unchanged'' and \emph{the soft bound could not fire at
all} for any runner that starts a subprocess. It is now one content-addressed function.

\textbf{Prompt assembly, duplicated seven times.} Seven runners each carried a prompt builder, three
documented as verbatim copies of a named original. They had diverged into four variants, and two
omitted the loop's declared list of forbidden actions from the assembled prompt entirely. Under those
two, a loop without an authored prompt file never told the worker what it must not modify --- and then
the gate refused the result, so the loop spent an attempt being rejected for violating a rule it had
never been given.

\subsection{The instance we produced while writing this section}
\label{sec:sixth-instance}

The last instance was created by the repair rather than uncovered by it.
\Cref{prop:spend-bound} is stated for a reserve $r$ with $0 \le r < W/2$, strictly. The adapter
computing the effective reserve clamped it to $\lfloor W/2 \rfloor$, which \emph{equals} $W/2$ for
every even $W$, so on the code-constructed path the proposition's own hypothesis was false for ninety
of the first hundred and eighty declarable ceilings. The consequence was not an unbounded run: work
still got half the ceiling and every bound stated in terms of $W$ still held. The consequence was
that a proposition in this paper quantified over a condition its implementation did not establish.

Two details earn the space. The test we wrote for that boundary \emph{passed} --- it asserted the
reserve for a sixty-second ceiling was thirty seconds, which is exactly the excluded boundary, so it
pinned the defect rather than the invariant, because it had been written from the code instead of
from the inequality. And no reading found it: a referee swept $W$ numerically and checked the
inequality directly, which took one loop and no insight.

The remedy is one character of arithmetic, $\lfloor (W-1)/2 \rfloor$, and a test that sweeps the
inequality instead of restating the value. The generalisable lesson is not about reserves: a
hypothesis stated in a paper is a specification, and specifications need the same enforcement as
configuration --- something that fails when the implementation drifts off them, executed on every
commit.

\textbf{The shape all three share.} In all three cases the load-bearing logic existed in several copies, at least one copy carried a
comment asserting the copies agreed, and no test compared them. The comment is the tell: it records
that someone noticed the duplication, decided it was acceptable, and wrote down the assumption
instead of the check. Each is now a single definition with a source-level test that refuses to let a
copy reappear.

\section{Discussion}
\label{sec:discussion}

\subsection{One defect class, found seven times}
\label{sec:seven-times}

The most useful thing this work produced is not a rate. It is the recurrence of a single
shape across seven layers that share no implementation. They are not independent in the
statistical sense: one team built all seven, and a habit of mind is the most likely common
cause. That is the point, since the habit is not carelessness but the ordinary practice of
checking that each component returns the right answer to the question it was asked.

\begin{enumerate}[leftmargin=1.8em, itemsep=2pt]
  \item \textbf{Gates.} A check satisfied by the absence of the thing checked ---
        an empty ledger balances (\cref{ex:ledger}). 47 instances.
  \item \textbf{The verdict channel.} A diagnosis the gate had in hand, filed as
        an inability to diagnose, so the loop stopped instead of retrying and 84
        mutants left the denominator (\cref{sec:e1-defects}).
  \item \textbf{The measurement's own scope.} A harness that resolved one gate
        kind and returned nothing for the rest, so 24 loops vanished without
        appearing in any count --- and the guard against unprincipled exclusion
        was vacuous over exactly those loops.
  \item \textbf{Our test suite.} 14 tests asserting over collections nothing
        proved non-empty, three of them security guards (\cref{sec:e6}).
  \item \textbf{The scanner for the class.} Detected one of the two vacuity
        shapes; missed the other until it met it in a test written hours earlier.
  \item \textbf{A runner's arguments.} An invocation asking for a credential the
        environment structurally could not supply, and another passing flags the
        binary rejects. Both produced an agent that ran and did nothing.
  \item \textbf{A provider's workspace.} A CLI that wrote its work outside the
        directory the gate reads, reported success, and was very nearly published
        as a provider that fails to converge (\cref{sec:e5-nearmiss}).
\end{enumerate}

Stated once: \emph{a component answers a question it was not asked, the answer is locally
correct, and the failure is silent and flattering.} Each individual element passes review.
The defect lives in the relationship between what a component returns and what its caller
believes that return means --- exactly the relationship no line-level reading examines,
which is the argument for measurement over inspection.

\subsection{A second axis, found by parties who did not write the code}
\label{sec:second-axis}

The seven above share a mechanism: a caller misreads a return. A release audit
conducted after the corpus work closed found three more that do not fit it, and
whose common structure is worth stating separately because its remedy is
different. In \cref{ex:lying-detail} a rule about a verdict's reason asked the
verdict; in \cref{ex:accepting-stub} a test about a platform's install policy
asked a stub written to agree; in \cref{ex:forged-provenance} a record of which
gate ran read the answer off the gate. The caller reads its return correctly in
all three. The question is sound in all three. \emph{The party answering it is the
party under scrutiny} --- \cref{def:self-attested}, whose one-line proof is the point of \cref{rem:trivial-proof}: the result is immediate and the diagnosis is not.

So the habit of mind has two expressions, and only the first is about
misinterpretation. The second is about \textbf{authority}: not what the answer
means, but whose answer it is. We keep them apart because an existence obligation
fixes neither and a type check fixes only the second, and because the second was
invisible to us for longer --- it was found by two agent CLIs auditing code they
had not written, which is the only evidence in this paper that does not inherit
\cref{sec:threats}'s first and largest threat. Their evidential grade, and what
that grade does not license, is \cref{rem:audit-grade}.

A related pattern is worth one sentence, because it is the same shape between two
components rather than inside one: this system has two implementations of ``drive an agent
CLI'', and we found the same capability present in one and missing in the other four
times, in both directions, because both sides had tests and neither had a test asserting
that the two agree.

\subsection{Threats to validity}
\label{sec:threats}

\paragraph{The gates and the corpus share an author.} The strongest threat.
Tier-1 blindness is structural (\cref{thm:blind}) and Tier-2 withholds the
checker with a digest guard, but the loops, the gates and the mutation operators
were all produced by the same project. A defect class none of us imagines is
absent from both the gates and the operators, and would not appear as a false
accept. This is the empirical form of \cref{sec:future-independence} and we
cannot bound it from inside.

\Cref{sec:e8} is the one probe of this threat we were able to run, and its result
is that the threat is real rather than theoretical: auditors with no authorship,
reading the engine instead of the catalogue, returned a defect class no operator in
either tier can express, because both tiers mutate artifacts and the class lives in
the harness. That confirms the mechanism and does not bound the residual. Two
auditors finding one unimagined class says nothing about how many remain, and the
filter that accepted their findings was still ours.

\paragraph{Zero is a small number of failures.} $\alphafa = 0$ over the 209
destroying mutants bounds the rate loosely, not tightly: the Wilson upper bound
at 95\% is $1.8\%$, and per tier it is $2.2\%$ and $9.2\%$
(\cref{app:wilson}). We report the interval rather than the point. Zero also
arrives \emph{after} 47 fixes, so what it supports is ``this corpus no longer
finds defects in these gates'', not ``these gates have no defects''. The next
corpus is the test of that, which is why \cref{sec:future-decay} proposes running
it continuously.

\paragraph{The catalogue is ours.} 69 loops chosen by us, in domains we selected,
with defects we thought worth seeding. Transfer to a codebase and task
distribution we did not design is untested.

\paragraph{E5 is small.} Sixteen observations across four providers and four
loops, with tokens unavailable through the shell path. It supports ``the gate
decides the same way regardless of which model wrote the artifact, on these
loops'' and nothing quantitative about providers.

\paragraph{The comparison a reader wants is absent.} There is no benchmark against
another agent framework. Constructing one fairly, with someone else's gates, someone
else's defect distribution and our operators, is a research project rather than a
table, and publishing one column would let the layout imply the rest.

\section{What Is New Here}
\label{sec:novelty}

The components of this work are individually unsurprising, and we would rather say
so than dress them up. Retry loops are old. Separating an implementer from a
checker is old. Mutation testing is from 1978. Append-only hash-chained logs are
older than the field this paper is in. Vacuity is from the model-checking
literature and we did not coin it. A reader who feels that no single ingredient is
novel is right. This section states what each contribution is new \emph{relative
to}.

It is tempting to call this a missing layer of the stack --- orchestration, durable
execution, observability, evaluation, and now a ``completion contract''. The
framing is wrong. The four established layers each presuppose that ``done'' is
meaningful and that cost is bounded, and none specifies either: orchestration says
where a step ends, durable execution says it will eventually end, observability
says what happened, evaluation says how well it scored. None says who is entitled
to declare a step finished, or what the harness owes when that declaration is
wrong.

But that is a \emph{property} those layers leave unspecified, not a tier of
infrastructure sitting between them. What we add attaches to the orchestration
layer rather than displacing anything, and a team running a durable-execution
engine with a genuinely separate validation service has already supplied the
property without naming it. A category with exactly one vendor in it is a marketing
artifact, and this one would have exactly one. So we state the property formally
instead, which is what lets any harness be measured against it, including harnesses that
are not ours.

\paragraph{1. Termination across a boundary the nearest work declined to cross.}
The Graph Harness framework \citep{sgh2026} proves bounded termination for a DAG
with per-node retry budgets, and excludes parent-chain rollback by name, because
its proof requires terminal states to be absorbing and rollback removes that.
\Cref{thm:graph-termination} covers exactly the excluded case. Their bound is
correct on its premise, and its per-round total is the same quantity as ours; what
we add is the outer induction that survives budget restoration, the three
conditions that make it available (\cref{ass:bounded-repair}), and the explicit
price. Because restoring spent budget is a reset arc and the general question is
undecidable, a sufficient condition is the strongest form the result can take, so
identifying which conditions suffice is the contribution rather than a weakening of
one.

\paragraph{2. The harness contract is stated as a property and proved, not
described.} The literature says a verifier should be independent. That is a design
intention: a system either follows it or does not, and there is no way to tell from
outside. We give it as \cref{def:independence} and prove what it buys
(\cref{thm:soundness}). To our knowledge the completion invariant of an agent
harness has not previously been written as a theorem.

\paragraph{3. The verdict boundary is derived rather than chosen.} Existing
practice answers ``wrong artifact or broken checker?'' by severity, or by whatever
exit code the checker happened to return. We show the distinguishing fact has
\emph{already been declared} by the anti-tamper configuration a careful loop author
writes for unrelated reasons (\cref{thm:trichotomy}). The corollary is small, and
it was worth 84 of 233 mutants.

\paragraph{4. Vacuity is measured, and the name is borrowed on purpose.} We claim
no credit for the concept. Vacuous satisfaction has been studied in temporal model
checking since \citet{beer2001vacuity}, formalised by \citet{kupferman2003vacuity}
and \citet{gurfinkel2004vacuity}, and vacuity checking ships in commercial
verification tools. We use their word because it is the right one
(\cref{sec:related-vacuity}).

What we add is a rate. That literature establishes that a property can pass for the
wrong reason and how to detect it automatically for temporal formulas. It does not
say what fraction of the hand-written verification code in a production agent
system is vacuous, because that code is not temporal logic and no detector exists
for it. We report 47 instances in a shipped catalogue that had passed review, plus
14 in the test suite guarding the engine that found them, and we give the
consequence as a statement about an adversary rather than about a specification
(\cref{thm:vacuity}). ``This property passed for an uninteresting reason'' and
``deleting the books is the cheapest way to make them balance'' are the same fact
with very different force.

\paragraph{5. The methodology is inverted, for a reason specific to this setting.}
Standard mutation testing detects equivalent mutants and excludes them.
\Cref{thm:no-equivalent} shows the problem does not arise when labels come from the
operation rather than from observed behaviour, and \cref{cor:exclusion-fatal} shows
the standard remedy would be actively harmful here. Because the subject under test
is a verifier rather than a program with a separate suite, the received methodology
inverts.

\paragraph{6. A safe way to admit a model judge, with no assumption about the
judge.} \Cref{thm:judge-asymmetry} answers the composition question rather than the
accuracy question. The proof is one line. What makes it useful is that it needs no
premise about calibration or accuracy --- exactly the premises nobody can discharge
for a model --- and that the cost it does impose lands in $\betafr$, which we
measure and publish. Composition is enforced in the engine rather than recommended
in documentation.

\paragraph{And one thing that is not a contribution but is unusual.} We report what
the instrument found when pointed at us: a scanner that missed one of the two
shapes it was written to detect, a coverage figure quoted for a quantity never measured, a defect category we had filled with four entries that all
turned out not to belong in it, and an experiment whose first run produced a clean,
publishable, false claim about a third party's product because of our own
misconfiguration (\cref{sec:e5-nearmiss}). These are in the body rather than in a
limitations paragraph. A paper arguing that verification layers go unmeasured,
which curated its own evidence, would be an instance of its own thesis.

\section{Future Work}
\label{sec:future}

Five problems, each with a first experiment.

\subsection{Gate independence under shared authorship}
\label{sec:future-independence}

The most consequential open problem here. \Cref{def:independence} gives
\emph{structural} independence, which is all \cref{thm:soundness} needs; a reader
importing intuition from independent review imports something stronger --- that the two
fail in uncorrelated ways --- and nothing establishes it. The gap is narrowing in the
wrong direction, because gates are deterministic code and that code is increasingly
written by a model, possibly the same family that writes the artifacts it judges. A model
with a systematic blind spot writes artifacts that omit a case and checkers that do not
look for it: the errors align, the receipt is valid, and the system is wrong.

Companion work makes this quantitative in an adjacent setting: two instances of one model
composed in a handoff co-failed on $90\%$ of the missions on which either failed, and
changing the \emph{model} reduced the association where changing only the \emph{vendor}
did not \citep{bhardwaj2026abc2}. Those are worker--worker measurements; the worker--gate
case has one side frozen as code.

\textbf{First experiment.} Author mutants with model $A$ against gates written by $A$,
then against gates written by $B$, loops and budget fixed, and compare escape rates. A
gap would argue for recording a checker's provenance and flagging
worker-model-equals-gate-author the way a conflict of interest is flagged.

\subsection{Does $\alphafa$ compose across a graph?}
\label{sec:future-compose}

We measure individual gates; a graph has many. The operator's question is the probability
that a graph reporting $\Done$ contains at least one defective artifact. Under
independence the arithmetic is immediate and the engine reports the compounded rate
labelled as such, but the assumption does all the work and the engine does not verify it.
Expect it to fail for two reasons: nodes share inputs, so a defect admitted upstream
reaches a downstream gate whose judgement is not a separate event; and gates in one
catalogue share idioms and authors.

\textbf{First experiment.} A two-node chain. Seed a defect upstream that the upstream
gate is known to miss, measure how often the downstream gate catches it, and compare the
joint rate against the product of the marginals.

\subsection{The retry budget is an unreported confounder}
\label{sec:future-retry-confounder}

Reliability figures for agent systems are computed over missions and assume one Bernoulli
draw each. A bounded loop breaks that, because retries are \emph{within} a mission: the
same instance is re-presented to the same gate until it accepts, so getting a defective
artifact through does not require the gate to err every time. Once suffices. With
per-attempt false-accept probability $\alphafa$ and $m$ attempts, the per-node probability
is $1 - (1-\alphafa)^m$.

The problem is not that the rate is worse. It is that \emph{the reported figure does not
observe the parameter that determines it}: two deployments with the same worker and gate,
one allowing one attempt and one ten, produce certificates of identical form and
materially different quality. This is the gap between the two theorems of
\citet{sgh2026} (\cref{sec:related-sgh}), where termination grants each node $b_v + 1$
attempts and soundness computes from an accuracy in which $b_v$ does not appear.

\textbf{First experiment.} Measure whether $\alphafa$ is constant across attempt index.
It is likeliest to fail here, because a worker adapting to rejection is being trained,
attempt by attempt, on the only signal the gate emits. If $\alphafa$ rises with the
index, the compounding above is optimistic and the anytime-valid machinery of
\cref{sec:e3} loses its guarantee rather than weakening: its null is a ceiling on
per-attempt error, and a loop learning to satisfy its gate can breach it. The receipt
already carries the per-attempt verdicts; we lack volume.

\subsection{What is genuinely not mechanically checkable?}
\label{sec:future-uncheckable}

\Cref{sec:eval-corrections} reports four requirements filed as beyond mechanical
checking, all four of which turned out checkable. The category ships empty, is presumably
not empty in principle, and we do not know where its boundary lies. The four failures
share a shape: each filing was made from the \emph{form} of the requirement rather than
from a failed implementation attempt, reasoning that something undecidable sits nearby ---
a commit's intent, an author's meaning --- and that the requirement inherits it. A checker
need not recover intent; it needs to detect the evasion, and an evasion is an edit, which
leaves a trace. So an entry in this category is now a claim that no implementation exists,
admissible only after trying to write one.

\subsection{An automated prober for both defect classes}
\label{sec:future-prober}

The two classes this paper names are currently found by two different means, and
only one of them is a tool. Vacuity has a syntactic scanner, which
\cref{sec:vacuity} reports caught one of its two shapes. Self-attestation
(\cref{def:self-attested}) has none: its three instances were found by reading, and
the twelve repairs made in the release that produced them were validated by
reverting each fix by hand and observing the guard fail. That is a practice, and we
are explicit that no instrument in this release implements it, because a paper that
argued for measurement over care and then relied on care would be its own best
counterexample.

Both classes look automatable and neither is trivial. A vacuity prober must revert
a fix and decide whether the suite noticed, which requires a semantic notion of
``the same fix, undone'' that a line-level mutation operator does not supply --- and
it must handle \cref{rem:hanging-mutant}, where the reverted fix produced neither
a pass nor a failure but a run that did not end. A self-attestation prober needs to
decide, for each value a check consults, whether the subject could have determined
it: statically that is an alias-and-ownership question over attribute reads, method
dispatch and subprocess boundaries, and dynamically it is a matter of substituting
a hostile subject and seeing whether the check still discriminates. The second
formulation is the more promising of the two and is the shape we intend to build,
because it needs no whole-program analysis --- only a subject the harness controls
and a check that must keep working when the subject stops cooperating.

\subsection{Gate decay and continuous recertification}
\label{sec:future-decay}

A gate is written against a codebase, a task distribution and a set of conventions, and
all three drift. A gate can become \emph{stale} --- reference data ages, legitimate work
is rejected, $\betafr$ rises --- and that mode announces itself, because someone
complains. Or it can become \emph{hollow}: the artifact's structure moves underneath it,
so the check still executes and constrains nothing. That is \cref{def:vacuous} arriving by
drift rather than by authorship, it raises $\alphafa$, and it is silent.

\textbf{First experiment.} A longitudinal E1: hold a corpus fixed, replay it against each
gate at successive commits, and track $\alphafa$ over the life of the repository. A gate
whose kill rate falls while its code is unchanged is being hollowed out by the artifacts
around it. We expect hollowing to dominate.

\section{Conclusion}
\label{sec:conclusion}

Explicit graphs and durable execution gave the field a vocabulary for agent control flow.
This paper adds the layer underneath: what an agent harness must guarantee, proved rather
than described, and measured rather than asserted.

The formal core is small. A graph of bounded loops terminates within a bound an operator
reads off the manifest before running anything, including in the case the nearest prior
treatment sets aside, at a stated price of one factor of $(\rounds+1)$ and on the condition
that the repair budget is global. A graph reporting \textsc{done} has a gate verdict in the
ledger for every node, and the worker's own claim appears in no guard of the procedure that
produced it. A gate can answer exactly three things, and the boundary between ``your
artifact is wrong'' and ``I cannot tell'' is fixed by an ownership declaration the loop's
author has already made.

The empirical core is an instrument and one number. Over 209 destroying and 62 conforming
held-out mutants across 57 loops we observe no false accepts and no false rejects, bounded
at $1.8\%$ and $5.8\%$. \textbf{That zero is saturation, not quality, and we ran the
experiment that shows it}: freezing the gates and applying a second operator family --- one
probing a class the first structurally cannot --- returns a false-accept rate of 23.3\%, 14
of 60, where the exhausted corpus reported none. A verifier scored by the corpus that
repaired it has demonstrated that the corpus is finished and nothing about the verifier.

Three of those fourteen are the case to put in front of a practitioner. A defect class was
found by hand, the gate that revealed it was repaired, and three sibling gates of identical
shape were not: a data processing agreement whose clause headings are present and whose
clause bodies are empty passed for three releases, invisible to the original operators
because they delete the headings too. \emph{A repair applied to the instance that revealed
a class, and not to the class, is indistinguishable from a repair applied to the class,
until a different instrument arrives.} So the instrument is the contribution and the rate
is not: ours describes our gates, and the apparatus exists so a reader can obtain their own.

If there is one thing to carry away, it is the shape that recurred seven times here across
layers sharing no implementation (\cref{sec:seven-times}): \emph{a component answers a
question it was not asked, the answer is locally correct, and the failure is silent and
flattering.} Every individual element passed review every time. That is the argument for
measuring a verification layer rather than reasoning about it --- not because reasoning is
careless, ours was not and it still missed all seven, but because this class is invisible
to inspection by construction and fails in the direction where nobody complains.
A second shape, found later and by auditors from outside the project, is worth
carrying with it: not a caller misreading a return, but a check whose answer came
from the party being checked (\cref{sec:second-axis}). The first is about meaning
and the second about authority, and no amount of care about the first finds the
second.

The largest thing the paper does not establish is stated in
\cref{sec:future-independence}. Our gates are deterministic code, which is what makes them
trustworthy, and that code is increasingly written by models. If a model writes both the
artifact and the checker they remain distinct objects with disjoint write authority and
every theorem here still holds, while the errors may be perfectly aligned and the receipt
entirely valid. Structural independence is what we proved; statistical independence is what
a reader may assume they are getting. Closing that gap is the work that matters most next.

\appendix
\section{Deferred Proofs and Supporting Arguments}
\label{app:proofs}

Proofs given in full in the body are not repeated. This appendix carries the
arguments that would have interrupted it.

\subsection{Termination, without the no-progress window}

\Cref{thm:loop-termination} is stated for a loop whose only stopping rule is the
attempt ceiling. \Cref{prop:no-progress} adds the window. The two interact in one
way worth recording.

\begin{proposition}[The window cannot extend a run]
\label{prop:window-monotone}
For any progress function $\pi$ and any window $w \geq 1$, the number of attempts
executed under \cref{alg:loop} with the window enabled is at most the number
executed with it disabled.
\end{proposition}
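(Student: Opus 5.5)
The plan is a coupling argument: run \cref{alg:loop} twice on the same worker, gate, budget and initial workspace, once with line~2.8 enabled and once with it disabled (equivalently, with that guard replaced by \emph{false}). The claim is then a statement about the prefix relation between the two executions. Under this coupling, the run with the window enabled is a prefix of the run with it disabled. In particular, the number of attempts it executes is at most the other's.

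First, we must fix what ``same'' means when $\Worker$ or $\Gate$ may be non-deterministic. We couple them by feeding both runs the same sequence of worker outputs and gate verdicts, indexed by attempt. This is legitimate because both runs present identical inputs to $\Worker$ and $\Gate$ at every attempt up to the first point of divergence.

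Second, we show by induction on $k$ that, as long as neither run has returned, both runs are in identical states after attempt $k$: the same counter $k$, workspace $\Omega$, verdict $\Verdict_k$ and ledger. The base case is line~1. The inductive step goes line by line through 2.1--2.7, none of which reads the window. Line~2.8 is the only line whose behaviour differs. In the disabled run it is a no-op, and in the enabled run it either does nothing or returns $\Halt$.

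Third, we identify the first attempt at which either run returns, and split into two cases.
\begin{itemize}[leftmargin=1.4em]
  \item If the return occurs at a line shared by both runs (2.3, 2.6, 2.7, or the loop guard / line~3 at $k=m$), both runs return at the same attempt.
  \item If it occurs at line~2.8, only the enabled run returns, after some $k$ attempts, while the disabled run has by then executed exactly those same $k$ attempts and possibly more.
\end{itemize}
The case of a return at line~2.3 needs the convention from the proof of \cref{thm:loop-termination}: a body execution ending there is not counted as an attempt. Since that return happens in both runs at the same iteration, it does not affect the comparison.

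The main obstacle is the first step, which makes the comparison well defined for a non-deterministic worker. Without the coupling, ``the number executed with it disabled'' is a random quantity, and the inequality is false pathwise for independently drawn runs. I would state the proposition as holding under the shared-randomness coupling, which is the only reading under which it is a theorem. I would also note that it implies the corresponding stochastic-dominance statement. Everything after that is the routine observation that line~2.8 can only return, never continue a run that would otherwise have stopped. The result also yields the $\min$ in \cref{prop:no-progress} directly.
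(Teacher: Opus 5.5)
Your proof is correct and follows the paper's own argument: the window enters only at line~2.8, which can return \Halt{} but never continue a run, so the windowed execution is a prefix of the unwindowed one. Your shared-randomness coupling and lock-step induction spell out what the paper's word ``prefix'' silently assumes for a non-deterministic worker. One small overstatement in your closing aside: this result gives only the $m$ branch of the $\min$ in \cref{prop:no-progress}, while the $w + j^{\ast}$ branch still comes from line~2.8's firing condition.
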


\begin{proof}
The window appears only at line~2.8, which can return $\Halt$ and never continues
a run that would otherwise stop. The loop guard at line~2 is unchanged. So the
executed prefix of attempts under the window is a prefix of the executed sequence
without it.
\end{proof}

This matters because a reader may reasonably worry that a badly chosen $\pi$
could keep a loop alive --- for instance one that reports progress on every
attempt because it measures something incidental, such as a timestamp. It cannot:
a useless $\pi$ degrades the window to no window, and the ceiling still binds.
The failure mode of a bad $\pi$ is wasted budget, never non-termination.

\subsection{Why the graph induction needs two variables}

\Cref{thm:graph-termination} uses the lexicographic order on
$(\rounds - \rho, \Sigma)$. Each obvious single-variable argument looks sufficient and
fails.

\emph{Attempt count alone fails.} $\Sigma$ strictly decreases within a round and
is \emph{reset upward} at a repair edge. A measure that increases at any step is
not a termination measure.

\emph{Repair round alone fails.} $\rho$ increases only at repair edges, of which
there may be none. Within a single round, $\rho$ is constant while unboundedly
many attempts could occur as far as that variable can see.

\emph{Their sum fails.} $\Sigma$ can increase by as much as
$\sum_v m_v$ at a boundary where $\rounds - \rho$ decreases by one; with
$\sum_v m_v > 1$ the sum increases.

The lexicographic order is exactly the structure that accommodates ``the second
component may increase, provided the first strictly decreases''. It is the
standard measure for nested loops, and a repair edge is the outer loop of a
nested pair whose inner loop is the DAG traversal.

\subsection{Soundness under partial failure}

\Cref{thm:soundness} assumes the run terminates in $\Done$. Two adjacent cases are
what an operator actually encounters.

\begin{corollary}[A non-\textsc{done} run makes no claim]
\label{cor:nondone}
If a run terminates in $\Halt$, $\textsc{pause}$, $\textsc{killed}$ or
$\textsc{error}$, no assertion about any artifact follows from the outcome.
Individual nodes may nonetheless be ledger-supported, and which ones is readable
from the log.
\end{corollary}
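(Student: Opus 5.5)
The corollary has two halves, and I would prove them separately. The first half says that a non-\textsc{done} outcome asserts nothing about any artifact. The second says that per-node ledger support is still well defined and can be read from the log.

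For the first half, I would argue from \cref{def:outcome} and \cref{def:graph-outcome}. \Cref{def:outcome} stipulates that only $\Done$ asserts anything about the artifact. \Cref{def:graph-outcome} defines the graph outcomes $\Halt$, $\textsc{pause}$, $\textsc{killed}$ and $\textsc{error}$ by the \emph{absence} of a universal $\Done$ over $\Nodes$: some node is not $\Done$, or is paused, killed, or errored. So none of the four graph outcomes carries the antecedent of \cref{thm:soundness}, and that theorem yields nothing.

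I would then add one step so the claim is not merely ``the theorem is silent''. The outcome symbol is also not evidence about any particular artifact. I would exhibit, for each outcome, two runs ending in that outcome where one node's artifact is conforming and the other's is defective. The kill switch firing, or an approval outstanding before the node's loop runs, are natural witnesses. The outcome is identical across the pair, so the outcome alone cannot determine the artifact's status.

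For the second half, I would apply \cref{lem:node-level} node by node. The lemma's hypothesis is local: it concerns one execution of $\textsc{RunLoop}(\Loop_v,\cdot)$ returning $\Done$ and says nothing about the graph outcome. So every node whose loop returned $\Done$ on some pass has a $\Pass$ receipt written at line~2.5 before line~2.6. This holds even if the graph later halts, pauses, is killed, or errors.

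Readability from the log follows exactly as in \cref{cor:third-party}. \Cref{def:support} is an existential predicate over a finite sequence of receipts, checked per node rather than over all of $\Nodes$. It is therefore decidable by reading the run directory after chain verification via \cref{thm:chain}.

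The main obstacle is a subtlety around repair. A node may hold a $\Pass$ receipt from an earlier pass and then be re-admitted by line~2.3.3.ii of \cref{alg:graph} and fail in the current pass. Ledger support in the sense of \cref{def:support} still holds for it, but that support is stale. I would handle this by stating the readable quantity precisely: ledger support as a statement about receipts, not about the node's final state. I would also note that the repair-boundary receipts of \cref{def:repair} let a reader restrict attention to $\Pass$ receipts after the last boundary. This is why the corollary says only which nodes are ledger-supported is readable, and claims nothing stronger.
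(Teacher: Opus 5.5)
Your proposal is correct and is essentially the paper's own proof. The first half is read straight off \cref{def:outcome}, since only $\Done$ is defined through a $\Pass$ verdict. The second half rests on the same fact the paper cites: a receipt is appended at line~2.5 of \cref{alg:loop} on each attempt rather than when the run ends, so a node that passed before an unrelated failure keeps its receipt in the append-only ledger.

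Your witness pairs, the appeal to chain verification, and the caveat that support earned before a repair boundary can be stale are sound additions that the paper's two-line argument does without. One gap in the witness pairs: your pause and kill examples do not cover $\Halt$, where a witness pair needs either an imperfect gate or a node that never ran.
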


\begin{proof}
Immediate from \cref{def:outcome}: only $\Done$ is defined in terms of a $\Pass$
verdict. The second sentence follows because receipts are appended per attempt
(line~2.5) rather than at the end of the run, so a node that passed before an
unrelated failure retains its receipt.
\end{proof}

\begin{corollary}[Crash safety of the claim]
\label{cor:crash}
If the process is killed at any point, the ledger prefix that survives is
chain-valid, and \cref{cor:nondone} applies to it.
\end{corollary}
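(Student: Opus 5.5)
The plan has two parts, matching the two clauses of \cref{cor:crash}: chain-validity of whatever prefix survives a kill, and that \cref{cor:nondone} then applies to it.

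\emph{Chain validity.} I would model the ledger writer as appending one complete row at a time. Each row $r_k$ is serialised, including the stored predecessor hash $h(r_{k-1})$ computed over the bytes of the already-durable line $r_{k-1}$. Only after that is the row appended. A kill interrupts at some instant. By induction on $k$, every row that became durable before the kill satisfies the replay check of \cref{thm:chain} against its predecessor. The reason is that its stored hash was computed from bytes that were already on disk and were never rewritten afterwards, since the ledger is append-only (\cref{def:receipt}). So the complete rows form a sequence $r_1,\dots,r_j$ on which replay succeeds at every $k \le j$.

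\emph{The torn tail.} This is the main obstacle. A kill mid-write can leave a partial final line, and that line is not a well-formed row. I would not claim the raw file is chain-valid. Instead I would define the surviving prefix as the maximal sequence of complete, parseable lines, and argue that a torn tail is detectable and separable. It is either unparseable or lacks its terminating delimiter. This is exactly the ``torn tail'' outcome the verifier of \cref{sec:verify-command} reports as distinct from ``broken''. Discarding it leaves $r_1,\dots,r_j$, which is chain-valid by the induction above. If the writer is atomic per line (write then flush, or write-and-rename), this case collapses. I would state the corollary relative to that per-row atomicity or torn-tail-stripping assumption, since nothing weaker supports it.

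\emph{Applying \cref{cor:nondone}.} A killed process never executes the terminal return that would seal $\Done$. Depending on the implementation, the run's status is either $\textsc{killed}$ or absent, and absent is treated as non-$\Done$. In both cases, by \cref{def:graph-outcome}, the graph outcome is not $\Done$, so \cref{cor:nondone}'s hypothesis holds. Its second sentence then transfers directly. Receipts are appended at line~2.5, before line~2.6 can return. So every node whose $\Pass$ receipt lies within $r_1,\dots,r_j$ remains ledger-supported in the surviving prefix and is readable from it. Any node whose receipt was lost in the torn tail simply lacks support, and no claim about it is made.
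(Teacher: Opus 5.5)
Your proposal is correct and follows the paper's argument: appends never rewrite earlier rows, so the complete rows form a valid chain; a torn final record is rejected on replay, leaving a shorter valid prefix; and an interrupted run is non-\textsc{done}, so \cref{cor:nondone} applies. The differences are small. You make explicit the framing assumption and the non-\textsc{done} step, both of which the paper leaves implicit. You also detect the torn tail by a parse or delimiter failure, where the paper says it ``fails hash verification at its own position''; yours is arguably the more precise mechanism, because the check at a row's position covers its predecessor's bytes, not its own (cf.\ \cref{rem:final-row}).
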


\begin{proof}
Receipts are appended and never rewritten, so a prefix of a valid chain is a
valid chain. A torn final record fails hash verification at its own position and
is rejected by replay, leaving a shorter valid prefix.
\end{proof}

\Cref{cor:crash} did not hold \emph{in practice} in an early implementation: the proof was
correct, and the reconstruction metadata a surface needed in order to open such a prefix was
written only at the end of a run, so an interrupted run left a valid prefix nothing could
open. A proof that a record supports a claim says nothing about whether the record can be
read.

\subsection{Blindness: what the AST check covers, and what it does not}

\Cref{thm:blind} has a hypothesis with four clauses, and its enforcement is a
syntactic check. The gap between the semantic property and the syntactic check
should be stated.

The check rejects: importing any module under the gate-adapter package, by
absolute, relative, function-local or dynamic import; opening any path matching a
checker source pattern; comparing against a loop name; and invoking a subprocess.
It counts \texttt{importlib.import\_module} and \texttt{\_\_import\_\_} with
literal targets.

It does not and cannot reject a dynamic import whose target is computed at run
time from a non-literal expression. We accept this: such a construct in a
generator package would be conspicuous in review, and the check is a guard
against drift by ordinary means rather than against a determined author of the
generator itself. The threat model is a maintainer adding a convenient import,
not an adversary.

\subsection{Unbiasedness, stated more carefully}

\Cref{thm:no-equivalent} shows the estimator's denominator does not depend on the
gate. This is a statement about \emph{selection}, not about the operators.

If the operator set systematically produced defects of a kind a given gate
happens to catch, $\hat\alphafa$ would be optimistic for that gate --- and no
independence-of-selection argument repairs that. It is a threat about coverage
rather than about circularity, and it is the reason Tier-2 exists at all: an
operator set is a fixed, finite hypothesis about how things go wrong, and an
independent author shown only the loop's purpose can produce a failure the
operator set cannot express. \Cref{sec:e2} reports the case where exactly that
happened, on the one loop Tier~1 structurally cannot cover.

\subsection{The interval reported with zero failures}
\label{app:wilson}

For $x = 0$ successes the Wilson score interval has lower bound $0$ and upper
bound exactly $z^2/(n + z^2)$, with $z = 1.96$ at $95\%$, so $z^2 \approx 3.84$.

The denominator is the part that has to be right, and it is not the size of the
corpus. $\alphafa$ is a rate over mutants labelled \emph{defective}, because
only a defective mutant can be falsely accepted; the conforming mutants belong
to $\betafr$. The corpus holds 209 destroying mutants (171 Tier-1 + 38 Tier-2)
and 62 preserving ones. Hence
\[
  \alphafa \;\leq\; \frac{3.84}{209 + 3.84} \;\approx\; 0.018,
  \qquad
  \betafr \;\leq\; \frac{3.84}{62 + 3.84} \;\approx\; 0.058 .
\]
The supportable statements are ``$\alphafa \leq 1.8\%$'' and ``$\betafr \leq 5.8\%$''
at $95\%$ over this corpus, not ``$\alphafa = 0$''.

A previous version computed this bound over $n = 271$ --- every mutant, conforming
ones included --- which pools $\betafr$'s denominator into $\alphafa$'s and
reports $1.4\%$ where the data support $1.8\%$. The per-tier bounds in
\cref{tab:corpus} were correct throughout; only the pooled figure was wrong. An
external methodology review found it.

Pooling the two tiers into a single $209$ needs one caveat: Tier~1 covers
57 loops with mechanical operators and Tier~2 covers 10 with authored semantic
mutants, so the pooled rate averages over two structurally different sampling
processes. The per-tier bounds of \cref{tab:corpus} are the primary figures and
the pooled one is a convenience.

The anytime-valid sequence of \cref{sec:e3} answers a different question --- a
rate over a growing log that is inspected repeatedly --- and the two should not
be conflated.

\subsection{Bounded elapsed spend, proved}
\label{app:spend-proof}

\Cref{prop:spend-bound} was argued in the body at the level of enforcement points. Here is
the operational argument, which is what the body promised and did not have. The machinery is
elementary, and the clauses are easy to believe: one of them was false in the shipped code
for exactly the reason the argument makes visible.

\paragraph{Setup.} Fix a run with declared ceiling $W$ and reserve $r$, $0 \le r < W/2$. Let
$c$ be a monotone clock: $c$ is non-decreasing along any execution and $c(0) = 0$ at the
instant the controller starts. Write $L = W - r$ for the \emph{work ceiling}. The controller
alternates between two kinds of step:

\begin{itemize}[leftmargin=1.4em]
  \item a \textbf{lap head}, which evaluates the guards of \cref{fig:control-flow} and either
        terminates or admits one attempt;
  \item an \textbf{attempt}, which hands control to the worker and returns when the worker
        returns or when its deadline expires.
\end{itemize}

Let $t_k$ be the clock value at the head of lap $k$, and let $A_k = [s_k, e_k]$ be the clock
interval of the attempt admitted on lap $k$, if any. Both sequences are finite by
\cref{thm:loop-termination}, so every quantity below is a maximum over a finite set.

\paragraph{Clause (1): no attempt begins after $L$.}
The budget guard is evaluated at every lap head before any attempt is admitted, and it halts
when $c \ge L$. Admission of $A_k$ therefore implies $t_k < L$. The controller performs no
clock-advancing work between the guard and the handoff other than assembling the attempt's
context, and $c$ is monotone, so $s_k \ge t_k$ and $s_k$ is the value of $c$ at the handoff.
Hence $t_k < L$ for every admitted attempt, which is clause~(1) with $s_k$ read as the
admission instant.

Two things are load-bearing and both were wrong in an earlier version. First, the guard must
compare against $L$ and not $W$; comparing against $W$ admits an attempt inside the reserve
and the reserve then cannot be spent on what it was withheld for. Second, the remaining budget
handed to the runner must be measured \emph{after} the guard, not at wiring time: a remainder
computed once when the loop was constructed is a constant, and a constant is not a remainder.

\paragraph{Clause (2): no attempt continues past $L$.}
Let $\rho_k = L - s_k$, which is positive by clause~(1). The runner receives $\rho_k$ and sets
its blocking wait to $\min\{\rho_k, \theta\}$, where $\theta$ is the deployment's configured
per-turn timeout. On expiry it terminates the child and raises. Therefore
\[
  e_k \;\le\; s_k + \min\{\rho_k, \theta\} \;\le\; s_k + \rho_k \;=\; L .
\]
The clamp is a single shared definition used by every runner, and a source-level test refuses
a runner that does not consult it; that test is the reason this inequality is a property of the
system rather than of one adapter. Note that clause~(2) says nothing about which of the two
limits bound: $\min$ is enough for the inequality, and the receipt records which one fired
because the two have different owners and an operator's next action differs.

\paragraph{Clause (3): total worker time is at most $W$.}
Worker time is $\sum_k |A_k| + |A^{\mathrm{wd}}|$, where $A^{\mathrm{wd}}$ is the wind-down
attempt if one occurs. The attempts are disjoint intervals contained in $[0, L]$ by clauses~(1)
and~(2), so $\sum_k |A_k| \le L$. The wind-down is admitted only on a bound halt, at most once
per run --- the controller reaches the wind-down branch from a terminal decision, and terminal
decisions do not return to the lap head --- and it is clamped against the \emph{full} ceiling,
so $|A^{\mathrm{wd}}| \le W - c \le W - L = r$ at the instant it starts. Summing,
$\sum_k |A_k| + |A^{\mathrm{wd}}| \le L + r = W$.

This is where $r < W/2$ is used, and it is used for a reason that is not arithmetic: it is what
makes $L > r$, so the work a run does always exceeds the courtesy it is granted. Any
$r < W$ gives the inequality; only $r < W/2$ gives a run whose majority is work. The clamp that
computes the effective reserve must therefore be strict, and \cref{sec:sixth-instance} is the
account of it not being so.

\paragraph{Clause (4): total run time is at most $W + G$.}
Gate invocations are not clamped. Let $G$ bound one gate invocation. A gate is invoked only
from a lap head, after that lap's attempt has returned, and at most one gate invocation is in
flight at any time because the controller is sequential. Any gate invocation beginning at
$c \le L$ contributes at most $G$ past its start. Runs of gates that begin and end inside
$[0, L]$ are already counted in the elapsed total; the only invocation that can extend the run
past $W$ is the last one, contributing at most $G$. Hence total elapsed time is at most
$\max\{W, L + G\} \le W + G$.

\begin{remark}[The clause that is a design decision, not a bound]
Clause~(4) is the weakest of the four and deliberately so. Clamping gates would tighten it to
$W$ and would purchase that with a verdict that does not exist: a check terminated part-way
has judged nothing, and recording it as a judgement is the failure this paper's whole
soundness argument is built to prevent. We would rather publish the looser inequality than
report an unfinished check as an outcome. \Cref{rem:gate-tail} states what $G$ is in the
shipped wiring, which is not small.
\end{remark}

\section{A Pattern Catalogue for Gate Design}
\label{sec:patterns}

The literature tells a harness author that verification should be deterministic
and that the verifier should be independent of the worker. It does not tell them
what a good gate looks like, which requirements admit one, or how gates fail.
Those questions are answerable now, because a 69-loop catalogue was authored,
shipped, and then measured twice by instruments that found sixty-one defects
between them.

This section is that answer, and its discipline is that \textbf{every entry is
derived from something measured elsewhere in this paper.} We are not proposing a
methodology. We are reporting the shape of a design space that a corpus mapped,
including the parts of it we got wrong. A reader should treat the failure modes
as the load-bearing half: the constructions are unsurprising once stated, and
the ways they silently stop working are not.

\subsection{What a gate can be}

\begin{table}[h]
\centering\small
\begin{tabular}{@{}p{0.17\linewidth}rp{0.30\linewidth}p{0.36\linewidth}@{}}
\toprule
\textbf{Pattern} & \textbf{n} & \textbf{Verdict is} & \textbf{Choose it when} \\
\midrule
Schema &  10 & a document validates against a declared schema &
  the requirement is about \emph{shape}, and the schema is the contract a
  consumer already relies on \\
Predicate & 44 & a purpose-built program exits zero &
  the requirement is about \emph{content} and reduces to a computation over one
  artifact \\
Executable spec & 9 & a test suite passes &
  the requirement is behavioural, and the artifact is code that can be exercised \\
Conjunction & 3 & every child gate passes &
  two independent obligations must both hold, or a model judge is being admitted
  (\cref{thm:judge-asymmetry}) \\
External tool & 2 & a third-party analyser reports clean &
  a mature checker already exists and re-implementing it would be worse \\
\bottomrule
\end{tabular}
\caption{The five gate patterns the catalogue uses, with the count of loops
choosing each. The distribution is itself a finding: two thirds of real
requirements were served by a purpose-built predicate over a single artifact, not
by a schema and not by a test suite. An author's instinct to reach for a schema
generalises less far than the catalogue suggests it should.}
\label{tab:gate-patterns}
\end{table}

The choice is not free. A schema gate's contract \emph{is} its schema, so a
document the schema admits is correct by definition --- which is why two of the
apparent false accepts in \cref{sec:e7} were adjudicated as gate-correct rather
than as defects. An executable-specification gate inherits every weakness of the
suite, including the ability to pass over an empty collection. A predicate gate
is the most expressive and the most likely to be wrong, because its author wrote
the whole of it.

\subsection{Reducing a requirement to a predicate}

The recurring objection is that real requirements need judgement.
\Cref{tab:decidable} showed four that read that way and are not, and the general
move is visible across the catalogue: a requirement about meaning is replaced by
an obligation about a \emph{witness} --- a field that must be numeric, a date
that must match a format, a heading that must exist \emph{and} carry text
beneath it. The judgement does not disappear. It moves into authoring the
predicate, where it is written once, reviewed, and versioned, instead of being
re-exercised on every attempt and recorded nowhere.

Where no witness exists, \cref{thm:judge-asymmetry} gives the safe composition:
a model judge conjoined with a deterministic gate cannot raise the false-accept
rate whatever it does. That is the pattern to reach for, and the engine refuses
every other composition.

\subsection{How gates fail, in the order they were found}

\begin{table}[t]
\centering\small
\begin{tabular}{@{}p{0.20\linewidth}p{0.42\linewidth}p{0.30\linewidth}@{}}
\toprule
\textbf{Failure mode} & \textbf{The gate passes because} & \textbf{Evidence} \\
\midrule
Absence & the artifact it quantifies over is gone, so every universal claim
  holds & 47 instances; \cref{ex:ledger} \\
Hollowing & the artifact keeps its structure and loses its content, and the
  gate checked structure & 14 instances; \cref{sec:e7} \\
Self-derived scope & the checker builds its search pattern from its own
  reference data, so anything outside it is never examined & \cref{ex:citations} \\
Misfiled incapacity & a diagnosis the gate reached was returned as an inability
  to diagnose, so the loop stopped instead of retrying & 84 of 233 mutants;
  \cref{sec:verdict} \\
Scope blindness & the \emph{measurement} resolved one gate kind and silently
  returned nothing for the others & 24 loops; \cref{fig:coverage} \\
Prompt-only obligation & the requirement was stated to the worker and never to
  the gate & \cref{ex:ledger} \\
Self-attestation & the value the check judged was chosen by the party under
  scrutiny, so a sound criterion was applied to a supplied answer & 3 instances;
  \cref{sec:self-attestation} \\
\bottomrule
\end{tabular}
\caption{Seven ways a gate passes work it should refuse. Four are properties of a
gate; rows four, five and seven are properties of the machinery \emph{around}
gates, and we keep them in the same table deliberately, because in this project
they cost more evidence than the gate defects did. The counts in the third column
are not commensurable: the first five come from instrumented measurement over the
catalogue, the last from a release audit whose grade is stated in
\cref{rem:audit-grade}.}
\label{tab:failure-modes}
\end{table}

Two remedies cover the first two rows and both are one line. Against absence, an
\emph{existence obligation}: assert the collection is non-empty before asserting
anything about its members. Against hollowing, a \emph{substance obligation}: a
required element is present only if it also carries content. The catalogue now
applies both, and \cref{rem:incomplete-fix} is the reminder that applying a
remedy to the instance that revealed a class is not the same as applying it to
the class.

\subsection{What we would tell an author, and what we cannot}

Three rules survive everything above, and each is earned rather than asserted.
\emph{Put every requirement in the gate}, because a requirement stated only in
the prompt is addressed to the party whose compliance is in question.
\emph{Make refusal the default for what cannot be checked}, because an
unmeasurable question answered ``pass'' is the failure mode this whole paper is
about. And \emph{measure the gate against defects its author did not write},
because \cref{sec:disclosure-asymmetry} shows authors document how their gate is
permissive and do not notice how it is strict. A fourth was earned late and costs
nothing to apply: \emph{never ask the subject a question you can answer
yourself}, because a criterion applied to a value the subject chose is not a
check (\cref{def:self-attested}).

What we cannot tell an author is whether this generalises. The catalogue is
ours: 69 loops, chosen by us, in domains we selected, with gates we wrote and
operators we designed. \Cref{tab:gate-patterns}'s distribution may be a fact
about requirements or a fact about our imagination, and nothing here separates
those. The failure modes are on firmer ground --- six of the seven were found by an
instrument rather than proposed by an author, the seventh by auditors outside the
project, and two of them were found in code written by people who had spent the
preceding week studying the class --- but their \emph{rates} belong to this
catalogue and to no other. Establishing which
of these patterns hold outside it needs the instrument pointed at gates we did
not write, which is the first item in \cref{sec:future}.

\section{The Loop Catalogue}
\label{app:catalogue}

The 69 loops, by domain and gate kind. Each ships a seed carrying a defect, a
prompt stating the requirement, a gate, and a recorded agent trace so the
demonstration runs offline with no credential.

\begin{table}[h]
\centering\small
\begin{tabular}{@{}p{0.20\linewidth}p{0.12\linewidth}r p{0.52\linewidth}@{}}
\toprule
\textbf{domain} & \textbf{gate kind} & \textbf{n} & \textbf{representative requirement} \\
\midrule
legal / contract   & command      & 9  & required clauses present \emph{and not negated in the body};
                                        citations resolve against shipped reference data \\
finance            & pytest, command & 8 & three-way invoice match; a ledger reconciles \emph{and no
                                        seeded transaction disappeared} \\
clinical           & jsonschema   & 5  & a note carries every required field with the stated types \\
security           & command      & 7  & no credential-shaped literal, including across string
                                        concatenation and renamed keys \\
infrastructure     & command, external & 8 & resources declare required tags; scanned policies pass \\
software eng.      & pytest, command & 14 & a failing suite goes green; every module has a test that
                                        imports it and asserts on what it imported \\
dependency / supply & command     & 6  & every dependency pinned, to a version that exists, at or
                                        above the file's own floor \\
documentation      & command      & 7  & links resolve; alt text present; commit subjects conform to
                                        the declared convention \\
data contract      & composite, jsonschema & 8 & payload validates \emph{and} the contract document
                                        agrees with it \\
runner examples    & pytest       & 4  & framework integration demonstrations (excluded: unshipped SDK) \\
\bottomrule
\end{tabular}
\caption{Catalogue composition by domain. \textbf{The count column does not sum
to 69, and should not.} A loop declares one or more domain tags --- 28 of the 69
declare more than one, so the tags total 96 across 24 distinct labels --- and the
groupings above merge related tags for readability. The authoritative count of
loops is 69, obtained from the manifests and pinned by a packaging test; the
authoritative partition by \emph{gate kind}, which is disjoint and does sum to
69, is in \cref{sec:eval-setup}. The requirements in the right-hand column are stated as
they are \emph{after} the fixes of \cref{sec:e1-defects,sec:e2}.}
\label{tab:catalogue}
\end{table}

\subsection{The eleven loops outside Tier-1}

\begin{table}[h]
\centering\small
\begin{tabular}{@{}p{0.26\linewidth}r p{0.60\linewidth}@{}}
\toprule
\textbf{reason} & \textbf{n} & \textbf{why, and how it is verified} \\
\midrule
unshipped SDK & 4 & the runner needs an agent framework the package does not vendor. Probed
  against the live environment in \emph{both} directions --- a loop claimed excludable that in
  fact runs fails the check, as does the converse \\
external tool & 3 & the gate invokes a security scanner not assumed present \\
negative requirement & 1 & the requirement is that an artifact must \emph{not} acquire a property;
  a mutation that introduces the property is a conforming input, not a defect, so the operator set
  cannot construct a labelled defective case \\
Tier-1 inexpressible & 3 & the requirement is a relation \emph{between} artifacts. A
  single-artifact operator has no notion of consistency across two files. Covered by Tier~2,
  which found four defects in one of them \\
\bottomrule
\end{tabular}
\caption{Exclusions. This table is the denominator: every rate in the paper is
over $69 - 12 = 57$ loops.}
\label{tab:exclusions}
\end{table}

\subsection{Two catalogue entries the paper does not round away}

\paragraph{A known mislabelling, kept.} One Tier-2 mutant changed an invoice
quantity from \texttt{50} to \texttt{50.0} and was reported as breaking a
three-way match. JSON does not distinguish them; the gate was right. The entry is
retained, keyed by content digest, and marked as a known mislabelling rather than
deleted --- so the corpus carries its own correction and a future run cannot
silently re-derive the wrong conclusion.

\paragraph{An empty category, kept empty.} The catalogue reserves a category for
requirements no mechanical checker can decide. Four findings were placed in it and
all four proved checkable (\cref{sec:eval-corrections}). It ships \textbf{empty},
with the four failures recorded in its documentation, because an entry is a claim
that no implementation exists and that claim is earned only by attempting one.
\Cref{sec:future-uncheckable} argues the category is non-empty in principle and
that we do not know where its boundary lies.

\section{Reproduction}
\label{app:repro}

Every number in this paper is the printed output of a released script. This
appendix says which one, and what a reader should expect to differ.

\subsection{Setup}

\begin{verbatim}
  git clone https://github.com/qualixar/bounded-loops
  cd bounded-loops && git checkout v0.6.6 && uv sync
\end{verbatim}

\noindent The tag matters: every measurement below was taken on it, and later releases add
features that change no measured quantity but would leave a reader unable to check that
for themselves.

No credential is required for anything in \cref{sec:e1,sec:e2,sec:e3,sec:e4,%
sec:e6}. The three keyless gate kinds are offline by construction, and the
catalogue's agent traces are recorded, so the demonstration runs with no network
and no account. Only \cref{sec:e5} needs provider logins.

\subsection{What regenerates what}

\begin{table}[h]
\centering\footnotesize
\begin{tabular}{@{}p{0.30\linewidth}p{0.62\linewidth}@{}}
\toprule
\textbf{result} & \textbf{command, and what it pins} \\
\midrule
E1, E2 --- \cref{tab:corpus}
  & \texttt{uv run pytest tests/evaluation} \newline
    The corpus is a test: a regression in any gate fails the build rather than moving a
    number in a paper. \\
\addlinespace
Blindness --- \cref{thm:blind}
  & \texttt{...\,test\_generator\_\allowbreak is\_blind.py} \newline
    AST assertion over every generator module. \\
\addlinespace
Tier-2 prompt integrity
  & \texttt{...\,test\_tier2\_\allowbreak authoring\_\allowbreak was\_blind.py} \newline
    Recomputes each mutant's authoring digest from the repository. \\
\addlinespace
E3 --- \cref{tab:coverage}
  & \texttt{...\,test\_reported\_\allowbreak interval\_\allowbreak estimand.py} \newline
    Includes the three shrink/shift controls that must fail if the interval were vacuous. \\
\addlinespace
E4
  & \texttt{...\,tests/graph/\allowbreak test\_gate\_gaming.py} \newline
    Adversarial workers. \\
\addlinespace
E6
  & \texttt{...\,test\_no\_test\_\allowbreak asserts\_only\_\allowbreak inside\_a\_loop.py} \newline
    Carries a deliberately vacuous case it must flag. \\
\addlinespace
Catalogue convergence
  & \texttt{uv run pytest -k catalog\_convergence} \newline
    All 69 loops, on by default. \\
\bottomrule
\end{tabular}
\caption{Reproduction map.}
\label{tab:repro}
\end{table}

\subsection{Why the corpus is a test rather than a script}

A number in a paper decays: the artifact moves and the number does not. Making
the corpus part of the test suite means a change that reintroduces any of the 47
gate defects fails the build on the commit that causes it. The number in
\cref{tab:corpus} is therefore a claim about the current commit that the current
commit checks, and a reader who clones a later version and gets a different
number is seeing a real difference rather than drift in our bookkeeping.

\subsection{\Cref{sec:e5}, and what will not reproduce}

E5 requires four provider CLIs and an active subscription for each. It is
\emph{not} part of the test suite and must never become one: a test requiring the
authors' logins passes only on the authors' machines, which is the property that
let three provider-layer defects ship undetected in the first place.

\begin{verbatim}
  python paper/experiments/e5_convergence.py
\end{verbatim}

A reader reproducing this should expect: terminal state and attempts-to-\textsc{done}
to match, since those are what the gate determines; wall-clock times to differ
substantially, since they depend on provider load; and token counts to be absent,
since the path used does not parse provider usage envelopes
(\cref{sec:e5}). A fifth profile ships and was held out for budget reasons.

The seeded defects in these four loops are repaired by any competent agent in one
attempt, so a reader should read all-sixteen-first-attempt as consistent with the
gate deciding, and \emph{not} as evidence that the arms would remain
indistinguishable on a harder loop.

\subsection{Verifying a run without trusting us}

A run directory is self-contained and third-party checkable
(\cref{cor:third-party}):

\begin{verbatim}
  uv run bl verify <run-dir>     # replays the chain, fails closed on a break
  uv run bl status <run-dir>     # the projection every surface reads
\end{verbatim}

Replay recomputes each receipt's predecessor hash and rejects a modified entry or
a torn tail (\cref{thm:chain}). This is tamper-evidence against partial edits, not
authentication: an adversary able to rewrite the whole file can rewrite the whole
chain, and a signature rooted outside the directory would be required to prevent
that.

\section{Notes on Auditing a Harness With Agent Auditors}
\label{app:audit}

\Cref{sec:e8} reports one result from a release audit conducted by two commercial
agent CLIs. This appendix records the protocol and the four failures of the audit
apparatus itself, because both were more expensive to learn than the result was to
state, and neither is specific to this project.

\subsection{An auditor is an agent, and an agent writes}

Given a brief that asked for review and no changes, one auditor edited three source
files in the repository under audit and created four editor-configuration paths in
it. The second auditor was reading the same tree at that moment and logged that the
line counts it had already recorded no longer matched the file: it had detected its
subject changing underneath its own audit. Nothing was lost --- the tree was
committed first --- but an unreviewed agent diff had landed on a security boundary,
and for the window in which it existed neither auditor's findings were sound, since
one had mutated what the other was reading.

Three rules follow, and all are free:

\begin{enumerate}[leftmargin=1.8em, itemsep=2pt]
  \item \textbf{One throwaway worktree per auditor}, created detached at the commit
        under audit. Never two auditors in one tree, even when both are instructed
        to read only --- the instruction is not the mechanism.
  \item \textbf{Do not remove a worktree until every auditor using it has exited.}
        One audit logged that its worktree had been removed underneath it and
        continued from what it had already read, which silently degrades a report
        in a way the report cannot describe.
  \item \textbf{An auditor's proposed fix is a finding, not a patch.} In the round
        after the incident above, one auditor's diff was discarded and all four of
        its findings reproduced independently --- whereupon one of the tests written
        from its own report was found to pass with the fix reverted. Accepting the
        diff would have shipped a test that could not fail.
\end{enumerate}

\subsection{Four guards that could not report}

The audit's most transferable output was not the defect list but the list of guards
incapable of reporting a failure. Each was green, or red for so long it had become
background.

\begin{itemize}[leftmargin=1.4em, itemsep=3pt]
  \item \textbf{A release test that bound an exit code and never read it}, and
        parsed a single section of its subject's output. A subject that crashed
        printed nothing; the empty set is a subset of the expected set; the test
        passed. It now fails against a deliberately broken subject.
  \item \textbf{A pipeline that had never reached its own type-checking step.} The
        test step ran first and a failed step ends the job, so pre-existing type
        errors sat behind a gate that had never once executed. Every job reported
        green; one of them was green because it stopped early. Ordering a cheap
        static check \emph{before} an expensive dynamic one is not a preference ---
        it decides whether the static check runs at all.
  \item \textbf{A test asserting that a file changed, using the one filename the
        change detector documents as excluded.} It was red on every run, for a
        correct reason nobody read.
  \item \textbf{An environment-dependent assertion} --- an exit code of zero
        asserted unconditionally, on a platform where the system correctly refuses.
        Green on the authors' machine, impossible on the runner.
\end{itemize}

\subsection{A mechanical probe, and its one gap}

Separately from the auditors, and exercising judgement nowhere: for each of twelve
repairs the release had already made, the repair was reverted and the suite re-run.
\textbf{Eleven guards failed as they should; one did not.} The gap was the second
call site of a two-site repair, which had no test at all, so reverting it stayed
green --- \cref{rem:incomplete-fix} recovered by procedure rather than by
instrument.

One revert produced neither a pass nor a failure: it made the loop spin, and the
test did not finish. That is \cref{rem:hanging-mutant}, and it is the reason a
prober for this needs a declared per-mutant deadline whose expiry is recorded as a
third outcome rather than folded into either count. \Cref{sec:future-prober} states
what such an instrument would have to do; this release does not contain one, and
the practice above is a practice, not a tool.

\section*{Availability and Licensing}
\addcontentsline{toc}{section}{Availability and Licensing}

The \textsc{bounded-loops} engine, the loop catalogue, the reference graphs, the
mutation harness, the evaluation scripts, and the figure-generation scripts are
released under Apache-2.0 at \url{https://github.com/qualixar/bounded-loops},
and are installable as \texttt{bounded-loops} on PyPI and npm. Every figure and
every number in \cref{sec:eval} is emitted by a script in that repository from
the receipt logs it names; none is transcribed by hand. \Cref{app:repro} gives
the commands that regenerate them from a clean checkout, together with the
version and commit this paper reports.

\section*{Author Contributions}
\addcontentsline{toc}{section}{Author Contributions}

V.P.B.\ designed the study, implemented the engine, the gate catalogue, and the
mutation harness, ran the evaluation campaign, performed the analysis, and wrote
the paper. G.S.\ and A.P.B.\ contributed to loop and gate design, tested the
engine and the loop catalogue across releases, reviewed the experimental
protocol, and reviewed the manuscript. All authors approved the final version.

\section*{Author Biographies}
\addcontentsline{toc}{section}{Author Biographies}

\noindent\textbf{Varun Pratap Bhardwaj} is a solution architect and independent
researcher based in India, and the founder of Qualixar, where his work is on
reliability engineering for AI agent systems. He is the author of the
\textsc{SuperLocalMemory} local-first agent memory system and of the
\textsc{AgentAssert} behavioural-contract framework, and lead author of its
compositional-reliability paper \citep{bhardwaj2026abc2}. ORCID:
\href{https://orcid.org/0009-0002-8726-4289}{0009-0002-8726-4289}.

\medskip
\noindent\textbf{Garima Singh} is an independent researcher based in India,
working on evaluation and quality assurance for agentic software systems. Her
contribution to this work was in gate design and in the systematic testing of
the engine and its loop catalogue across releases.

\medskip
\noindent\textbf{Arun Pratap Bhardwaj} is an independent researcher based in
India, working on the design and validation of automated verification workflows.
His contribution to this work was in loop design, protocol review, and testing
of the released engine.

\section*{Use of AI Assistance}
\addcontentsline{toc}{section}{Use of AI Assistance}

AI assistants contributed to drafting and reviewing this manuscript and to the engine's
code. Separately, agent CLIs are \emph{subjects} of the evaluation: the workers in E5 and
the blind authors of the Tier-2 corpus in E2 (\cref{sec:eval}). No model has authority
over a result --- every verdict is produced by deterministic gate code and every
statistic is a released script's output. The authors carried out the experiments and the
analysis and take full responsibility for the contents.

\section*{Conflicts of Interest}
\addcontentsline{toc}{section}{Conflicts of Interest}

\textsc{bounded-loops} is developed by Qualixar, with which V.P.B.\ is
affiliated. The evaluation measures this engine's own gates, so the authors are
not disinterested parties, and we state the mitigations so a reader can weigh
them rather than take our word for the result.

Four mitigations, each enforced rather than promised. The Tier-1 corpus is generated
by operators \emph{structurally} prevented from consulting a gate: a test asserts
against the abstract syntax tree of every module in the generator package that it
cannot import a gate adapter, read a checker source file, branch on a loop's name, or
execute anything (\cref{thm:blind}). The Tier-2 corpus was authored by five
independent agent CLIs shown only the loop's stated purpose with the checker
withheld, with the digest of exactly what each author saw recorded per mutant and
recomputed by a guard, so a widened prompt is a failing test rather than a silent
change. Every label comes from the mutation operation, fixed before the edit, so no
label is assigned by observing a gate (\cref{thm:no-equivalent}). And the analysis
scripts are released, so each reported number is the output of code a reader can run.

We also report, in \cref{sec:eval} and in \cref{sec:discussion}, the results
that are unflattering to us: a mutant we labelled as an evasion where the gate
was right and our label was wrong; four defects we filed as not mechanically
checkable, all four of which proved checkable; and fourteen tests in our own
suite that carried the exact defect this paper is about. No patent application
covering the methods in this paper is pending.

\bibliographystyle{plainnat}
\bibliography{references}

\end{document}